\newtheorem{question}{Question}
\newcommand{\re}[1]{\xrightarrow{#1}}
\newcommand{\rer}[1]{\xleftrightarrow{#1}}
\newcommand{\rp}[1]{\overset{#1}{\rightsquigarrow}}
\newcommand{\tand}{\text{ and }}
\newcommand{\tor}{\text{ or }}
\newcommand{\tin}{\text{ in }}
\newcommand{\tif}{\text{ if }}
\newcommand{\tfor}{\text{ for }}
\newrobustcmd{\TL}{\kl[\TL]{\text{TL}}}
\newrobustcmd{\TW}{\kl[\TW]{\text{TW}}}
\newrobustcmd{\Parity}{\kl[\Parity]{\mathrm{Parity}}}
\newrobustcmd{\MaxParity}{\kl[\MaxParity]{\mathrm{MaxParity}}}
\knowledge{\TL}{notion}
\knowledge{\TW}{notion}
\knowledge{\Parity}{notion}
\knowledge{\MaxParity}{notion}
\renewcommand{\epsilon}{\varepsilon}
\newcommand{\eps}{\varepsilon}
\newcommand{\odd}{\mathrm{odd}}
\newcommand{\N}{\mathbb N}
\newcommand{\Z}{\mathbb Z}
\newcommand{\fin}[1]{W_{\mathrm{fin}}}
\newrobustcmd{\coBuchi}{\kl[\coBuchi]{\text{co-Büchi}}}
\knowledge{\coBuchi}{notion}
\newrobustcmd{\Weps}{\kl[\Weps]{W^\eps}}
\knowledge{\Weps}{notion}
\newcommand{\boldclass}[3]{\texorpdfstring{\ensuremath{\mathbf{#1}^{#2}_{#3}}}{Borel}}
\newcommand{\bsigma}[1]{\boldclass{\Sigma}{0}{#1}}
\newcommand{\bpi}[1]{\boldclass{\Pi}{0}{#1}}
\newcommand{\bdelta}[1]{\boldclass{\Delta}{0}{#1}}
\newrobustcmd{\BCSigma}{\kl[\BCSigma]{\mathrm{BC}(\boldclass{\Sigma}{0}{2})}}
\knowledge{\BCSigma}{notion}
\newrobustcmd{\Skappad}{\kl[\Skappad]{S^\kappa_{d}}}
\newrobustcmd{\Sgraph}{\kl[\Sgraph]S}
\knowledge{\Skappad}[\Sgraph]{notion}
\newcommand{\A}{\mathcal{A}}
\newrobustcmd{\casc}{\mathrel{\kl[\casc]\ltimes}}
\knowledge{\casc}{notion}
\newcommand{\NP}{\textbf{NP}}
\newcommand{\Sigmaeps}{\Sigma\ {\cup} \left\{\eps\right\}}
\newcommand{\ph}{\_}
\newcommand{\B}{\mathcal B}
\newcommand{\T}{\mathcal T}
\newcommand{\Eve}{\text{Eve}}
\newcommand{\Adam}{\text{Adam}}
\renewcommand{\d}{d}
\newcommand{\done}{[d_1]}
\newcommand{\dtwo}{[d_2]^*}
\newcommand{\doneodd}{[d_1]_{\mathrm{odd}}}
\newcommand{\dtwoodd}{[d_2]^*_{\mathrm{odd}}}
\newrobustcmd{\xodd}{\kl[\xodd]x_{\mathrm{odd}}}
\newrobustcmd{\dodd}{\kl[\dodd]d_{\mathrm{odd}}}
\knowledge{\xodd}[\dodd]{notion}
\newcommand{\oo}{\omega}
\renewcommand{\SS}{\Sigma}
\newcommand{\cleq}{\preccurlyeq}
\newcommand{\infOften}{{\mathtt{Inf}}}
\newcommand{\finOften}{\mathtt{Fin}}
\newcommand{\noOcc}{\mathtt{No}}
\newrobustcmd{\indtau}[1]{\kl[\indtau]{\mathsf{ind}}_\tau(#1)}
\newrobustcmd{\indextau}[2]{\kl[\indtau]{\mathsf{ind}}_{#1}(#2)}
\knowledge{\indtau}[\indextau]{notion}
\newcommand{\ymin}{y_{\mathrm{min}}}
\newcommand{\tmin}{t_{\mathrm{min}}}
\newcommand{\imin}{i_{\mathrm{min}}}
\newcommand{\xbreak}{x^{(0)}}
\newcommand{\powne}[1]{\mathcal P_{\neq \varnothing}(#1)}
\renewcommand{\ul}[1]{\bm{#1}}
\newcommand{\us}[2]{\underset{(#2)}{#1}}
\title{The memory of $\omega$-regular and $\BCSigma$ objectives} 
\author{Antonio Casares}{University of Kaiserslautern-Landau, Germany \and \url{https://antonio-casares.github.io/}}{antonio.casares@rptu.de}{https://orcid.org/0000-0002-6539-2020}{Partially supported by Deutsche Forschungsgemeinschaft (grant number 522843867) and European Research Council (grant number 101089343).
Part of this work was done while Casares was at the University of Warsaw, Poland, supported by the Polish National Science Centre (NCN) grant ``Polynomial finite state computation'' (2022/46/A/ST6/00072).}
\author{Pierre Ohlmann}{CNRS, Laboratoire d'Informatique et des Systèmes (LIS), Marseille, France\and \url{https://pageperso.lis-lab.fr/pierre.ohlmann/}}{pierre.ohlmann@lis-lab.fr}{https://orcid.org/0000-0002-4685-5253}{}
\authorrunning{A. Casares and P. Ohlmann} 
\keywords{Infinite duration games, memory, omega-regular} 
\begin{document}

\maketitle

\begin{abstract}
    In the context of 2-player zero-sum infinite duration games played on (potentially infinite) graphs, the memory of  an objective is the smallest integer $k$ such that in any game won by Eve, she has a strategy with $\leq k$ states of memory.
    For $\omega$-regular objectives, checking whether the memory equals a given number $k$ was not known to be decidable.
    In this work, we focus on objectives in $\BCSigma$, i.e. recognised by a potentially infinite deterministic parity automaton. We provide a class of automata that recognise objectives with memory $\leq k$, leading to the following results:
    \begin{itemize}
    \item for $\omega$-regular objectives, the memory over finite and infinite games coincides and can be computed in $\NP$;
    \item given two objectives $W_1$ and $W_2$ in $\BCSigma$ and assuming $W_1$ is prefix-independent, the memory of $W_1 \cup W_2$ is at most the product of the memories of $W_1$ and $W_2$.
    \end{itemize}
Our results also apply to chromatic memory, the variant where strategies can update their "memory state" only depending on which colour is seen.
\end{abstract}

\vspace{-6mm}

\paragraph*{}
This document contains hyperlinks.
\AP Each occurrence of a "notion" is linked to its ""definition"".
On an electronic device, the reader can click on words or symbols (or just hover over them on some PDF readers) to see their definition.

\section{Introduction}

\subsection*{Context: Strategy complexity in infinite duration games}

We study infinite duration games on graphs in which two players, called Eve and Adam, interact by  moving a token along the edges of a (potentially infinite) edge-coloured directed graph. Each vertex belongs to one player, who chooses where to move next during a play. This interaction goes on for an infinite duration, producing an infinite path in the graph. The winner is determined according to a language of infinite sequences of colours~$W$, called the objective of the game; Eve aims to produce a path coloured by a sequence in $W$, while Adam tries to prevent this.
This model is widespread for its use in verification and synthesis~\cite{HandbookModelChecking2018}.

In order to achieve their goal, players use strategies, which are representations of the course of all possible plays together with instructions on how to act in each scenario.
In this work, we are interested in optimal strategies for Eve, that is, strategies that guarantee a victory whenever this is possible. More precisely, we are interested in the complexity of such strategies, or in other words, in the succinctness of the representation of the space of plays.

\subparagraph*{Positionality.}
The simplest strategies are those that assign in advance an outgoing edge to each vertex owned by Eve, and always play along this edge, disregarding all the other features of the play.
All the information required to implement such a strategy appears in the game graph itself.
\AP Objectives for which such strategies are sufficient to play optimally are called ""positional"" (or memoryless).
Understanding positionality has been the object of a long line of research. The landmark results of Gimbert and Zielonka~\cite{GZ05} and Colcombet and Niwi\'nski~\cite{CN06} gave a good understanding of which objectives are bi-positional, 
i.e.~positional for both players.

More recently, Ohlmann proposed to use universal graphs as a tool for studying positionality (taking  Eve's point of view)~\cite{Ohlmann23}.
This led to many advances in the study of positionality~\cite{BCRV24HalfJournal,OS24Sigma2}, and most notably, a characterisation of positional "$\omega$-regular" objectives by Casares and Ohlmann~\cite{CO24Positional}, together with a polynomial time decision procedure (and some other important corollaries, more discussion below).

\subparagraph*{Strategies with memory.} 
In many scenarios, playing optimally requires distinguishing plays that end in the same vertex.
A seminal result of B\"uchi and Landweber~\cite{BL69Strategies} states that in finite games where the objective is an "$\omega$-regular" language, the winner has a winning strategy that can be implemented by a finite automaton processing the edges of the game; this result was later extended to infinite game graphs by Gurevich and Harrington~\cite{Gurevich1982trees}.
Here, the states of the automaton are interpreted as memory states of the strategy, and a natural measure of the complexity of a strategy is the number of such states.
More precisely, the memory of an objective $W$ is the minimal $k$ such that whenever Eve wins a game with objective $W$, she has a winning strategy with $k$ states of memory.
For "$\omega$-regular" objectives, this is always finite~\cite{BL69Strategies,Gurevich1982trees}, while the case of positionality discussed above corresponds to memory $k=1$.

Characterising the memory of objectives has been a recurrent research subject since the 1990s.
Notable results include the characterisation for Muller objectives by Dziembowski, Jurdzi\'nski, and Walukiewicz~\cite{DJW1997memory}, or for closed objectives by Colcombet, Fijalkow and Horn~\cite{CFH14}.
However, these are all rather restricted classes of "$\omega$-regular" objectives.
The problem of deciding the memory of "$\omega$-regular" objectives has been raised in many occasions (see for instance~\cite[Sect.~9.2]{Kop08Thesis}, \cite[Sect.~4]{BRV22Colours}, \cite[Conclusions]{CFH22TenYears}, or~\cite{BFRV23Regular}),
but prior to this work, even computing the memory of open "$\omega$-regular" objectives was not known to be decidable.

\subparagraph*{Chromatic memory.}
In the special case where the automata implementing strategies are only allowed to read the colours on the edges of the game graph, instead of the edges themselves, we speak of chromatic memory.
In his PhD thesis, Kopczy\'nski showed that, for "prefix-independent" "$\omega$-regular" objectives and over finite game graphs, the "chromatic memory" can be computed in exponential time~\cite[Theorem~8.14]{Kop08Thesis}.
Recently, it was shown that computing the "chromatic memory" of some restricted subclasses of "$\omega$-regular" objectives is in fact $\NP$-complete: for Muller objectives~\cite{Casares22Chromatic} and for topologically open or closed objectives~\cite{BFRV23Regular}.
However, the chromatic and the unconstrained memory of objectives may differ, even exponentially~\cite{Casares22Chromatic,CCL22SizeGFG}.

\subparagraph{Finite-to-infinite lift.} In general, the memory of an objective may differ if we consider only finite game graphs or arbitrary ones (a well-known such example is the mean-payoff objective\footnote{Ohlmann and Skrzypczak recently showed that if defined as $\{w \in \Z^\omega \mid \limsup_k \sum_{i=0}^{k-1} w_i < 0\}$, the mean-payoff objective is even positional over infinite games~\cite{OS24Sigma2}. However, the complement of this objective is only positional on finite game graphs.}, or the unboundedness objective $\{w_0 w_1 \dots \in \{-1,1\}^\omega \mid \forall N, \exists k, \sum_{i=0}^k w_i \geq N\}$).
In his PhD thesis, Vandenhove conjectured that this is not the case for "$\omega$-regular" objectives~\cite[Conjecture~9.1.2]{Vandenhove23Thesis}\footnote{Formally, the conjecture is stated for arena-independent memories (a slightly different setting).}: their memory is the same over finite and infinite games. 
This conjecture has recently been proved in the case of positional (memoryless) objectives, i.e., those with memory equal to one~\cite[Theorem~3.4]{CO24Positional}.

\subparagraph{Unions of objectives.} The driving question in Kopczyński's PhD thesis~\cite{Kop08Thesis} is whether "prefix-independent" "positional" "objectives" are closed under union, which has become known as Kopczyński's conjecture. Recently, Kozachinskiy~\cite{Kozachinskiy24EnergyGroups} disproved this conjecture, but only for positionality over finite game graphs, and using non-$\oo$-regular objectives. In fact, the conjecture is now known to hold for "$\omega$-regular" objectives~\cite{CO24Positional} and $\bsigma 2$ objectives~\cite{OS24Sigma2}. 
Casares and Ohlmann proposed a generalisation of this conjecture from positional objectives to objectives requiring memory~\cite[Conjecture~7.1]{CO25LMCS} (see also~\cite[Proposition~8.11]{Kop08Thesis}):

\begin{conjecture}[{""Generalised Kopczyński's conjecture""}]\label{conj:Kopcz-Union-Memory}
	Let $W_1,W_2\subseteq \SS^\oo$ be two "prefix-independent" "objectives" with "memory" $k_1$ and $k_2$, respectively. Then $W_1\cup W_2$ has "memory" at most $k_1 \cdot k_2$.
\end{conjecture}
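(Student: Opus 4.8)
The plan is to attack the statement in its full generality --- for arbitrary "prefix-independent" "objectives", not only those in $\BCSigma$ --- through the theory of "universal" graphs, the tool that controls "memory" uniformly over all (possibly infinite) game graphs. The first step is to set up the right characterisation of "memory" by universal graphs, lifting Ohlmann's positionality criterion: an "objective" $W$ has "memory" at most $k$ if and only if, for every cardinal $\kappa$, there is a $(\kappa,W)$-universal graph that is a $k$-blowup of a well-founded "monotone" graph. Taking this as the backbone (establishing it, if needed, by the standard blowup argument), the statement reduces to a purely combinatorial construction: from universal graphs witnessing the memories of $W_1$ and $W_2$, produce a $(\kappa,W_1\cup W_2)$-universal graph that is a $(k_1 k_2)$-blowup of a well-founded "monotone" graph.

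Because both "objectives" are "prefix-independent", I may add to each a fresh neutral letter (affecting neither the "memory" nor "prefix-independence") and take the witnessing universal graphs $U_1,U_2$ to be $\eps$-complete; write $U_i$ as a $k_i$-blowup of a well-founded "monotone" graph $M_i$. The construction I propose is a nested product: place one copy of $U_1$ at every vertex of $U_2$, use the $U_1$-edges inside each copy and the $U_2$-edges to pass between copies, and order the "monotone" part lexicographically ($M_2$ first, then $M_1$). This keeps it well-founded, and the memory component becomes $[k_2]\times[k_1]$, so the candidate $U$ is a $(k_1 k_2)$-blowup of a well-founded "monotone" graph. The subtle point already in the forward direction is that a path crossing copies infinitely often accumulates colours from both the within-copy ($U_1$) edges and the crossing ($U_2$) edges; to read its colour word as a genuine path of $U_2$, and so conclude membership in $W_2$, I would keep a within-copy edge of colour $a$ only when $U_2$ can absorb $a$ by a self-loop, which is exactly what the neutral letter and $\eps$-completeness of $U_2$ provide. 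A path that eventually stays in one copy has a tail following $U_1$-edges, hence lies in $W_1$ by "prefix-independence".

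The crux --- and the step I expect to be by far the hardest --- is "universality": every $(W_1\cup W_2)$-graph $G$ of size below $\kappa$ must admit a morphism into $U$. The naive strategy of contracting the maximal subgraphs of $G$ all of whose internal infinite paths lie in $W_1$, mapping those into $U_1$, and mapping the quotient into $U_2$, fails, because a single $(W_1\cup W_2)$-graph can contain $W_1$-paths that wander across such regions and are confined to none of them; the quotient need not be a $W_2$-graph, so the "universality" of $U_2$ does not directly apply to it. Overcoming this is precisely where dropping the hypothesis $W_1,W_2\in\BCSigma$ forces a genuinely transfinite argument, and where both "prefix-independence" assumptions must be exploited. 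My proposed remedy is a peeling along the well-founded order of $M_2$: at each ordinal stage extract a maximal "prefix-independent" $W_1$-layer, embed it via the "universality" of $U_1$, collapse it using the neutral letter of $U_2$, and recurse on a graph of strictly smaller $M_2$-rank, so that wandering $W_1$-paths are caught at some level and the well-foundedness of $M_2$ guarantees termination. Assembling the per-level embeddings into a single global morphism, and checking that it respects colours and the lexicographic order --- within-layer edges landing on $U_1$-edges, cross-layer edges on $U_2$-edges --- is the technical heart of the argument and the place where I expect the bulk of the work, and any genuine risk to the $k_1\cdot k_2$ bound, to reside.
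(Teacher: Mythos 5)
Your proposal does not match the paper, and the mismatch is fundamental: the statement you are proving is presented in the paper as a \emph{conjecture}, established there only for $\BCSigma$ objectives (Theorem~\ref{thm:union}, which in fact needs only one of the two objectives to be prefix-increasing), while the general case you attack is explicitly listed as open in the conclusions --- the authors even note that their extraction lemma (Lemma~\ref{lem:everywhere_cofinal}) already fails to extend beyond $\BCSigma$. The load-bearing step of your argument, universality of the nested product, is exactly where it breaks, and in two ways. First, the transfinite peeling is circular: you recurse on \emph{a graph of strictly smaller $M_2$-rank}, but the vertices of an arbitrary $(W_1\cup W_2)$-graph $G$ carry no $M_2$-rank before a morphism into $U$ has been built --- producing such a rank assignment \emph{is} the universality problem. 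Second, the obstruction you yourself raise against the naive contraction recurs verbatim at every stage of the peeling: an infinite path of $G$ lying in $W_1\setminus W_2$ may meet infinitely many layers while being confined to none, so it survives into every residual graph; the residual never becomes a $W_2$-graph, so the universality of $U_2$ never applies, and no stage certifies that path. Collapsing layers via the neutral letter erases precisely the colours that witnessed membership in $W_1$, and the resulting $\eps$-decorated word need not lie in $W_2^{\eps}$; prefix-independence supplies no termination mechanism here.

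Even the soundness direction of your product has a hole: $\eps$-completeness of $U_2$ provides neutral $\eps$-edges, not $a$-labelled self-loops for genuine letters $a\in\Sigma$, so a path alternating within-copy $U_1$-edges and crossing $U_2$-edges produces an interleaved colour word that in general belongs to neither $W_1$ nor $W_2$; and if you restrict within-copy edges to letters that $U_2$ actually absorbs by self-loops, the product is gutted and there is no reason it remains a $(k_1k_2)$-blowup of a well-founded monotone graph, let alone universal. The paper sidesteps all of this by never constructing a universal graph for the union directly: it works at the automaton level, where the shared well-founded priority structure substitutes for your missing decomposition. Concretely, it builds an explicit Zielonka-style interleaving automaton $\T$ recognising the union of two parity languages, a nondeterministic relaxation $\overline{\T}$, and shows via a breakpoint-priority analysis that $(\A_1\times\A_2)\casc\T$ is $(k_1k_2)$-wise $\eps$-completable; Theorem~\ref{thm:main-charac} then converts this into the memory bound, with Lemma~\ref{lem:prefix-increasing} handling the prefix-increasing hypothesis. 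That parity-index structure is exactly what is unavailable for arbitrary prefix-independent objectives, which is why the conjecture in the generality you aim at remains open, and why your sketch, as it stands, does not close it.
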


Using the characterisation of~\cite{DJW1997memory}, it is not hard to verify that the conjecture holds for Muller objectives.

\subparagraph*{$\BCSigma$ languages.}

The results in this work apply not only to "$\omega$-regular" languages, but to the broader class of $\BCSigma$ languages.
These are boolean combinations of languages in $\bsigma 2$ (countable unions of closed languages), or equivalently, recognised by deterministic parity automata with possibly infinitely many states.
This class includes typical non-$\omega$-regular examples such as energy or mean-payoff objectives, but also broader classes such as unambiguous $\oo$-petri nets~\cite{FSJLS22} and deterministic $\oo$-Turing machines (Turing machines with a Muller condition).

\subsection*{Contributions}

Our main contribution is a characterisation of $\BCSigma$ "objectives" with "memory" $\leq k$, stated in Theorem~\ref{thm:main-charac}.
It captures both the memory and the "chromatic memory" of "objectives" over infinite game graphs.
The characterisation is based on the notion of $k$-wise $\epsilon$-completable automata, which are parity automata with states partitioned in $k$ chains, where each chain is endowed with a tight hierarchical structure encoded in the $\eps$-transitions of the automaton.

From this characterisation, we derive the following corollaries:
\begin{enumerate}
	\item \bfDescript{Decidability in $\NP$.} Given a "deterministic" "parity automaton" $\A$, the "memory" (resp. "chromatic memory") of $L(\A)$  can be computed in $\NP$.
	
	\item \bfDescript{Finite-to-infinite lift.} If an "$\omega$-regular" objective has "memory" (resp. "chromatic memory") $\leq k$ over finite games, then the same is true over arbitrary games. 

	\item \bfDescript{"Generalised Kopczyński's conjecture".} We establish (and strengthen) Conjecture~\ref{conj:Kopcz-Union-Memory} in the case of $\BCSigma$ objectives: if $W_1$ and $W_2$ are $\BCSigma$ objectives with "memory" $k_1$ and
	 $k_2$, and one of them is "prefix-increasing", then the "memory" of $W_1 \cup W_2$ is $\leq k_1\cdot k_2$.
\end{enumerate}

\subparagraph{Our toolbox: Universal graphs and $\eps$-completable automata.}
As mentioned above, Ohlmann proposed a characterisation of "positionality" by means of "universal graphs"~\cite{Ohlmann23}.
In 2023, Casares and Ohlmann extended this characterisation to "objectives" with "memory" $\leq k$ by considering partially ordered "universal graphs"~\cite{CO25LMCS}.
Until now, "universal graphs" have been mainly used to show that certain "objectives" have "memory" $\leq k$ (usually for $k=1$); this is done by constructing a "universal graph" for the "objective".
One technical novelty of this work is to exploit both directions of the characterisation, as we also rely on the existence of "universal graphs" to obtain decidability results.

Our characterisation is based on the notion of "$k$-wise $\eps$-completable" automata, which extends the key notion of~\cite{CO24Positional} from positionality to finite "memory".  

\subparagraph*{Comparison with~\cite{CO24Positional}.}
In 2024, Casares and Ohlmann characterised positional "$\omega$-regular" objectives~\cite{CO24Positional}, establishing decidability of positionality in polynomial time, and settling Kopczyński's conjecture for "$\omega$-regular" objectives.
Although the current paper generalises most of these results to the case of memory, as well as potentially infinite automata, the proof techniques are significantly different: while~\cite{CO24Positional} is based on intricate successive transformations of parity automata, the proof we present here is based on an extraction method in the infinite and manipulates ordinal numbers.
Though somewhat less elementary, the new proof is notably shorter, and probably easier to read.

Still, when instantiated to the case of "memory" $1$, our results recover and extend several of those from~\cite{CO24Positional}:
\begin{itemize}
	\item The finite-to-infinite lift of positionality for "$\omega$-regular" objectives is recovered.
	\item Kopczyński's conjecture is extended to $\BCSigma$ objectives.
	\item Although our computability results only talk about $\NP$, decidability of positionality in polynomial time can be recovered using a relatively simple greedy argument presented in~\cite[Theorem~5.3]{CO24Positional}, which relies on the closure under union, which we do re-establish (previous item).
\end{itemize}
Results that we do not recover include the 1-to-2-player lift and the closure of positionality under addition of neutral letters.
However, the fact that we do not obtain the 1-to-2-player lift is not surprising since it does not hold for memory $k > 1$ (see Proposition~\ref{prop:1-2-player-counterexample}). 

\section{Preliminaries}
We let $\Sigma$ be a countable alphabet\footnote{We restrict our study to countable alphabets, as if $\SS$ is uncountable, the topological space $\SS^\omega$ is not Polish and the class $\BCSigma$ is not as well-behaved.} and $\eps\notin \Sigma$ be a fresh symbol that should be interpreted as a neutral letter.
Given a word $w \in (\Sigma \cup \{\eps\})^\omega$ we write $\pi_\Sigma(w)$ for the (finite or infinite) word obtained by removing all $\eps$'s from $w$;  we call $\pi_\Sigma(w)$ the projection of $w$ on $\Sigma$.
\AP An ""objective"" is a set $W \subseteq \Sigma^\omega$.
Given an objective $W \subseteq \Sigma^\omega$, we let $\intro*\Weps$ denote $\pi_\Sigma^{-1}(W) \subseteq (\Sigma \cup \{\eps\})^\omega$.

Throughout the paper, we use Von Neumann's notation for ordinals: $\lambda$ denotes the set of ordinals $<\lambda$.
This also applies to finite numbers, e.g. $k=\{0,\dots,k-1\}$.
As is standard, we also identify cardinals with their initial ordinals.

\subsection{Graphs, games and memory}

We introduce notions pertaining to games and strategy complexity, as they will be central in the statement of our results.
Nevertheless, we note that all our technical proofs will use these definitions through Theorem~\ref{thm:universal_graphs} below, and will not explicitly use games.

\subparagraph*{Graphs.}
\AP A ""$\SS$-graph"" $G$ is given by a set of vertices $V(G)$ and a set of coloured, directed edges $E(G) \subseteq V(G) \times \SS \times V(G)$.
We write $v \re c v'$ for edges $(v,c,v')$.
A path is a sequence of edges with matching endpoints $(v_0 \re {c_0} v_1)(v_1 \re{c_0} v_2) \dots$ which we write as $v_0 \re {c_0} v_1 \re{c_1} \dots$.
Paths can be empty, finite, or infinite, and have a label $c_0c_1\dots$.
Throughout the paper, graphs are implicitly assumed to be without dead-end: every vertex has an outgoing edge.

\AP We say that a vertex $v$ in a $\Sigma$-graph (resp. a $(\Sigma \cup \{\eps\})$-graph) ""satisfies"" an objective $W \subseteq \Sigma^\omega$ if the label of any infinite path from $v$ belongs to $W$ (resp. to $\Weps$).
\AP A ""pointed graph"" is a graph with an identified initial vertex.
A "pointed graph" satisfies an objective $W \subseteq \Sigma^\omega$ if the initial vertex satisfies $W$; a non-pointed graph satisfies an objective if all its vertices do.
\AP An ""infinite tree"" is a sinkless "pointed graph" whose initial vertex is called the root, and with the property that every vertex admits a unique path from the root.

\AP A ""morphism"" from a $\Sigma$-graph $G$ to a $\Sigma$-graph $H$ is a map $\phi: V(G) \to V(H)$ such that for any edge $v \re c v'$ in $G$, it holds that $\phi(v) \re c \phi(v')$ is an edge in $H$.
Morphisms between pointed graphs should moreover send the initial vertex of $G$ to the initial vertex of $H$.
Morphisms need not be injective.
We write 
$G \re{} H$ when there exists a morphism $\phi \colon G \to H$.

\subparagraph*{Games and strategies.}
\AP A ""game"" is given by a "pointed" $(\Sigma \cup \{\eps\})$-graph $G$ together with an objective $W \subseteq \Sigma^\omega$, and a partition of the vertex set $V(G)=V_\Eve \sqcup V_\Adam$ into the vertices controlled by Eve and those controlled by Adam.
\AP A ""strategy""\footnote{We follow the terminology from~\cite{CO25LMCS}. The classical notion of a strategy as a function $f\colon E(G)^*\to V(G)$ can be recovered by considering the graph with vertices $E(G)^*$, and edges $\rho \re e \rho e$.} (for Eve) is a "pointed graph" together with a morphism $\pi$ towards $G$, 
satisfying that for every edge $v \re c v'$ in the game, where $v \in V_\Adam$, and for all $u \in \pi^{-1}(v)$, there is an edge $u \re c u'$ such that $u' \in \pi^{-1}(v')$.
\AP A strategy is ""winning"" if it "satisfies" the "objective" $W$ of the game. 
We say that Eve wins if there exists a winning strategy.

\subparagraph{Memory.} 
\AP A ""finite-memory strategy""\footnote{It is common to define a memory structure as an automaton reading the edges of a game graph. This notion can be recovered by taking $k$ as the states of the automaton.}
is a strategy with vertices $V(G) \times k$ and projection $\pi(v,m)=v$, with the additional requirement that for every edge $(v,m) \re \eps (v',m')$, it holds that $m=m'$.
\AP We say that $k$ is the ""memory@@strat"" of the strategy, and numbers $1,\dots,k$ are called memory states.
Informally, the requirement above says that when reading an $\eps$-transition in the game, we are not allowed to change the memory state; this is called $\eps$-memory in~\cite{CO25LMCS} (to which we refer for more discussion), but since it is the main kind of memory in this paper, we will simply call it the memory.

\AP A finite-memory strategy is called ""chromatic@@graph"" if there is a map $\chi:k \times (\Sigma \cup \{\varepsilon\}) \to k$ such that for every edge $(v,m) \re c (v',m')$ in the strategy, it holds that $m'=\chi(m,c)$.
We say that $\chi$ is the chromatic update.
Note that necessarily, we have $\chi(m,\eps)=m$ for every "memory state" $m$.

\AP The (""chromatic@@mem"") ""memory"" of an "objective" $W$ is the minimal $k$ such that for every game with objective $W$, if Eve has a winning strategy, she has a winning ("chromatic@@mem") strategy with memory $\leq k$.

\subsection{Automata}

\AP A ""parity automaton"" $\A$ (or just automaton in the following) with ""index"" $d$ -- an even number -- and alphabet $\Sigma$, is a "pointed" $((\Sigmaeps) \times d)$-graph.
Vertices are called states, edges are called transitions and written $q \re{c:y} q'$, where $c \in \Sigmaeps$ and $y \in d$.
Elements in $d$ are called priorities.
Generally, we use the convention that even priorities are denoted with letter~$x$, whereas $y$ can be used to denote any priority.
\AP Transitions of the form $q \re{\eps:y} q'$ are called ""$\eps$-transitions""; note that they also carry priorities.

Infinite paths from the initial state $q_0$ are called runs. A run is accepting if the projection of its label on the second coordinate belongs to
\[
    \intro*\Parity_d = \{y_0y_1 \dots \in d^\omega \mid \liminf(y) \text{ is even}\}. \quad \text{ (Note the use of \emph{min}-parity.)}
\]
\AP The ""language"" $L(\A)$ of $\A$ is $\pi_\Sigma(L')$, where $L' \subseteq (\Sigmaeps)^\omega$ is the set of projections on the first coordinate of runs which are accepting.
We require that all these projections are infinite words; stated differently, there is no accepting run from $q_0$ labelled by a word in $(\Sigmaeps)^* \eps^\omega$.
\AP An "automaton" is ""deterministic"" if there are no "$\eps$-transitions" and for any state $q \in V(\A)$ and any letter $a \in \Sigma$ there is at most one transition $q \re{a:\ph} \ph$.
We say that an automaton is ""determinisable by pruning"" if one can make it deterministic by removing some transitions, without modifying its language.
\AP A language belongs to $\intro*\BCSigma$ if it is the language of a deterministic automaton, and it is ""$\omega$-regular"" if the automaton is moreover finite.


\AP We will often identify "pointed graphs" with "automata" of "index" $2$ whose transitions are all labelled with priority $0$.
Note that in this case, all runs are accepting.
This requires making sure that there is no accepting run labelled by a word from $\Sigma^* \eps^\omega$, which, up to assuming that all vertices are accessible from the initial one, amounts to saying that there is no infinite path of $\re \eps$.
We say that such a graph is ""well-founded"".

\subparagraph*{Blowups and $k$-automata.} 
\AP A ""$k$-blowup"" $\B$ of an "automaton" $\A$ is any automaton with $V(\B) \subseteq V(\A) \times \{1,\dots, k\}$, with initial state in $\{q_0\} \times k$ and such that
\begin{enumerate}
    \item for each transition $q \re{c:y} q'$ in $\A$ and each $m \in k$, there is a transition $(q,m) \re{c:y} (q',m')$ in $\B$ for some $m' \in k$, and
    \item all "$\eps$-transitions" are of the form $(q,m) \re {\eps:y} (q',m)$ for some $m \in k$.
\end{enumerate}
Note that the first item implies that $L(\A) \subseteq L(\B)$.
However there may be additional transitions (e.g. "$\eps$-transitions") that are not of the form described in the first item.

\AP A ""$k$-automaton"" is just an automaton whose states are a subset of $Q \times k$, for some set $Q$; for instance, "$k$-blowups" are "$k$-automata".
Equivalently, these are automata with a partition of their states in $k$ identified subsets.
\AP For a state $(q,m)$ in a "$k$-automaton", $m$ is called its ""memory state"".
\AP A "$k$-automaton" is called ""chromatic@@aut"" if there is a map $\chi:  \{1,\dots, k\} \times (\Sigma \cup \{\eps\}) \to  \{1,\dots, k\}$ such that for all transition $(q,m) \re{c:y} (q',m')$ it holds that $m'=\chi(m,c)$.

\subparagraph{Cascade products.} Let $\A$ be an "automaton" with alphabet $\Sigma$ and "index" $d$, and let $S$ be a "$d$-graph".
\AP We define their ""cascade product"" $\A \intro*\casc S$ to be the $(\Sigmaeps)$-graph with vertices $V(\A) \times V(S)$ and edges
\[
    (q,s) \re c (q',s') \quad \iff \quad \exists y, [q \re {c:y} q' \text{ and } s \re y s'].
\]
If $S$ is a "pointed graph" with intial vertex $s_0$, then $\A \casc S$ is "pointed" with initial vertex $(q_0,s_0)$.
It is easy to check that we then have the following lemma.

\begin{lemma}\label{lem:cascade_products}
    Let $\A$ be an automaton with index $d$ and $S$ be a "$d$-graph" satisfying $\Parity_d$.
    Then $\A \casc S$ is "well-founded" and "satisfies" $L(\A)$.
\end{lemma}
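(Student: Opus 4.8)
The plan is to exploit the fact that the "cascade product" is built precisely so that infinite paths of $\A \casc S$, runs of $\A$, and paths of $S$ remain in lockstep through a shared priority sequence. Unfolding the definition, an infinite path $(q_0,s_0) \re{c_0} (q_1,s_1) \re{c_1} \cdots$ in $\A \casc S$ is exactly the data of a priority sequence $y_0 y_1 \cdots \in d^\omega$ together with a run $q_0 \re{c_0:y_0} q_1 \re{c_1:y_1} \cdots$ in $\A$ and a path $s_0 \re{y_0} s_1 \re{y_1} \cdots$ in $S$, all three carrying the same priorities $y_0 y_1 \cdots$. Since $S$ "satisfies" $\Parity_d$ from every vertex, this priority sequence always lies in $\Parity_d$, i.e.\ its $\liminf$ is even; hence the associated run of $\A$ is \emph{accepting}. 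This single correspondence drives both claims.

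For well-foundedness, recall that being "well-founded" means there is no run from the initial vertex $(q_0,s_0)$ labelled by a word in $(\Sigmaeps)^*\eps^\omega$ (all priorities of the underlying index-$2$ graph being $0$, every such run is vacuously accepting). Suppose towards a contradiction that such a path exists. By the correspondence, it yields a run of $\A$ starting at its \emph{initial} state $q_0$ (using that the initial vertex of $\A \casc S$ is $(q_0,s_0)$ with $q_0$ initial in $\A$), whose first-coordinate label lies in $(\Sigmaeps)^*\eps^\omega$, and which is accepting because the finite non-$\eps$ prefix does not affect the $\liminf$, which is governed by the tail sitting inside $S$ and is therefore even. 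This contradicts the defining requirement on $\A$ that no accepting run from $q_0$ is labelled by a word in $(\Sigmaeps)^*\eps^\omega$.

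To show that $\A \casc S$ "satisfies" $L(\A)$, read at its initial vertex $(q_0,s_0)$, I take an arbitrary infinite path from $(q_0,s_0)$ with label $c_0 c_1 \cdots \in (\Sigmaeps)^\omega$ and must verify $c_0 c_1 \cdots \in \pi_\Sigma^{-1}(L(\A))$, i.e.\ $\pi_\Sigma(c_0 c_1 \cdots) \in L(\A)$. By the correspondence, the path induces an accepting run of $\A$ from $q_0$ whose first-coordinate label is exactly $c_0 c_1 \cdots$; by definition of $L(\A)$ its $\Sigma$-projection belongs to $L(\A)$, \emph{provided} this projection is an infinite word. This is precisely what the just-established "well-founded" property guarantees: the path cannot end in $\eps^\omega$, so it traverses infinitely many non-$\eps$ edges and $\pi_\Sigma(c_0 c_1 \cdots)$ is infinite. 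Both parts then follow.

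I expect the only delicate points to be bookkeeping issues rather than conceptual ones. First, the definition of $L(\A)$ restricts to accepting runs whose $\Sigma$-projection is \emph{infinite}, so one must invoke well-foundedness \emph{before} concluding membership in $L(\A)$; the two halves of the lemma are thus genuinely linked. Second, one must take care that "satisfies" is interpreted at the distinguished initial vertex $(q_0,s_0)$: the argument relies on the run of $\A$ starting at its \emph{initial} state $q_0$, matching the initial-state phrasing of the requirement on $\A$, and would not go through from an arbitrary state.
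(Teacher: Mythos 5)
Your proof is correct, and it supplies exactly the routine verification the paper omits: the lemma is stated there with no proof at all (\emph{It is easy to check that we then have the following lemma}), and your unfolding of a path in $\A \casc S$ into a synchronised pair consisting of a run of $\A$ and a path of $S$ sharing one priority sequence, with $\Parity_d$-satisfaction of $S$ forcing the run to be accepting, is precisely the intended argument. You also correctly flag the two points where care is genuinely needed and which the paper glosses over: satisfaction must be read at vertices whose $\A$-component is the initial state $q_0$ (from an arbitrary state $(q,s)$ the induced path of $\A$ is not a run and the conclusion fails), and infinitude of the $\Sigma$-projection rests on the defining requirement that $\A$ has no accepting run labelled by a word in $(\Sigmaeps)^*\eps^\omega$.
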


\subsection{$\eps$-completability and universal graphs}
We now introduce and discuss the key notion used in our main characterisation, which adapts the notion from~\cite{CO24Positional} from positionality to finite memory.

\subparagraph*{$k$-wise $\eps$-completability.}
\AP A "$k$-automaton" $\A$ with "index" $d$ is called ""$k$-wise $\eps$-complete"" if for each even $x \in d$, for each "memory state" $m$ and each ordered pair of states $(q,m),(q',m)$:
\[
    \text{either} \quad (q,m) \re{\eps:x} (q',m) \quad \tor \quad (q',m) \re{\eps:x+1} (q,m). 
\]

Intuitively, having an edge $(q,m) \re{\eps:x} (q',m)$ means that ``$(q,m)$ is much better than $(q',m)$'', as one may freely go from $(q,m)$ to $(q',m)$ and even see a good priority on the way.
Similarly, $(q',m) \re{\eps:x+1} (q,m)$ means that ``$(q',m)$ is not much worse than $(q,m)$''.

It is also useful for the intuition to apply the definition to both ordered pairs $((q,m),(q',m))$ and $((q',m),(q,m))$.
Since automata exclude accepting runs which are ultimately comprised of "$\eps$-transitions", we cannot have $(q,m) \rer{\eps:x} (q',m)$, and therefore "$\eps$-completability" rewrites as: for each $x$, each "memory state" $m$ and each unordered pair $(q,m),(q',m)$ of states,
\[
    \text{either} \quad (q,m) \xrightarrow[\eps :x+1]{\eps:x} (q',m) \quad \tor \quad (q,m) \rer{\eps:x+1} (q',m). 
\]
Hence an alternative, maybe more useful view (this is the point of view adopted in~\cite{CO24Positional}) is that, up to applying some adequate closure properties, 
 a "$k$-wise $\eps$-complete" automaton is endowed with the following structure: for each even priority $x$ and each "memory state" $m$, the states with "memory state" $m$ are totally preordered by the relation $\re {x+1}$, and the relation $\re x$ is the strict version of this preorder.
Moreover, for $x'>x$, the $x'$-preorder is a refinement of the $x$-preorder.

\AP A "$k$-automaton" $\A$ is called ""$k$-wise $\eps$-completable"" if one may add $\eps$-transitions to it so as to turn it into a "$k$-wise $\eps$-complete" automaton $\A^\eps$ satisfying $L(\A^\eps)=L(\A)$. In this case, we call $\A^\eps$ an ""$\eps$-completion"".
We simply say ""$\eps$-complete"" (resp. ``""$\eps$-completable""'') when $k$ is clear from the context.

\begin{example}
    Let $\Sigma = \{a, b,c\}$ and 
    \[W = \noOcc(b) \vee \finOften(aa) \vee \infOften(cc).\]
    We show a "$2$-wise $\eps$-complete" automaton recognising $W$ in Figure~\ref{fig:aut-eps-complete}. By Theorem~\ref{thm:main-charac}, the memory of $W$ is $\leq 2$ (and it is easy to see that this bound is tight).

    \begin{figure}[h]
        \centering
        \includegraphics{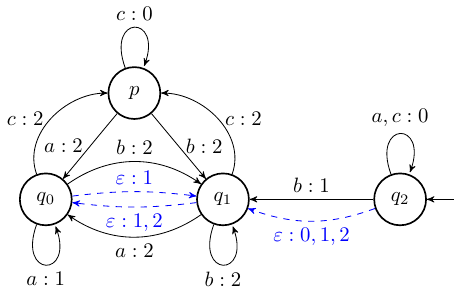}
        \caption{A 2-automaton recognising $W=\noOcc(b) \vee \finOften(aa) \vee \infOften(cc)$, where $p$ is assumed to have a different "memory state" than $q_0,q_1$ and $q_2$. It is "$2$-wise $\eps$-completable" by adding the indicated "$\eps$-transitions". A "completion" also contains the transitions $q_2\re{\eps:0,1,2} q_0$, as well as all transitions $q_i \re{\eps:3} q_{j}$ for $i>j$, and transitions $p \re{\eps:y} p$ for $y$ odd; these are omitted for ease of reading. However, it is not "chromatic@@aut" since reading $c$ may or may not switch the "memory state".}
        \label{fig:aut-eps-complete}
    \end{figure}
\end{example}

In the important special case of "well-founded" graphs viewed as automata, since all non-$\eps$-transitions are labelled with priority $0$, transitions $\re {\eps:0}$ and $\re{\eps:1}$ play the same role, and we simply write them $\re \eps$.
Therefore,~\ref{eq:1} can be seen as totality of the $\re \eps$ relation.

The following theorem, a key result in~\cite{CO25LMCS} where it is called the structuration lemma, will also be crucial to this work.
Recall that we see "well-founded" "pointed graphs" as "automata" with only $0$-transitions, and we apply the terms "$k$-blowup" and "$\eps$-completable" to them accordingly.

\begin{restatable}[{Adapted from Lemma 3.4 in \cite{CO25LMCS}}]{theorem}{structuration}
    \label{thm:structuration}
    Let $G$ be a "well-founded" "pointed graph" "satisfying" an "objective" $W$ which is assumed to have ("chromatic@@mem") "memory" $\leq k$ over games of size $\leq 2^{|G|}$.
    There is a ("chromatic@@mem") "$k$-blowup" $G'$ of $G$ which is "well-founded", "$k$-wise $\eps$-complete", and "satisfies" $W$.
\end{restatable}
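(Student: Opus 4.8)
The plan is to reduce the statement to a purely structural problem, and then solve it by playing an auxiliary game of size $\leq 2^{|G|}$ and invoking the memory hypothesis. Recall that building a "$k$-blowup" $G'$ amounts to choosing, for every transition $q \re{c} q'$ of $G$ and every "memory state" $m$, an update $m \mapsto m'$, together with a set of memory-preserving $\eps$-edges inside each class $V(G)\times\{m\}$. Since $G$ is a "well-founded" graph (index $2$, all priorities $0$), being "$k$-wise $\eps$-complete" says exactly that $\re\eps$ is total on each class, and "well-foundedness" says it has no infinite path. I would first observe that these two conditions can essentially be \emph{imposed} rather than extracted: the $\eps$-edges should encode a preorder on each class (with larger states meaning ``better''), and if we realise this preorder by a genuine well-order with $\eps$-edges pointing downwards, then totality gives "$k$-wise $\eps$-completeness" while the absence of infinite descending chains gives "well-foundedness". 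The only genuinely constrained requirement is therefore $W$-satisfaction: an infinite path of $G'$ alternates real moves (which project onto honest edges of $G$) with downward $\eps$-teleports inside a class, and — well-foundedness guaranteeing infinitely many real moves — its label projects to an infinite $\SS$-word that must lie in $W$, exactly as in the reasoning behind Lemma~\ref{lem:cascade_products}. Thus the whole difficulty is to choose the preorder so that teleporting to a ``worse'' comparable state, interleaved arbitrarily with real moves, never escapes $W$.

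To produce such a preorder I would build a game $\mathcal{G}$ whose positions are the nonempty subsets $S \subseteq V(G)$ — this is precisely where the bound $2^{|G|}$ on the size comes from. Intuitively $S$ is the set of states that Adam currently threatens to be in; \Eve{} keeps a preferred representative of $S$ and, as colours are played, the set evolves by $S \mapsto \mathrm{post}_c(S)$ while \Eve's representative follows an honest edge of $G$, the "objective" being $W$ on the sequence of colours produced. The key soundness point is that \Eve{} \emph{wins} $\mathcal{G}$: because $G$ "satisfies" $W$ from $q_0$, she can always keep her representative on a genuine infinite path of $G$ starting at $q_0$, whose label lies in $W$; the surrounding subset only records the other states reachable under the same colours, and replacing the representative by another member of the current set corresponds exactly to a safe teleport. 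I would then apply the hypothesis — $W$ has ("chromatic") "memory" $\leq k$ over games of size $\leq 2^{|G|}$ — to obtain a winning strategy $\sigma$ with $k$ "memory states" (chromatic if we are in the chromatic setting).

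The extraction step reads the structure off $\sigma$. The $k$ memory states of $\sigma$ become the $k$ classes of the "$k$-blowup"; the memory update performed by $\sigma$ when it follows an edge of $G$ supplies the updates $m\mapsto m'$ required by the definition of a "$k$-blowup". For two states $q,q'$ that $\sigma$ can reach with the same memory state $m$, the strategy induces a ``safe comparison'' (it can keep playing from the better one whenever it could from the worse one); this yields a well-founded preorder on each class $V(G)\times\{m\}$, which I would linearise into a genuine well-order by ranking and breaking ties arbitrarily, then orient the $\eps$-edges downwards. Totality of this order is "$k$-wise $\eps$-completeness"; absence of infinite descending chains is "well-foundedness"; and $W$-satisfaction follows because any infinite path of $G'$ can be simulated by a play consistent with $\sigma$ in $\mathcal{G}$ — real moves become colour moves and downward teleports become $\sigma$-sanctioned changes of representative — hence its label lies in $W$. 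In the chromatic case, the memory update of $\sigma$ depends only on the colour read, so the induced update map $\chi$ makes $G'$ "chromatic@@aut".

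The main obstacle, and the heart of the argument adapted from Lemma~3.4 of~\cite{CO25LMCS}, is getting the game $\mathcal{G}$ exactly right so that all three extracted properties hold \emph{simultaneously}: that \Eve{} really wins it (soundness of the subset/representative dynamics, in particular never forcing a bad word given $G\models W$), that a $k$-memory winning strategy forces \emph{every} pair of states sharing a memory state to be comparable (totality), and that the resulting comparisons are $W$-safe teleports (so that the simulation argument of the previous paragraph goes through). Proving totality is delicate because it must hold for arbitrary pairs, not only those along a single play, which is why the positions must range over all subsets rather than over single states. A secondary subtlety is that $V(G)$ may be infinite (so that $2^{|G|}$ is a cardinal and $G$ may have infinite out-degree): the soundness argument then cannot rely on König's lemma and must instead maintain an explicit witnessing path directly inside \Eve's strategy.
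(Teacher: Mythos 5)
There is a genuine gap: your proof never actually constructs the auxiliary game, and the dynamics you do sketch are unsound. You explicitly defer the heart of the matter (``getting the game $\mathcal{G}$ exactly right'') to the end, but that construction \emph{is} the content of the lemma; worse, the design you outline --- positions are subsets evolving by $S \mapsto \mathrm{post}_c(S)$, with the representative swappable for another member of the current set --- fails one way or the other depending on who swaps. If the adversary may force the swap, Eve can lose when $W$ is not prefix-independent and $G$ is infinite: take $G$ a tree rooted at $q_0$ with, for each $n$, a branch labelled $a^n b^\omega$, and $W = \{a^n b^\omega \mid n \geq 0\}$. Then $G$ satisfies $W$ and $\mathrm{post}_{a^n}(\{q_0\})$ is nonempty for every $n$, but by repeatedly swapping the representative to a vertex that still carries an $a$-edge, Adam forces the play $a^\omega \notin W$, even though every finite prefix labels a genuine path from $q_0$. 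Membership in $\mathrm{post}_{c_1\cdots c_n}(\{q_0\})$ certifies the \emph{past} of a play, not its future, so your claim that such a replacement is a ``safe teleport'' is false; this is exactly the K\"onig-style failure you flag, and maintaining a witness path inside Eve's strategy does not repair it, since each forced swap destroys the witness. If instead only Eve may swap, she wins trivially, but then nothing forces her strategy to compare arbitrary pairs $\{v,v'\}$: with $\mathrm{post}$-dynamics, only pairs lying in a common reachable post-set ever share a position, so totality --- hence $k$-wise $\eps$-completeness --- cannot be extracted. You notice this tension (``positions must range over all subsets''), but ranging over all subsets is incompatible with the $\mathrm{post}_c$ evolution you posit.

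The paper's choice arena resolves both horns by a different calibration, and each difference matters. In the game $H$, the positions are single vertices of $G$ (owned by Adam) together with \emph{all} nonempty subsets (owned by Eve); Adam walks along genuine edges of $G$ and may at any moment play $v \re{\eps} X$ for an \emph{arbitrary} $X \ni v$ of his choice --- there are no $\mathrm{post}$-dynamics --- after which Eve plays $X \re{\eps} v'$ for some $v' \in X$. Since the challenge set always contains the current vertex, Eve wins simply by answering with $v$ itself, so no teleport ever needs to be $W$-safe. Totality is then obtained not through reachability but through the formalism: a finite-memory strategy here is a graph on all of $V(H) \times k$ without dead-ends, so for every pair $\{v,v'\}$ and every memory state $m$ the vertex $(\{v,v'\},m)$ has some outgoing $\eps$-edge, which directly orients the pair, and the $\eps$-memory convention keeps these edges inside each class. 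Safety of the extracted $\eps$-edges is a consequence rather than a hypothesis: every path of $G'$ from its initial vertex lifts to a play consistent with the winning strategy by reinserting $(v,m) \re{\eps} (X,m) \re{\eps} (v',m)$. Finally, your plan to ``linearise by ranking'' glosses over the last step: the extracted relation has $\eps$-self-loops and cycles, and the paper removes them by rewiring each $\eps$-strongly-connected component along a well-order that must \emph{extend the $\eps$-edges already present in $G$} (possible precisely because $G$ is well-founded), since an arbitrary tie-breaking could destroy the $k$-blowup property.
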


For completeness, we give a proof of Theorem~\ref{thm:structuration} in Appendix~\ref{app:structuration}.

\subparagraph*{Universal graphs.}
\AP Given an objective $W$ and a cardinal $\kappa$, we say that a graph $U$ is ""$(\kappa,W)$-universal"" if for any "infinite tree" $T$ of cardinality $|V(T)|<\kappa$ "satisfying" $W$, there is a "morphism" $\phi:T \to U$ such that $\phi(t_0)$ satifies $W$ in $U$, where $t_0$ is the root of $T$.
We may now rephrase the main theorem of~\cite{CO25LMCS} in terms of "$\eps$-complete" "universal graphs".

\begin{theorem}[Theorem 3.1 in \cite{CO25LMCS}]\label{thm:universal_graphs}
Let $W$ be an objective.
Then $W$ has ("chromatic@@mem") "memory" $\leq k$ if and only if for every cardinal $\kappa$ there exists a "$(\kappa,W)$-universal" graph which is ("chromatic@@graph" and) "$k$-wise $\eps$-complete".
\end{theorem}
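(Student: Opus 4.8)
The plan is to prove the two directions separately, relying on the structuration lemma (Theorem~\ref{thm:structuration}) for the ``only if'' direction and on a monotonicity/cascade-product extraction for the ``if'' direction. In both cases the argument is essentially a repackaging of the main theorem of~\cite{CO25LMCS} in the vocabulary of $\eps$-complete automata, the only genuinely new point being to check that $k$-wise $\eps$-completeness is the correct substitute for the monotone structure used there.

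For the forward direction, assume $W$ has (chromatic) memory $\leq k$ and fix a cardinal $\kappa$. First I would exhibit some $(\kappa,W)$-universal graph $U_0$ that is well-founded and satisfies $W$; such a graph always exists (for instance a suitably saturated graph, or the disjoint union of all trees of size $<\kappa$ satisfying $W$, which are well-founded since they carry no $\eps$-transitions). Applying Theorem~\ref{thm:structuration} to $U_0$ — legitimate since the memory hypothesis holds over all games, in particular over those of size $\leq 2^{|U_0|}$ — yields a (chromatic) $k$-blowup $U_0'$ of $U_0$ that is well-founded, $k$-wise $\eps$-complete, and satisfies $W$. It then remains to check that passing to a $k$-blowup preserves universality: given a tree $T$ of size $<\kappa$ satisfying $W$ together with a morphism $T \to U_0$, the defining property of blowups lets one lift this morphism to a morphism $T \to U_0'$ by choosing memory states inductively along the unique path from the root to each node of $T$, and since $U_0'$ satisfies $W$ at every vertex, the image of the root satisfies $W$. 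This produces the required $\eps$-complete universal graph.

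For the backward direction, suppose that for every $\kappa$ there is a (chromatic) $k$-wise $\eps$-complete $(\kappa,W)$-universal graph, and let $(G,W)$ be a game that Eve wins, say via a strategy $\sigma$. Unfolding $\sigma$ into its tree of plays $T$ gives a tree satisfying $W$; choosing $\kappa>|T|$ and the corresponding universal graph $U$, universality yields a morphism $\phi\colon T \to U$ with $\phi(\mathrm{root})$ satisfying $W$. The heart of the argument is to read off from $U$ a winning strategy with $\leq k$ memory. Here I would exploit the $k$-wise $\eps$-complete structure: within each memory class of $U$ and for each even priority $x$, the $\eps$-transitions endow the states with a total preorder (given by $\re{x+1}$, with strict part $\re{x}$), these preorders refining as $x$ grows. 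This is exactly the monotone structure needed to orient a greedy, positional-per-memory-class strategy: the memory state of $U$ becomes the $k$-valued memory of the extracted strategy on $G$, and within a fixed memory state one always moves towards the preorder-smallest reachable successor. That the extracted strategy is winning is certified by a cascade-product argument in the spirit of Lemma~\ref{lem:cascade_products}: following the preorders produces, along every consistent play, a run whose priorities satisfy $\Parity$, hence a play whose label lies in $W$ because $U$ satisfies $W$.

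I expect the backward extraction to be the main obstacle. Producing the $k$-memory strategy and, above all, proving that it is winning requires turning the abstract $\eps$-complete universal graph into a concrete parity/monotone structure and verifying that greedy play along the refined preorders never produces a play outside $W$; this is precisely the monotonicity argument underlying Theorem~3.1 of~\cite{CO25LMCS}, and the care lies in checking that $k$-wise $\eps$-completeness carries exactly the information their monotone graphs do (in particular, that the chromatic variant of the hypothesis yields a chromatic strategy). The forward direction, by contrast, reduces almost mechanically to the structuration lemma once preservation of universality under blowups is observed.
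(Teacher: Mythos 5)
First, a point of comparison: the paper does not prove Theorem~\ref{thm:universal_graphs} at all --- it is imported verbatim from~\cite{CO25LMCS}, and only the structuration lemma (Theorem~\ref{thm:structuration}) is reproved, in Appendix~\ref{app:structuration}. So your proposal must be judged as a standalone reconstruction, and judged this way its forward direction has a genuine gap. You start from a $(\kappa,W)$-universal graph $U_0$ that \emph{satisfies $W$}, and you later use that the structured blowup $U_0'$ \emph{satisfies $W$ at every vertex}. No such $U_0$ exists in general: morphisms preserve paths, so if some tree $T$ satisfying $W$ at its root has an inner vertex $t$ from which some infinite path is labelled outside $W$ (which happens for essentially any $W$ that is not closed under taking these suffixes, e.g.\ $W=a\Sigma^\omega$), then $\phi(t)$ fails $W$ in any graph $U$ that $T$ maps into. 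In particular your concrete candidate, the disjoint union of all small trees satisfying $W$, satisfies $W$ only at the roots of its components; moreover it is not pointed, whereas Theorem~\ref{thm:structuration} takes a \emph{pointed} graph satisfying $W$ and guarantees $W$ only at the initial vertex of the output blowup. So the invocation of the structuration lemma is not legitimate as written (the size condition you check is not the issue), and the property you need afterwards --- $W$ everywhere --- is neither delivered by the lemma nor attainable. Note also that you cannot instead structure each tree separately and take a disjoint union, since $k$-wise $\eps$-completeness demands totality of the $\eps$-relation across each whole memory class.

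The repair is small but is exactly the idea the write-up skips: make the graph pointed before structuring. Add a fresh Adam-controlled initial vertex $\iota$ with edges $\iota \re{\eps} r$ to every root $r$ of the disjoint union; this pointed graph is well-founded and satisfies $W$ (every path from $\iota$ is one $\eps$-edge followed by a path from some root). Theorem~\ref{thm:structuration} now applies; and since $\eps$-edges in a blowup preserve the memory state, $(\iota,m_0)\re{\eps}(r,m_0)$ is an edge of the blowup $G'$ for every root $r$, so stripping this $\eps$-prefix from paths out of $(r,m_0)$ shows each $(r,m_0)$ satisfies $W$ --- which is all that universality requires, as only the image of the root must satisfy $W$. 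Your inductive lifting of morphisms along the blowup then goes through with root image $(r,m_0)$. As for the backward direction, your outline follows the right route (unfold a winning strategy into a tree, map it into the $\eps$-complete universal graph, extract a $k$-memory strategy from the per-class preorders and well-foundedness), but, as you concede, it delegates the crux --- showing the greedy invariant survives Adam's moves, which needs minima over each memory class together with transitivity of the $\eps$-structure, and is not really an instance of Lemma~\ref{lem:cascade_products} --- back to~\cite{CO25LMCS}; so that half remains an outline rather than a proof, which incidentally matches the paper's own decision to cite rather than reprove.
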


\AP We now give an explicit definition of a "$(\kappa,\Parity_{d})$-universal" graph $\intro*\Skappad$ which is "$\eps$-complete".
These ideas date back to the works of Streett and Emerson, who coined the name signatures~\cite{SE89}, and were made more explicit by Walukiewicz~\cite{Walukiewicz96}. 
Vertices are tuples of ordinals $<\kappa$, indexed by odd priorities in $d$ and ordered lexicographically (with the smaller indices being the most significant).
For a tuple $s$ and index $y$, we let $s_{< y}$ be the tuple obtained from $s$ by restricting to coordinates with index $< y$. 
Edges are given by
\[
    s \re y s' \quad \iff \quad \begin{cases}
        s_{< y} \geq s'_{< y} \text{ and $y$ is even; or}\\
        s_{\leq y} > s'_{\leq y} \text{ and $y$ is odd.}
    \end{cases}
\]
In particular, note that $s \re{d-1} s'$ if and only if $s > s'$.

\begin{lemma}[{\cite[Lemma~2.7]{CO24Arxiv}}]
The graph $\Skappad$ is $(\kappa,\Parity_d)$-universal.
\end{lemma}

\subparagraph*{Signature trees.}
We will work extensively with the graph $\Skappad$ defined above, and manipulations of its vertices which are tuples of ordinals indexed by odd priorities up to $d$.
\AP For a number $x$, we use $\intro*\xodd=\{1,3,\dots,x-1\}$ to denote the set of odd priorities $<x$.

For readability, we use subscripts to indicate which (odd) coordinates are concerned, for instance $s_{<x}$ will be our notation for tuples of ordinals $< \kappa$ indexed with odd priorities $<x$, and similarly for $s_{>x}$.
Therefore we often use $s_{<x}$ and $s_{>x}$ as two different variables (and not necessarily the projections of some given variable $s$).
Concatenation of tuples is written like for words, therefore $s_{<x} s_{>x}$ denotes a tuple indexed by all odd priorities (i.e. a vertex of $\Skappad$).

\AP By a slight abuse (since these do not correspond to "trees@@infinite" are defined above), we use the terminology ""signature trees"" to refer to subsets of $\kappa^{\dodd}=V(\Skappad)$.
Elements of the subsets should be thought of as leaves of the tree, while their (non-proper) prefixes correspond to nodes.
\AP More precisely, a ""node at level $x$"", where $x$ is an even priority from $d$, in a signature tree $T$ is a tuple $s_{< x} \in \kappa^{\xodd}$ such that there exists $s_{> x}$ satisfying $s_{< x} s_{> x} \in T$.
In particular, elements of $T$ are "nodes at level $d$" (i.e. leaves).
\AP The ""subtree rooted at"" a node $s_{<x}$ of level $x$ is defined to be $\{s_{>x} \mid s_{<x}s_{>x} \in T\}$.

The children of a node $s_{<x}$ are the nodes of the form $s_{<x}s_{x+1}$.
\AP The branching of a node is its number of children, and we say that a "signature tree" has ""branching"" $b$ if every node has branching exactly $b$.
Notably, $\kappa^{\dodd}$ has "branching" $\kappa$.

\section{Characterisation of objectives in $\BCSigma$ with memory $\leq k$}
We state our main characterisation theorem and its decidability consequences for "$\omega$-regular" languages.
We assume that the alphabet $\Sigma$ is countable, therefore "automata" can also be taken with countable sets of states.

\begin{theorem}[Main characterisation]\label{thm:main-charac}
Let $W$ be a $\BCSigma$ "objective" and let $k \in \N$.
The following are equivalent:
\begin{enumerate}[(i.)]
    \item\label{item:memory-small-games} $W$ has "memory" $\leq k$ (resp. chromatic "memory" $\leq k$) on "games" of size $\leq 2^{2^{\aleph_0}}$.
    \item\label{item:existence-automata} For any "automaton" $\A$ recognising $W$, there is a ("chromatic@@mem") "$k$-blowup" $\B$ of $\A$ which is "$k$-wise $\eps$-complete" and recognises $W$.
    \item\label{item:existence-det-automata} There is a "deterministic" ("chromatic@@mem") "$k$-automaton" $\A$ which is "$k$-wise $\eps$-completable" and recognises $W$. If $W$ is recognised by a "deterministic" "automaton" of size $n$, then $\A$ can be taken of size $kn$.
    \item\label{item:existence-universal-graph} For every cardinal $\kappa$, there is a ("chromatic@@mem")  $(\kappa,W)$-"universal graph" which is "well-founded" and "$k$-wise $\eps$-complete".
    \item\label{item:memory-arbitrary-games} $W$ has ("chromatic@@mem") "memory" $\leq k$ on arbitrary "games".
\end{enumerate}

Moreover in the case where $W$ is "$\omega$-regular" and recognised by an "automaton" with $n$ states and "index" $d$, this is also equivalent to:
\begin{enumerate}[(i'.)]
    \item\label{item:memory-finite-games} $W$ has "memory" $\leq k$ (resp. chromatic "memory" $\leq k$) on "games" of size $\leq f(k, |\Sigma|, n, d)$, where $f$ is some triply exponential function explicited in Proposition~\ref{prop:existenceEpsComplete-finitary} below.
\end{enumerate}
\end{theorem}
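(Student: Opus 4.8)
The plan is to prove the equivalence of (i.)–(v.) (and the $\omega$-regular addendum (i'.)) as a cycle of implications, routing everything through the universal-graph characterisation of Theorem~\ref{thm:universal_graphs} and the structuration lemma Theorem~\ref{thm:structuration}. The natural cycle is
\[
  \text{(\ref{item:memory-arbitrary-games})} \Rightarrow \text{(\ref{item:memory-small-games})} \Rightarrow \text{(\ref{item:existence-automata})} \Rightarrow \text{(\ref{item:existence-det-automata})} \Rightarrow \text{(\ref{item:existence-universal-graph})} \Rightarrow \text{(\ref{item:memory-arbitrary-games})},
\]
with the $\omega$-regular statement (\ref{item:memory-finite-games}) slotted in between (\ref{item:memory-small-games}) and (\ref{item:existence-automata}) once a finitary size bound is available. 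Two of these arrows are nearly immediate: (\ref{item:memory-arbitrary-games}) $\Rightarrow$ (\ref{item:memory-small-games}) is trivial (restricting to smaller games), and (\ref{item:existence-universal-graph}) $\Rightarrow$ (\ref{item:memory-arbitrary-games}) is exactly the easy direction of Theorem~\ref{thm:universal_graphs} (a $k$-wise $\eps$-complete $(\kappa,W)$-universal graph witnesses memory $\leq k$).

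For \textbf{(\ref{item:memory-small-games}) $\Rightarrow$ (\ref{item:existence-automata})}, I would start from an arbitrary automaton $\A$ recognising $W$ and apply the cascade-product trick: given a large cardinal $\kappa$ and the signature graph $\Skappad$, which satisfies $\Parity_d$, Lemma~\ref{lem:cascade_products} shows $\A \casc \Skappad$ is a well-founded pointed graph satisfying $L(\A)=W$. Since this graph has memory $\leq k$ on games of the relevant size, Theorem~\ref{thm:structuration} yields a well-founded, $k$-wise $\eps$-complete $k$-blowup $G'$ satisfying $W$. The work here is to push the blowup structure back from the cascade product onto $\A$ itself, i.e.\ to read off a $k$-blowup $\B$ of $\A$ whose $\eps$-transitions inherit the $\eps$-completeness of $G'$ while preserving the language; this reindexing — distributing the memory coordinate of $G'$ across the states of $\A$ and verifying that the priority labels make $\B$ both a blowup and $k$-wise $\eps$-complete — is where the bookkeeping lives.

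The arrow \textbf{(\ref{item:existence-automata}) $\Rightarrow$ (\ref{item:existence-det-automata})} specialises to the case where $\A$ is a deterministic automaton of size $n$: applying (\ref{item:existence-automata}) to that $\A$ gives a $k$-blowup of size $\leq kn$, and I must argue the blowup can be taken deterministic and $k$-wise $\eps$-\emph{completable} (dropping the added $\eps$-transitions recovers the claimed deterministic $k$-automaton on $\Sigma$). For \textbf{(\ref{item:existence-det-automata}) $\Rightarrow$ (\ref{item:existence-universal-graph})}, the plan is to take the $\eps$-completion $\A^\eps$ guaranteed by $k$-wise $\eps$-completability and form $\A^\eps \casc \Skappad$; the $\eps$-complete hierarchical structure within each memory chain, composed with the universality of $\Skappad$ (and the fact that its edge relation is the lexicographic signature order), should produce a $k$-wise $\eps$-complete, well-founded, $(\kappa,W)$-universal graph. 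The chromatic variant is tracked in parallel throughout: whenever the source object is chromatic, the constructed blowup/completion/universal graph must respect a single chromatic update map $\chi$, which I would verify is preserved by each construction.

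\textbf{I expect the main obstacle to be (\ref{item:existence-det-automata}) $\Rightarrow$ (\ref{item:existence-universal-graph})}, the genuinely technical direction, since it must show that a purely local, per-memory-chain $\eps$-completion of a deterministic automaton lifts to a \emph{global} universal graph capturing all trees of size $<\kappa$ satisfying $W$. This is the step where one exploits both directions of the universal-graph machinery at once, and where the manipulation of the signature tuples in $\Skappad$ — matching the $\liminf$-parity acceptance of $\A^\eps$ against the lexicographic order on ordinal signatures — is delicate. Finally, for the $\omega$-regular addendum, I would close the loop through (\ref{item:memory-finite-games}) by invoking the finitary size bound of Proposition~\ref{prop:existenceEpsComplete-finitary}: when $\A$ is finite, the witnessing $\eps$-complete objects are finite, so the relevant games can be bounded by the triply-exponential function $f(k,|\Sigma|,n,d)$, giving memory $\leq k$ on games of that bounded size and hence (by the cycle) on finite and arbitrary games alike.
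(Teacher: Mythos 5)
Your cycle is exactly the paper's: (i.) $\Rightarrow$ (ii.) $\Rightarrow$ (iii.) $\Rightarrow$ (iv.) $\Rightarrow$ (v.) $\Rightarrow$ (i.), using the same tools (cascade with $\Skappad$, Theorem~\ref{thm:structuration}, Theorem~\ref{thm:universal_graphs}), and the easy arrows are handled correctly. The genuine gap is in (i.) $\Rightarrow$ (ii.), which you dismiss as \emph{reindexing} and \emph{bookkeeping}; it is in fact the paper's main technical contribution (Proposition~\ref{prop:existenceEpsComplete}), and your sketch omits the two ideas without which the step fails. After structuration one has a graph $G$ with vertices $V(\A) \times k \times V(S)$, and whether $(q,m,s) \re{c} (q',m',s')$ holds depends essentially on the signature coordinates $s,s'$, which range over a set of size $2^{\aleph_0}$: there is no canonical way to just \emph{distribute the memory coordinate across the states of $\A$}. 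The paper resolves this with (a) the strong/weak $c$-domination relations, defined by a quantifier alternation over signature suffixes with a common prefix $s_{<x}$ ($\exists s_{>x} \forall s'_{>x}$ yielding even priority $x$ in $\B$, and $\forall s'_{>x} \exists s_{>x}$ yielding $x{+}1$) — this alternation is what makes the new priorities sound and lets $\eps$-completeness of $\B$ follow from that of $G$, since the failure of weak domination is dual to strong domination in the other direction; and (b) the extraction Lemma~\ref{lem:everywhere_cofinal}, which produces an everywhere-cofinal subtree of $\kappa^{d_{\mathrm{odd}}}$ on which the domination type is constant per level, so that the prefix $s_{<x}$ can be chosen uniformly for all transitions — this Ramsey-type step is precisely why (i.) quantifies over games of size $\leq 2^{2^{\aleph_0}}$, a point your plan never explains. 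Even granting the definition of $\B$, correctness ($L(\B) \subseteq W$) requires the backward-induction decoration of runs within the finite blocks between occurrences of the liminf priority (Lemmas~\ref{lem:technical1},~\ref{lem:technical2} and~\ref{lem:language-containement}), also absent from your sketch.

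You also misplace the difficulty: you call (iii.) $\Rightarrow$ (iv.) \emph{the genuinely technical direction}, but in the paper this is a short verification adapted from earlier work — universality of $\B^\eps \casc S$ follows at once from determinism of $\B$ and $(\kappa,\Parity_d)$-universality of $S$, and $k$-wise $\eps$-completeness is a brief case analysis comparing the least even priority connecting $b,b'$ in $\B^\eps$ with the first coordinate where $s,s'$ differ. Finally, closing the $\omega$-regular loop by citing Proposition~\ref{prop:existenceEpsComplete-finitary} is structurally fine (it gives (i'.) $\Rightarrow$ (ii.), and (v.) $\Rightarrow$ (i'.) is trivial), but be aware that this proposition is the second substantial piece of work: cofinality-based extraction is unavailable at finite branching, and the paper replaces it with the combinatorics of regular words and synchronised subwords (Lemma~\ref{lem:extraction-regular-words}) together with the $2|Q|^2$-simple-run reduction (Lemma~\ref{lem:simple-runs}) to bound block lengths. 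So your skeleton matches the paper, but both load-bearing implications are treated as routine, and as described they would not go through.
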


This immediately gives the finite-to-infinite lift for both memory and chromatic memory.

\begin{corollary}[Finite-to-infinite lift]\label{cor:finite-to-infinite-lift}
    An "$\omega$-regular" "objective" has "memory" (resp. chromatic memory) $\leq k$ over finite "games" if and only if it has "memory" (resp. chromatic memory) $\leq k$ over all games.
\end{corollary}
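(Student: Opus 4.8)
The plan is to derive Corollary~\ref{cor:finite-to-infinite-lift} as a direct consequence of Theorem~\ref{thm:main-charac}, the main characterisation. The key observation is that the theorem provides a chain of equivalences in which the \emph{memory condition over small games} (item~\ref{item:memory-small-games}), the purely \emph{syntactic/combinatorial condition} (the existence of a $k$-wise $\eps$-completable deterministic $k$-automaton, item~\ref{item:existence-det-automata}), and the \emph{memory condition over arbitrary games} (item~\ref{item:memory-arbitrary-games}) are all equivalent. In particular, the syntactic condition is an \emph{intrinsic} property of the objective $W$: it makes no reference to the size of the games on which strategies are played. So the strategy is to show that ``memory $\leq k$ over finite games'' and ``memory $\leq k$ over all games'' both collapse onto this single intrinsic condition.

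Concretely, I would argue as follows. Fix an "$\omega$-regular" "objective" $W$, recognised by a finite "automaton" with $n$ states and "index" $d$. First I would invoke the last part of Theorem~\ref{thm:main-charac}, namely the equivalence between item~\ref{item:memory-finite-games} (memory $\leq k$ over games of size bounded by the explicit triply-exponential function $f(k,|\Sigma|,n,d)$) and the intrinsic conditions~\ref{item:existence-det-automata}, etc. The crucial point is that having "memory" $\leq k$ over \emph{all finite games} trivially implies having "memory" $\leq k$ over games of size $\leq f(k,|\Sigma|,n,d)$, since the latter is a subclass of the former. Hence ``memory $\leq k$ over finite games'' implies item~\ref{item:memory-finite-games}, which by the theorem implies the intrinsic condition, which in turn implies item~\ref{item:memory-arbitrary-games}, i.e.\ memory $\leq k$ over arbitrary games. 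The converse direction is immediate: memory $\leq k$ over all games in particular gives memory $\leq k$ over finite games, as finite games are a special case. The same reasoning applies verbatim to "chromatic@@mem" memory, since every item of the theorem carries its own ``(chromatic)'' annotation and the equivalences hold in parallel for both variants.

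The only subtlety to handle carefully is the quantifier over game sizes in the finite-to-infinite direction. The theorem does not say ``memory $\leq k$ over \emph{all} finite games'' directly; it isolates a \emph{specific finite threshold} $f(k,|\Sigma|,n,d)$ (this is the content of Proposition~\ref{prop:existenceEpsComplete-finitary}) at which the bounded-memory property already forces the intrinsic $\eps$-completability structure. I would make explicit that ``$W$ has memory $\leq k$ over finite games'' is understood as ``over every finite game'', so that in particular it holds over the games of size $\leq f(k,|\Sigma|,n,d)$, thereby triggering item~\ref{item:memory-finite-games}. This is where the genuine content lies, and it is entirely inherited from the theorem; the corollary itself contributes no new argument beyond packaging the relevant implications.

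The main obstacle is therefore not in the corollary's proof, which is essentially a citation of the right two links in the equivalence chain, but rather that \emph{all the work has been pushed into Theorem~\ref{thm:main-charac}} --- in particular into the proof that memory over a suitably-bounded finite class of games suffices to produce the intrinsic structure (the finitary analogue in Proposition~\ref{prop:existenceEpsComplete-finitary}, which presumably relies on the structuration lemma Theorem~\ref{thm:structuration} together with a compactness or finite-universal-graph argument). For the corollary as stated I would keep the proof to a few lines: assume memory $\leq k$ over finite games, specialise to the threshold $f$ to obtain item~\ref{item:memory-finite-games}, apply Theorem~\ref{thm:main-charac} to pass to item~\ref{item:memory-arbitrary-games}, and note the reverse implication is trivial; then remark that the identical chain of equivalences holds for "chromatic@@mem" memory.
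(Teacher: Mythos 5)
Your proposal is correct and matches the paper's own (implicit) argument exactly: the paper derives the corollary directly from the equivalence between item~(i') and item~(\ref{item:memory-arbitrary-games}) of Theorem~\ref{thm:main-charac}, with the converse direction being trivial, which is precisely your chain. Your handling of the quantifier subtlety — specialising ``memory $\leq k$ over every finite game'' to games of size $\leq f(k,|\Sigma|,n,d)$ to trigger item~(i') — is the exact step the paper relies on when it says the corollary follows ``immediately''.
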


For "$\omega$-regular" $W$ given by a "deterministic" "automaton" $\B$ of size $n$, this also allows to compute the ("chromatic@@mem") "memory" in $\NP$. First, we note that the "memory" of $L(\B)$ is at most $n$, as the "automaton" itself can serve as a ("chromatic@@mem") memory. Therefore, we can guess $k\leq n$, a "deterministic" "automaton" $\A$ of size $\leq kn$ and a ("chromatic@@mem") $k$-wise $\eps$-"completion" $\A^\eps$, and check if  $L(\B) \subseteq L(\A)$ and if $L(\A^\eps) \subseteq L(\B)$, which can be done in polynomial time, since $\A$ and $\B$ are "deterministic"~\cite{ClarkeDK93Unified}.
Prior to our work, computing the "memory" was not known to be decidable, and computing the "chromatic memory" was only known\footnote{In fact, Kopczyński proved that the "chromatic memory" \emph{over finite games} is computable. By Corollary~\ref{cor:finite-to-infinite-lift}, this coincides with the "chromatic memory" over arbitrary games.} do be doable in exponential time~\cite{Kop08Thesis}.

\begin{corollary}[Decidability in $\NP$]\label{thm:NP-computation-memory}
    Given an integer $k$ and a "deterministic" "automaton" $\A$, the problem of deciding if $L(\A)$ has ("chromatic@@mem") "memory" $\leq k$ belongs to $\NP$.
\end{corollary}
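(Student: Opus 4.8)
The plan is to reduce the problem to guessing a polynomially-sized certificate and verifying it in polynomial time, leaning entirely on the equivalence between items~\ref{item:memory-arbitrary-games} and~\ref{item:existence-det-automata} of Theorem~\ref{thm:main-charac}. The first observation is that the memory of $L(\A)$ is always bounded by $n = |V(\A)|$: the deterministic automaton $\A$ itself (together with the product memory structure it induces) furnishes a winning strategy with at most $n$ states, so whenever Eve wins she wins with memory $\leq n$. Hence if $k \geq n$ the answer is trivially ``yes'', and otherwise we may assume $k < n$, so that the object $kn$ appearing below is polynomial in the input size.

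Next I would invoke Theorem~\ref{thm:main-charac}, specifically the equivalence of the statement ``$L(\A)$ has (chromatic) memory $\leq k$'' (item~\ref{item:memory-arbitrary-games}) with item~\ref{item:existence-det-automata}: there is a deterministic (chromatic) $k$-automaton $\A'$ which is $k$-wise $\eps$-completable and recognises $L(\A)$, and moreover, since $L(\A)$ is recognised by the deterministic automaton $\A$ of size $n$, this $\A'$ can be taken of size at most $kn$. This is the crucial point: the theorem not only characterises memory $\leq k$ by the existence of such an automaton, but provides a \emph{polynomial} bound on its size. The certificate we guess is therefore $\A'$ together with an explicit $\eps$-completion $\A'^\eps$ (obtained by adding $\eps$-transitions as in the definition of $k$-wise $\eps$-completability), both of size polynomial in $n$ and $k$; in the chromatic case we additionally guess the chromatic update map $\chi\colon \{1,\dots,k\}\times(\Sigma\cup\{\eps\})\to\{1,\dots,k\}$ and check that $\A'$ and $\A'^\eps$ respect it.

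Finally I would verify in polynomial time that the guessed certificate is valid, namely that $L(\A'^\eps) = L(\A)$. Since $\A'^\eps$ recognising the same language as its underlying automaton $\A'$ is exactly the $\eps$-completability requirement, it suffices to confirm that (a) $\A'$ is syntactically $k$-wise $\eps$-complete after adding the guessed transitions, which is a local check over all even priorities $x$, memory states $m$, and ordered pairs of states, and (b) $L(\A'^\eps)=L(\A)$, which we split into the two inclusions $L(\A)\subseteq L(\A'^\eps)$ and $L(\A'^\eps)\subseteq L(\A)$. Both $\A$ and $\A'^\eps$ are deterministic parity automata (the latter after the $\eps$-transitions are incorporated), and language inclusion between deterministic $\omega$-automata is decidable in polynomial time~\cite{ClarkeDK93Unified}. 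The main obstacle is not the verification itself but ensuring the certificate is genuinely polynomial: this is guaranteed precisely by the size bound $kn$ in item~\ref{item:existence-det-automata}, so the entire argument hinges on having that quantitative form of the characterisation available rather than merely its qualitative statement.
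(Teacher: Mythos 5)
Your overall strategy is the same as the paper's: bound the memory by $n$ so that $k\leq n$ may be assumed, use the quantitative size bound $kn$ in item~(\ref{item:existence-det-automata}) of Theorem~\ref{thm:main-charac} to guess a deterministic $k$-wise $\eps$-completable automaton $\A'$ together with an explicit completion ${\A'}^{\eps}$, and verify the certificate via language inclusions. The gap is in the verification step, and it is twofold. First, your claim that ${\A'}^{\eps}$ is ``deterministic after the $\eps$-transitions are incorporated'' is false: an $\eps$-completion is inherently nondeterministic, since $k$-wise $\eps$-completeness forces, for every even $x$ and every ordered pair of states sharing a memory state, an $\eps$-transition in one direction or the other, so every state carries many outgoing $\eps$-transitions on top of its letter transitions. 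Consequently your check $L(\A)\subseteq L({\A'}^{\eps})$ is an inclusion \emph{into} a nondeterministic parity automaton, which is not known to be (and in general is not, unless $\mathbf{P}=\mathbf{PSPACE}$) decidable in polynomial time: universality of nondeterministic $\omega$-automata is the special case where $\A$ accepts everything. Only your other inclusion, $L({\A'}^{\eps})\subseteq L(\A)$, nondeterministic into deterministic, is polynomial.

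Second, even granting both checks, they establish only $L({\A'}^{\eps})=L(\A)$, hence merely $L(\A')\subseteq L(\A)$ via the automatic inclusion $L(\A')\subseteq L({\A'}^{\eps})$. This is not the hypothesis of item~(\ref{item:existence-det-automata}): that item requires the deterministic skeleton $\A'$ itself to recognise $L(\A)$, and the completion to preserve \emph{its} language, i.e.\ $L({\A'}^{\eps})=L(\A')$. This is not pedantry: the proof of (\ref{item:existence-det-automata})~$\Rightarrow$~(\ref{item:existence-universal-graph}) uses precisely the fact that the deterministic skeleton accepts every word of $W$ in order to map trees satisfying $W$ into the universal graph ${\A'}^{\eps}\casc \Sgraph$; if $L(\A')\subsetneq L(\A)$, that argument breaks, and neither the paper nor your proposal shows that such a ``weak'' certificate still implies memory $\leq k$, so soundness of your procedure is unproven. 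The paper's verification repairs both problems at once: it checks $L(\A)\subseteq L(\A')$ (two \emph{deterministic} automata, polynomial by~\cite{ClarkeDK93Unified}) and $L({\A'}^{\eps})\subseteq L(\A)$ (nondeterministic into deterministic, also polynomial), and then the chain $L(\A)\subseteq L(\A')\subseteq L({\A'}^{\eps})\subseteq L(\A)$ closes, yielding exactly $L(\A')=L({\A'}^{\eps})=L(\A)$, which is the hypothesis of item~(\ref{item:existence-det-automata}). Your proof becomes correct once your two inclusion checks are replaced by these.
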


A third important consequence of Theorem~\ref{thm:main-charac} is a closure under union -- a strong form of the "generalised Kopczyński conjecture" -- stated below. The proof of this result will be the object of Section~\ref{sec:union}.
An objective is ""prefix-increasing"" if for all $a \in \Sigma$ and $w \in \Sigma^\omega$, it holds that if $w \in W$ then $aw \in W$.
\AP It is ""prefix-independent"" if the converse also holds, that is, $w \in W$ if and only if $aw \in W$.

\begin{restatable}[Union has bounded memory]{theorem}{thmUnion}\label{thm:union}
    Let $W_1, W_2 \subseteq \SS^\oo$ be two $\BCSigma$ objectives over the same alphabet, such that $W_{2}$ is "prefix-increasing".
    Assume that $W_{1}$ has memory $\leq k_1$ and $W_{2}$ has memory $\leq k_2$.
    Then $W_1 \cup W_2$ has memory $\leq k_1 k_2$.
\end{restatable}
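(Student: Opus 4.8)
The plan is to go through the universal-graph characterisation, i.e. the equivalence between items~(\ref{item:existence-universal-graph}) and~(\ref{item:memory-arbitrary-games}) of Theorem~\ref{thm:main-charac}. Applying that theorem to $W_1$ and to $W_2$, fix a cardinal $\kappa$ and obtain, for $i\in\{1,2\}$, a well-founded, $k_i$-wise $\eps$-complete, $(\kappa,W_i)$-universal graph $U_i$ of some index $d_i$ (these may be taken in cascade-product form, cf.\ Lemma~\ref{lem:cascade_products}). It then suffices to assemble from $U_1$ and $U_2$ a well-founded, $(k_1k_2)$-wise $\eps$-complete, $(\kappa,W_1\cup W_2)$-universal graph $U$: by the converse direction of Theorem~\ref{thm:main-charac}, the existence of such a $U$ for every $\kappa$ yields that $W_1\cup W_2$ has memory $\leq k_1k_2$. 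Keeping every step chromatic will simultaneously give the statement for chromatic memory.

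\textbf{Construction.} The candidate $U$ has vertices drawn from $V(U_1)\times V(U_2)$, so that its memory states are the pairs $(m_1,m_2)$, of which there are $k_1k_2$. Its index is $d_1+d_2$, and its signature structure is the lexicographic interleaving of the two factor signatures, with the $U_2$-signature placed in the more significant coordinates; this asymmetry is where I intend to exploit that $W_2$ is prefix-increasing, since it lets the $W_2$-component ``take over'' a play irrespective of the prefix already produced. Two facts must then be checked. First, $U$ is $(k_1k_2)$-wise $\eps$-complete: within each memory pair, the required total preorders are the lexicographic products of the factor preorders, and refinement across levels is inherited, which is routine. Second, the edge set must be arranged so that $U$ satisfies $W_1\cup W_2$ at every relevant vertex: along any infinite path, either the $U_1$-coordinates respect $\Parity_{d_1}$ throughout, whence the label lies in $W_1$, or the $U_2$-coordinates eventually respect $\Parity_{d_2}$, in which case a suffix of the label lies in $W_2$ and prefix-increasingness promotes this to the whole label. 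Making the edge set enforce this dichotomy, rather than allowing both coordinates to fail, is the delicate part of the construction.

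\textbf{Main obstacle: universality.} Given an infinite tree $T$ with $|V(T)|<\kappa$ satisfying $W_1\cup W_2$, I must build a morphism $\phi=(\phi_1,\phi_2)\colon T\to U$ sending the root to a vertex satisfying $W_1\cup W_2$. The difficulty is that $T$ satisfies neither $W_1$ nor $W_2$ globally and that, because $W_2$ is only prefix-increasing and not prefix-independent, the nodes whose subtree satisfies $W_2$ do \emph{not} form a clean frontier: a single node may admit both a $W_1$-continuation and a $W_2$-continuation, and the two behaviours may interleave arbitrarily along the tree (a naive split of $T$ into a $W_2$-region and a $W_1$-region provably fails). The plan is therefore to assign to every node a \emph{combined ordinal signature} by an extraction argument rather than to partition $T$. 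The $U_2$-signature measures each play's progress towards $W_2$; it is coherently definable precisely because prefix-increasingness makes membership in $W_2$ stable under prepending letters (equivalently, the complement of $W_2$ is closed under taking suffixes), so a witnessing signature at a node descends to its successors. Wherever this $U_2$-signature cannot decrease along an edge, prefix-increasingness forces the play to have definitively left $W_2$, so the corresponding continuation lies in $W_1$ and is captured by a $U_1$-signature obtained from universality of $U_1$ on the relevant subtree. Weaving these two families into a single morphism into $U$ and verifying that every infinite branch is thereby certified to lie in $W_1\cup W_2$ is the technical heart; this is the ``extraction in the infinite manipulating ordinals'' announced in the introduction, and I expect it to be the principal source of difficulty.

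\textbf{Conclusion.} Once $U$ is shown to be well-founded, $(k_1k_2)$-wise $\eps$-complete and $(\kappa,W_1\cup W_2)$-universal for every cardinal $\kappa$, Theorem~\ref{thm:main-charac} gives that $W_1\cup W_2$ has memory $\leq k_1k_2$. Since the lexicographic interleaving of two chromatic update maps is again chromatic, the same argument yields the bound for chromatic memory, completing the proof.
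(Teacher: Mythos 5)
Your proposal does not reach a proof: the step you yourself isolate as the ``technical heart'' --- universality of the product graph $U$ --- is exactly the missing mathematical content, and it is not a routine verification. Given a tree $T$ satisfying $W_1\cup W_2$ but neither $W_1$ nor $W_2$ globally, universality of $U_1$ and $U_2$ gives you no morphism from any part of $T$ to start from (it only applies to trees satisfying the respective objective), and the ``combined ordinal signature'' you invoke is never defined: for a general $\BCSigma$ objective there is no intrinsic notion of a signature measuring progress towards $W_2$ --- signatures live on the priority side of a parity automaton, so defining them already presupposes an automaton-based decomposition of the kind your plan tries to bypass. The paper's route makes this difficulty disappear rather than confronting it: it builds a \emph{deterministic} $(k_1k_2)$-wise $\eps$-completable automaton for $W_1\cup W_2$, namely $(\A_1\times\A_2)\casc\T$ (Lemma~\ref{lem:product-with-T-epsCompl}), and then universality of the derived graph is automatic from determinism (Section~\ref{sec:existence-universal-graphs}); the real work is shifted to $\eps$-completeness (Lemma~\ref{lem:A-eps-complete-union}), which is a finite, local verification via breakpoint priorities. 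Your plan inverts this division of labour --- it makes $\eps$-completeness ``routine'' and universality hard --- and leaves the hard, infinite, global part unproven.

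There is also a structural flaw in the construction itself: a \emph{static} lexicographic interleaving of the two signature structures (with the $W_2$-part more significant) cannot certify the union of two parity conditions. Whether a word $(y_1,z_1)(y_2,z_2)\dots$ satisfies $\Parity_{d_1}$ or $\Parity_{d_2}$ is not witnessed by any fixed relative ordering of the priorities of the two conditions; which priorities of one condition must dominate which priorities of the other depends on the run. This is precisely why the paper's automaton $\T$ has as states dynamically updated interleavings of the odd priorities of the two conditions (a Zielonka-tree/LAR structure), and why the relaxed automaton $\overline{\T}$ is needed to build the $\eps$-completion. Relatedly, prefix-increasingness of $W_2$ is not used in the paper to let the $W_2$-component ``take over'' plays; it is used through Lemma~\ref{lem:prefix-increasing} to obtain a completion $\A_2^\eps$ with no priority-$0$ transitions, so that breakpoint priorities in the second coordinate are always $\geq 2$ --- an ingredient with no counterpart in your construction. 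Both the candidate graph and its correctness argument would therefore have to be rebuilt along genuinely different lines.
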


Our main technical contribution are the implications from (\ref{item:memory-small-games}) to (\ref{item:existence-automata}) and from (i') to (\ref{item:existence-automata}) in the "$\omega$-regular" case, which are the objects of Sections~\ref{sec:existence_automata} and~\ref{sec:finitary}.
We proceed in Section~\ref{sec:existence-det-automata} to show that (\ref{item:existence-automata}) implies (\ref{item:existence-det-automata}) which is straightforward.
The implication (\ref{item:existence-det-automata}) $\implies$ (\ref{item:existence-universal-graph}) is adapted from~\cite{CO24Positional} and presented in Section~\ref{sec:existence-universal-graphs}.
Finally, the implication (\ref{item:existence-universal-graph}) $\implies$ (\ref{item:memory-arbitrary-games}) is the result of~\cite{CO25LMCS} (Theorem~\ref{thm:universal_graphs}), and the remaining one is trivial.

\subsection{Existence of $k$-wise $\eps$-complete automata: infinitary proof}\label{sec:existence_automata}

We start with the more challenging and innovative implication: how to obtain a "$k$-wise $\eps$-complete" automaton given an "objective" in $\BCSigma$ with "memory" $k$ (that is, (\ref{item:memory-small-games}) $\implies$ (\ref{item:existence-automata})).
The finitary version (that is, (i') $\implies$ (\ref{item:existence-automata})) follows the same lines but requires some additional insights, it is the object of the next section (Section~\ref{sec:finitary}).
The proof we propose yields a slightly stronger result than the implication (\ref{item:memory-small-games}) $\implies$ (\ref{item:existence-automata}): we can obtain a "$k$-wise $\eps$-complete" automaton from any automaton recognising a language included in $W$, as formalised below.

\begin{restatable}{proposition}{existenceEpsComplete}
    \label{prop:existenceEpsComplete}
    Let $W$ be an objective with ("chromatic@@mem") memory $\leq k$ on "games" of size $\leq 2^{2^{\aleph_0}}$ and let $\A$ be an automaton such that $L(\A) \subseteq W$.
    Then there is a ("chromatic@@mem") "$k$-blowup" $\B$ of $\A$ which is "$k$-wise $\eps$-complete" and such that $L(\A) \subseteq L(\B) \subseteq W$.
\end{restatable}

To prove (\ref{item:memory-small-games}) $\implies$ (\ref{item:existence-automata}), we apply the proposition to an automaton $\A$ recognising $W$, which ensures that the obtained automaton $\B$ also recognises $W$.
We start with a detailed proof overview (Section~\ref{sec:overview}), then move on to the formal proof (Sections~\ref{sec:combinatorial-infinitary},~\ref{sec:def-of-B} and~\ref{sec:correctness-of-B}).

\subsubsection{Proof overview}\label{sec:overview}

Let $\kappa=2^{\aleph_0}$.
We assume that $W$ has "memory" $\leq k$ on "games" of size $\leq 2^\kappa$; we discuss the chromatic case at the end of the section.
Let $\A$ be an "automaton" such that $L(\A) \subseteq W$; we aim to construct a "$k$-blowup" $\B$ of $\A$ which is $\eps$-complete and satisfies $L(\A) \subseteq L(\B) \subseteq W$.
We let $\Sgraph$ denote $\Skappad$, the $(\kappa,\Parity_{d})$-"universal graph" defined in the preliminaries.

We consider the "cascade product" $\A \casc S$; this is a $(\Sigma \cup \{\eps\})$-graph which intuitively encodes all possible accepting behaviours in $\A$.
Then we apply the structuration result (Theorem~\ref{thm:structuration}) to $\A \casc S$ which yields a "$k$-blowup" $G$ of $\A \casc S$ which is "well-founded" and "$k$-wise $\eps$-complete" (as a graph).
Stated differently, up to blowing the "graph" $\A \casc S$ into $k$ copies, we have been able to endow it with many $\eps$-transitions, so that over each copy, $\re \eps$ defines a well-order.
Note that the states of $G$ are of the form $(q,m,s)$, with $q\in V(\A)$, $m \in k$ and $s\in V(S)$.

The states of $\B$ will be $V(\B)=V(\A) \times k$.
The challenge lies in defining the transitions in $\B$, based on those of $G$.

Given a state $(q,m)\in V(\B)$ and a transition $q \re {c:y} q'$ in $\A$, where $c \in \Sigma \cup \{\eps\}$, by applying the definitions we get transitions of the form $(q,m,s) \re c (q',m',s')$ in $G$, for different values of $m'$, whenever $s \re y s'$ in $\Sgraph$.
We will therefore define transition $(q,m) \re{c:y} (q',m')$ in $\B$ if $m'$ matches suitably many transitions as above; for now, we postpone the precise definition.

We should then verify that the obtained "automaton" $\B$:
\begin{itemize}
    \item is a "$k$-blowup" of $\A$,
    \item is $\eps$-complete, and
    \item recognises a subset of $W$.
\end{itemize}
The first two items above state that $\B$ should have many transitions: at least those inherited from $\A$, and in addition a number of $\eps$-transitions.
This creates a tension with the third item, which states that even with all these added transitions, the "automaton" $\B$ should not accept too many words.

Let us focus on the third item for now, which will lead to a correct definition for $\B$.
Take an accepting run
\[
    (q_0,m_0) \re{c_0:y_0} (q_1,m_1) \re{c_1,y_1} \dots
\]
in $\B$, where $x = \liminf_i y_i$ is even. For the sake of simplicity, assume that all $y_i$'s are $\geq x$.
We should show that its labelling $w=c_0c_1\dots$ belongs to $W$.
To this end, we will decorate the run with labels $s_0,s_1,\dots \in S$ so that
\[
    (q_0,m_0,s_0) \re{c_0} (q_1,m_1,s_1) \re{c_1} \dots
\]
defines a path in $G$, which concludes since $G$ "satisfies" $W$.

Recall that the elements of $\Sgraph$ are tuples of ordinals $<\kappa$ indexed by odd priorities up to $d$, and that we use $s_{<x}$ (resp. $s_{>x}$) to refer to a tuple indexed by odd priorities up to $x-1$ (resp. from $x+1$).
To construct the $s_i$'s, we fix a well chosen prefix $s_{<x}\in \kappa^{\xodd}$ which will be constant, and proceed as follows.

\begin{enumerate}[(a.)]
    \item If $y_i=x$, then we set $s_i=s_{<x} s_{>x}$, for some $s_{>x}$ which depends only on $c_i$.
    \item If $y_i>x$, then we set $s_i=s_{<x} s_{>x}$, for some $s_{>x}$ which depends on $c_i$ as well as $s_{i+1}$.
\end{enumerate}

At this stage the reader may be worried that the backward induction underlying the above definition is not "well-founded"; however, since the first case occurs infinitely often, the backward induction from the second item is only performed over finite blocks (see also Figure~\ref{fig:backtrack} in Section~\ref{sec:correctness-of-B}).

This leads to the following definition for $\B$, where $x$ is an even priority:
\[
    \begin{array}{rcl}
     (q,m) \re{c:x} (q',m') \tin \B  &\iff&  \exists s_{<x} \exists s_{>x} \forall s'_{>x}, (q,m,s_{<x}s_{>x}) \re{c} (q',m',s_{<x} s'_{>x}) \tin G, \\
     (q,m) \re{c:x+1} (q',m') \tin \B  &\iff&  \exists s_{<x} \forall s'_{>x} \exists s_{>x}, (q,m,s_{<x}s_{>x}) \re{c} (q',m',s_{<x} s'_{>x}) \tin G.
    \end{array}
\]

The first line corresponds to point (a.) above, where $s_{>x}$ can be chosen independently of $s'_{>x}$, whereas the second line corresponds to point (b.), since the choice of $s_{>x}$ is conditioned on the value of $s'_{>x}$.
(For priorities $>x+1$, we may apply either the first or second line, depending on the parity, to get the required conclusion.)

The remaining issue is that the choice of the fixed common prefix $s_{<x}$ should be made uniformly, regardless of the transition.
This is achieved thanks to an adequate extraction lemma (which extends the pigeonhole principle to the case at hands), which finds a large enough subset $T$ of $\kappa^{\dodd}$, so that transitions $(q,m,s)\re{}(q',m',s')$ are similar for different choices of $s,s' \in T$.
This ensures that $s_{<x}$ can be chosen uniformly.

There remains to verify that $\B$ is indeed a "$k$-blowup" of $\A$ and that it is $\eps$-complete, which will follow easily from the definitions and $\eps$-completeness of $G$ (this is because, after removing ``$\exists s_{<x}$'' from the definition above, the second line resemble the negation of the first).

For the chromatic case, the proof is exactly the same, we should simply check that if $G$ is chromatic (which is guaranteed by Theorem~\ref{thm:structuration}), then so is the obtained "automaton" $\B$.

We now present the full details of the proof, starting with the extraction lemma.

\subsubsection{A combinatorial lemma: Extracting homogeneous subtrees}\label{sec:combinatorial-infinitary}

As an important part of our proof, we will take the "graph" $\Sgraph$, whose set of vertices is $\kappa^{\dodd}$, and extract from it a large enough subgraph which is homogeneous.

\AP We say that a tree $T$ is ""everywhere cofinal"" if for each node $s_{< x}$, the "subtree rooted at" $x$ is cofinal in $\kappa^{(d-x)_\odd}$.
\AP An ""inner labelling"" of a tree $T$ by $L$ is a map $\lambda$ assigning a label in $L$ to every node in $T$.
\AP We say that an "inner labelling" is ""constant per level"" if for every $x\in\dodd$, $\lambda$ is constant over nodes of level $x$ in $T$.

We are now ready to state the extraction lemma.
Recall that $\kappa=2^{\aleph_0}$.

\begin{lemma}\label{lem:everywhere_cofinal}
Let $\lambda$ be an "inner labelling" of $\kappa^{\dodd}$ by $L$, where $L$ is countable.
There is an "everywhere cofinal" tree $T$ such that at every level $x$, $\lambda|_T$ is "constant per level".
\end{lemma}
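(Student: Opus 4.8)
The plan is to prove, by induction on the number of remaining levels $(d-x)/2$, a statement slightly stronger than the lemma, designed so that the constant values at different nodes can be synchronised. Concretely, I will show: for every even $x$ and every "inner labelling" $\lambda$ of $\kappa^{(d-x)_\odd}$ by a countable set $L$, there exist a \emph{single} vector of labels $\vec\ell=(\ell_{x+2},\dots,\ell_d)\in L^{(d-x)/2}$ and an "everywhere cofinal" subtree $T\subseteq\kappa^{(d-x)_\odd}$ such that, for each even level $x'\in\{x+2,\dots,d\}$, every node of $T$ at level $x'$ carries label $\ell_{x'}$. Applying this with $x=0$ yields the lemma (the root, at level $0$, carries a single label with no constraint). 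The key set-theoretic ingredient is that $\kappa=2^{\aleph_0}$ has uncountable cofinality, $\cof(\kappa)>\aleph_0$ (by K\"onig's theorem): whenever $\kappa$ is covered by countably many classes, at least one is cofinal in $\kappa$. Indeed, otherwise each of the countably many classes would be bounded by some $\beta_j<\kappa$, and then $\sup_j\beta_j<\kappa$, contradicting that their union is all of $\kappa$.

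The base case $x=d$ is immediate: $\kappa^{(d-d)_\odd}$ is a single point, $\vec\ell$ is the empty vector, and $T$ is that point. For the inductive step, fix $x<d$ and a labelling $\lambda$. Every value $c\in\kappa$ of the top coordinate (priority $x+1$) is the root of a subproblem with labelling $\lambda_c:=\lambda(c\cdot{-})$ on $\kappa^{(d-x-2)_\odd}$, so by the induction hypothesis I choose for each $c$ a vector $\vec\ell^{(c)}=(\ell_{x+4},\dots,\ell_d)$ and an "everywhere cofinal" subtree $T_c$ witnessing the strengthened claim for $\lambda_c$. The map $c\mapsto(\lambda(c),\vec\ell^{(c)})$, where $\lambda(c)$ is the label of the corresponding level-$(x+2)$ node, takes values in the countable set $L\times L^{(d-x-2)/2}$; by the cofinality pigeonhole above there is a value $(\ell^*,\vec\ell^*)$ whose preimage $C^*\subseteq\kappa$ is cofinal in $\kappa$. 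I then set $T:=\{\, c\,q \mid c\in C^*,\ q\in T_c \,\}$ and $\vec\ell:=(\ell^*,\vec\ell^*)$. By construction all level-$(x+2)$ nodes of $T$, namely the elements of $C^*$, carry label $\ell^*$, and since every retained subtree $T_c$ realises the same deeper vector $\vec\ell^*$, all deeper levels are constant as well; hence $\lambda$ is constant per level on $T$ with vector $\vec\ell$.

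It then remains to verify that $T$ is "everywhere cofinal". The construction maintains the invariant that every node of $T$ retains a cofinal (in $\kappa$) set of children: the root keeps $C^*$, which is cofinal by choice, and deeper nodes inherit this from the $T_c$'s through the induction hypothesis. A routine induction on depth shows that this invariant implies that the "subtree rooted at" each node is cofinal, in the lexicographic order, in the full space of its extensions: given a target tuple, either a strictly larger first coordinate is available among the retained children (and any extension below it then dominates the target), or the same first coordinate is reused and one recurses into the corresponding subtree.

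The main difficulty is the tension between "constant per level", which is a \emph{global} constraint tying together all nodes of a fixed level, and cofinality, which is \emph{local} to each node. I expect this to be the crux of the argument, and it is exactly what the strengthened induction hypothesis resolves: by making the hypothesis output a single level-label vector, the cofinality pigeonhole can be applied \emph{simultaneously} to the immediate child-label $\ell^*$ and to the entire deeper vector $\vec\ell^*$, so that cofinally many children can be retained while every level is kept homogeneous. The only genuinely set-theoretic input is $\cof(2^{\aleph_0})>\aleph_0$, which is precisely what allows a countable partition of the children to have a cofinal cell.
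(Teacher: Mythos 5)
Your proof is correct and follows essentially the same route as the paper's: induction on the number of levels, applying the induction hypothesis to each child subtree, forming the auxiliary label $(\lambda(c),\vec\ell^{(c)})$ over the countable set $L\times L^{(d-x-2)/2}$, and pigeonholing to retain a cofinal set of children with identical data. Your ``strengthened'' induction hypothesis is only a repackaging (the plain statement already yields the constant vector a posteriori, which is how the paper phrases it), and your explicit appeal to $\cof(2^{\aleph_0})>\aleph_0$ via K\"onig's theorem spells out a step the paper leaves implicit.
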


\begin{proof}
    We prove the lemma by induction on $d$.
    For $d=0$ there is nothing to prove since $\dodd$ is empty; let $d\geq 1$ and assume the result known for $d-2$.
    For each node $s_{<2}$ at level $2$, apply the induction hypothesis on the subtree of $\kappa^{\dodd}$ rooted at $s_{<2}$,
    which gives an everywhere cofinal tree $T_{s_{<2}}$ such that $\lambda$ is "constant per level" over $T_{s_{<2}}$.
    Let $\ell^{s_{<2}}_2,\dots,\ell^{s_{<2}}_{d}$ denote the constant values of $\lambda$ on the corresponding levels of $T_{s_{<2}}$, and define a new auxiliary labelling of the nodes $s$ at level $2$ of $\Sgraph$ by the tuple $\ell = (\lambda(s),\ell^{s_{<2}}_2,\dots,\ell^{s_{<2}}_{d})$.

    Now since there are at most countably-many new labellings, and there are $\kappa=2^{\aleph_0}$ nodes at level $2$, there is one of the auxiliary labels $\ell$ such that cofinaly-many nodes have this new label.
    We conclude by taking $T$ to be the union of $\{s\} \times T'_s$, where $s$ ranges over nodes at level $2$ with the new labelling $\ell$, and $T'_s$ are the corresponding everywhere cofinal trees.
\end{proof}

We are now ready to formalise the definition of $\B$.

\subsubsection{Definition of $\B$}\label{sec:def-of-B}

Without loss of generality, we assume that $\A$ contains transition $q \re{\eps:d-1} q$ for every state $q$ (these can be added without affecting the language of $\A$).
Consider the "cascade product" $\A \casc S$.
Note that thanks to the assumption above and the definition of $\Sgraph$, we have transitions $(q,s) \re{\eps} (q',s)$ in $\A \casc S$ whenever $s>s'$.

By Lemma~\ref{lem:cascade_products}, $\A \casc S$ "satisfies" $W$ and is "well-founded".
Moreover it has size $\leq \kappa$, so by our assumption on $W$, we may apply Theorem~\ref{thm:structuration}.
This yields a "$k$-blowup" $G$ of ${\A \casc S}$ which is "$k$-wise $\eps$-complete".
Let us write $V(G) = V(\A) \times k \times V(S)$.
We close $G$ by transitivity, meaning that we add transitions $(q,m,s) \re c (q',m',s')$, for $c \in \Sigmaeps$, whenever $(q,m,s) \re{\eps^* c \eps^*} (q',m',s')$; the obtained "graph" $\overline{G}$ still "satisfies" $W$.

Here comes the important definition: say that $(q,m)$ strongly $c$-dominates $(q',m')$ at node $s_{< x}$ if
\[
    \exists s_{> x} \forall s'_{> x} \qquad (q,m,s_{< x} s_{> x}) \re c (q',m',s_{< x} s'_{> x}) \tin \overline{G},
\]
and that $(q,m)$ weakly $c$-dominates $(q',m')$ at $s_{< x}$ if
\[
    \forall s'_{> x} \exists s_{> x} \qquad (q,m,s_{< x} s_{> x}) \re c (q',m',s_{< x} s'_{> x}) \tin \overline{G},
\]
where $c \in \Sigmaeps$.
Note that strong domination implies weak domination.
The type of a node $s_{< x}$ is the information, for each $q,q',m,m'$ and $c$, of whether $(q,m)$ strongly or weakly (or not at all) $c$-dominates $(q',m')$.
This gives finitely many possibilities for fixed $q,q',m,m'$ and $c$, and therefore there are in total a countable number of possible types.
Thus Lemma~\ref{lem:everywhere_cofinal} yields a tree $T \subseteq \kappa^{\alpha}$ which is everywhere cofinal and such that for all $x$, nodes at level $x$ in $T$ all have the same type $t_x$.

We are now ready to define $\B$.
We put $V(\B) = V(\A) \times k$, and for each even $x \in d$ and $c \in \Sigmaeps$, we define transitions by
\[
    \begin{array}{lcrl}
    (q,m) &\re{c:x}& (q',m') & \tif (q,m) \text{ strongly $c$-dominates } (q',m') \tin t_x, \\
    (q,m) &\re{c:x+1}& (q',m') & \tif (q,m) \text{ weakly $c$-dominates } (q',m')\tin t_x.
    \end{array}
\]
Here is the main lemma, which proves the direct implication in Theorem~\ref{thm:main-charac}.

\begin{lemma}\label{lem:main_lemma_charac}
"Automaton" $\B$ is a "$k$-blowup" of $\A$, it is "$k$-wise $\eps$-complete" and $L(\B) \subseteq W$.
\end{lemma}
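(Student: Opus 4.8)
The plan is to verify the three claimed properties of $\B$ in turn, using the structure imposed by the extraction lemma and the $\eps$-completeness of $G$.

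First I would establish that $\B$ is a $k$-blowup of $\A$. The two requirements are that every transition $q \re{c:y} q'$ of $\A$ lifts to some transition $(q,m) \re{c:y} (q',m')$ in $\B$ for each $m$, and that all $\eps$-transitions of $\B$ preserve the memory state. For the first, fix a transition $q \re{c:y} q'$ of $\A$ and a memory state $m$. By definition of the $k$-blowup $G$ of $\A \casc S$, for each $s$ there is some $m'$ and $s'$ with $(q,m,s) \re{c} (q',m',s')$. The point is to extract, using the everywhere-cofinality of $T$, a single $m'$ that works at the relevant nodes, so that $(q,m)$ strongly or weakly $c$-dominates $(q',m')$ in the appropriate type $t_x$; this is where I expect to invoke everywhere-cofinality to guarantee that witnesses $s'_{>x}$ and $s_{>x}$ can be found inside $T$ with the common prefix $s_{<x}$ read off from the constant type. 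The memory-preservation of $\eps$-transitions should be immediate from the fact that the definition of strong/weak domination only ever relates $(q,m)$ to $(q',m')$, and the $k$-blowup $G$ already forces $\eps$-transitions to keep $m$ fixed; the chromatic case should follow by checking that the update map of $G$ descends to $\B$.

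Next I would prove $k$-wise $\eps$-completeness, i.e. that for each even $x$, each memory state $m$, and each ordered pair $(q,m),(q',m)$, either $(q,m) \re{\eps:x} (q',m)$ or $(q',m) \re{\eps:x+1} (q,m)$ in $\B$. By the definition of transitions in $\B$, this amounts to showing that at the node-type $t_x$, either $(q,m)$ strongly $\eps$-dominates $(q',m)$, or $(q',m)$ weakly $\eps$-dominates $(q,m)$. Unfolding the quantifiers, strong $\eps$-domination of $(q',m)$ by $(q,m)$ reads $\exists s_{>x} \forall s'_{>x}, (q,m,s_{<x}s_{>x}) \re{\eps} (q',m,s_{<x}s'_{>x})$ in $\overline{G}$, and the weak domination in the other direction is precisely the logical negation of this statement after swapping the roles of the two states. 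Here I would lean on the $\eps$-completeness of $G$ (transferred to $\overline{G}$ via the transitive closure and the added $\re{\eps:d-1}$ transitions corresponding to $s > s'$) to guarantee that for any fixed pair of tuples one of the two $\eps$-relations holds, and then quantify carefully; this negation-of-the-first-line structure is exactly the point flagged in the overview.

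Finally, and this is the main obstacle, I would show $\B$ recognises $W$, i.e. $L(\B) = W$. The inclusion $W = L(\A) \subseteq L(\B)$ is free since $\B$ is a $k$-blowup. For $L(\B) \subseteq W$, I take an accepting run $(q_0,m_0) \re{c_0:y_0} (q_1,m_1) \re{c_1:y_1} \cdots$ in $\B$ with $x = \liminf_i y_i$ even, and must produce tuples $s_0, s_1, \dots$ so that $(q_0,m_0,s_0) \re{c_0} (q_1,m_1,s_1) \re{c_1} \cdots$ is a path in $\overline{G}$, whence $w = c_0 c_1 \cdots \in W$ because $\overline{G}$ satisfies $W$. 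Following the overview, I fix the common prefix $s_{<x}$ supplied by the constant type $t_x$ and define the $s_{>x}$ components by cases on whether $y_i = x$ (use the strong-domination witness, which is independent of the successor) or $y_i > x$ (use the weak-domination witness, chosen as a function of $s_{i+1}$). The delicate point is that this backward assignment in the $y_i > x$ case is well-founded only because $y_i = x$ occurs infinitely often (as $\liminf y_i = x$), so the backward induction runs over finite blocks between consecutive occurrences of priority $x$; I would make this block decomposition explicit and verify that the resulting infinite concatenation is a genuine path in $\overline{G}$, handling priorities strictly above $x+1$ by applying whichever of the two defining lines matches the parity. Assembling the three parts yields the lemma.
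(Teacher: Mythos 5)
Your decomposition into the three claims, and your arguments for $\eps$-completeness and for $L(\B) \subseteq W$, do match the paper's proof: the dichotomy obtained by negating one form of domination and transferring it through the $\eps$-completeness of $\overline G$, and the block-wise backward decoration of an accepting run anchored at the positions where priority $x$ occurs, are exactly the arguments used. (One point you should make explicit in the third part: the paper's two technical lemmas are designed so that every constructed tuple $s_i$ lies \emph{inside} $T$, using everywhere-cofinality together with $\eps$-transitivity; this matters because the uniform types, hence the domination facts, are only guaranteed at nodes of $T$, and the backward induction would otherwise exit $T$.)

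The genuine gap is in your proof that $\B$ is a $k$-blowup of $\A$: everywhere-cofinality of $T$ is not the relevant tool there and cannot do the job you assign to it. Fix a transition $q \re{c:x} q'$ with $x$ even and a memory state $m$. Strong domination demands a \emph{single} $m'$ and a single $s_{>x}$ such that $(q,m,s_{<x}s_{>x}) \re{c} (q',m',s_{<x}s'_{>x})$ lies in $\overline G$ for \emph{all} $s'_{>x}$ --- there are $\kappa$-many of these. The blowup property of $G$ only yields, for each individual $s'_{>x}$, some memory state $m'$ \emph{depending on} $s'_{>x}$, and cofinality of $T$ says nothing about removing this dependence. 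The paper's argument is different: taking $s_{>x} = 0_{>x}$, it first establishes a monotonicity property --- if $m'$ is good for $\tilde s'_{>x} \geq s'_{>x}$ then $m'$ is good for $s'_{>x}$, because $(q',m',s_{<x}\tilde s'_{>x}) \re{\eps} (q',m',s_{<x}s'_{>x})$ holds in $G$ (the $\eps$-self-loops added to $\A$ give descending $\eps$-edges in $\A \casc S$, and $\eps$-transitions preserve the memory state in a blowup), and $\overline G$ is closed under $\eps$-composition --- and then observes that the sets $M'_{s'_{>x}} \subseteq k$ of good memory states form a decreasing chain of non-empty subsets of the \emph{finite} set $k$, so their intersection is non-empty; any $m'$ in that intersection is uniform. (An analogous argument with ``cofinitely many $s_{>x}$'' handles the odd-priority case.) This monotonicity-plus-finiteness step is the crux of the blowup claim and is absent from your proposal; without it, the assertion that a single $m'$ ``works at the relevant nodes'' is unjustified.
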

The remainder of the section is devoted to proving Lemma~\ref{lem:main_lemma_charac}.

\subsubsection{Correctness of $\B$: Proof of Lemma~\ref{lem:main_lemma_charac}}\label{sec:correctness-of-B}

There are a few things to show.
The interesting argument is the one that shows that $L(\B) \subseteq W$ (Lemma~\ref{lem:language-containement} below).
We should also prove there is no accepting run over words in $\Sigma^*\eps^\omega$, which will be done below as part of Lemma~\ref{lem:language-containement}.

\subparagraph*{$\B$ is a $k$-blowup of $\A$.}

We should prove the following.

\begin{claim}\label{claim:blowup-infinitary}
    For all transitions $q \re{c:y} q'$ in $\A$, and any $m \in k$ there is some $m' \in k$ such that $(q,m) \re{c:y} (q',m')$ in $\B$.
\end{claim}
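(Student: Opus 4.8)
The plan is to unravel the definition of $\B$ and show that whenever $\A$ contains a transition $q \re{c:y} q'$, at least one memory target $m'$ yields the corresponding transition in $\B$. The key link is that transitions in $\B$ come from (strong or weak) $c$-domination read off from the homogeneous type $t_x$, which in turn is witnessed in $\overline G$; and transitions in $\overline G$ inherit those of the cascade product $\A \casc S$, which inherits those of $\A$. So the heart of the matter is to trace a single transition of $\A$ all the way up to a domination statement holding at the canonical type.

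First I would fix the transition $q \re{c:y} q'$ of $\A$ and an arbitrary memory state $m$. It suffices to treat even priorities $x = y$ (producing a strong domination hence a $\re{c:x}$ edge) and odd priorities $y = x+1$ (producing a weak domination hence a $\re{c:x+1}$ edge); by the definition of $\B$ these are the only two cases, and producing strong domination also gives weak domination. I would then recall that $G$ is a $k$-blowup of $\A \casc S$: by the first blowup axiom, for every transition $(q,s) \re{c} (q',s')$ in $\A \casc S$ and every $m \in k$ there is some $m'$ with $(q,m,s) \re{c} (q',m',s')$ in $G$, hence in $\overline G$. And by the definition of the cascade product, $(q,s) \re{c} (q',s')$ holds in $\A \casc S$ precisely when $s \re{y} s'$ in $\Sgraph$ for the priority $y$ carried by the $\A$-transition. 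The plan is therefore to pick, for the relevant even level $x$, a node $s_{<x}$ of $T$ and exhibit the requisite $s_{>x}$ (or the requisite choice for each $s'_{>x}$) so that the ordinal-tuple edge $s_{<x}s_{>x} \re{y} s_{<x}s'_{>x}$ holds in $\Sgraph$; combined with the blowup axiom this will hand us a witnessing $G$-transition, and hence domination at that node, hence domination in $t_x$ by homogeneity.

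The concrete obstacle is the quantifier structure of domination and the parity bookkeeping in the edge relation of $\Sgraph$. For strong $c$-domination at level $x$ (when $y=x$ is even) I need a single $s_{>x}$ that works for all $s'_{>x}$: since $y=x$ is even, the edge condition $s \re{x} s'$ reads $s_{<x} \geq s'_{<x}$ on coordinates below $x$, with no constraint tying $s_{>x}$ to $s'_{>x}$, so I can take $s_{>x}$ to be, say, a componentwise-maximal tuple cofinal in the relevant subtree (available by everywhere-cofinality of $T$), making the $\Sgraph$-edge hold regardless of $s'_{>x}$. For weak domination at level $x$ (when $y=x+1$ is odd) I only need, for each $s'_{>x}$, some $s_{>x}$: the odd edge condition $s_{\leq x+1} > s'_{\leq x+1}$ lets me choose $s_{>x}$ large enough (again by cofinality) to dominate the given $s'_{>x}$ on coordinate $x+1$, which suffices. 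The main subtlety to get right is checking that the chosen $s_{<x}$ lies at a genuine level-$x$ node of $T$ and that the witnesses $s_{>x}$, $s'_{>x}$ are realisable within $T$ (so that the domination is evaluated on actual tuples of $T$), which is exactly what everywhere-cofinality guarantees; and, for higher priorities, confirming that the definition of $\B$ applies the same two clauses according to parity. Once a domination statement holds at one node of level $x$ in $T$, constancy of the type per level promotes it to $t_x$, giving the desired $\B$-transition $(q,m) \re{c:y} (q',m')$ for the witnessing $m'$.
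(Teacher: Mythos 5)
Your proposal correctly traces the route from a transition of $\A$ through the cascade product and the blowup property of $G$ to a domination statement, and the case split (even $y=x$ giving strong domination, odd $y=x+1$ giving weak domination) matches the paper. But it misses the central difficulty, which is where the paper spends essentially all of its effort: the memory state $m'$ handed to you by the blowup property is \emph{not} uniform. The blowup axiom says that for each individual edge $(q,s) \re{c} (q',s')$ in $\A \casc \Sgraph$ and each $m$, there exists \emph{some} $m'$ (depending on $s'$) with $(q,m,s) \re{c} (q',m',s')$ in $G$. The domination statements, however, fix the pair $(q',m')$ \emph{before} the quantifiers over tuples: strong $c$-domination of $(q',m')$ requires $\exists s_{>x}\, \forall s'_{>x}$ with the \emph{same} $m'$ for every $s'_{>x}$, and weak domination likewise quantifies $\forall s'_{>x}$ with $m'$ fixed. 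Your argument picks witnesses $s_{>x}$ (or $s_{>x}$ per $s'_{>x}$) for the $\Sgraph$-edge and then says the blowup axiom ``hands us a witnessing $G$-transition, and hence domination at that node'' --- but each $s'_{>x}$ may a priori come with a different $m'$, and a plain pigeonhole argument does not close this gap: a family of non-empty subsets of $k$ indexed by the $s'_{>x}$ can perfectly well have empty intersection.

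The paper resolves this by introducing, for each $s'_{>x}$, the set $M'_{s'_{>x}}$ of memory states $m'$ that are ``good'' for it, and then proving a monotonicity property: if $m'$ is good for $\tilde s'_{>x} \geq s'_{>x}$ then it is good for $s'_{>x}$. This uses the $\eps$-edges $(q',m',s_{<x}\tilde s'_{>x}) \re{\eps} (q',m',s_{<x}s'_{>x})$, which exist in $G$ precisely because $\eps$-transitions in a $k$-blowup \emph{preserve} the memory state, together with transitivity in $\overline G$. Consequently the sets $M'_{s'_{>x}}$ form a decreasing chain of non-empty subsets of the finite set $k$, so their intersection is non-empty, yielding a single uniform $m'$. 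Without this step (or an equivalent uniformisation) the proof does not go through, so this is a genuine gap rather than a presentational shortcut. Two smaller inaccuracies: in the even case there is no need (and in general no way, since cofinal subsets of $\kappa$ have no maximum) to take a ``componentwise-maximal'' tuple --- any $s_{>x}$, e.g.\ the zero tuple, works because the even edge condition only constrains coordinates $<x$; and the domination quantifiers range over \emph{all} tuples in $\kappa^{(d-x)_{\mathrm{odd}}}$, not only those realisable inside $T$, so everywhere-cofinality is not needed at this point of the argument (it is used in the later technical lemmas).
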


\begin{claimproof}
Let $q \re{c:y} q'$ be a transition in $\A$ and let $m \in k$.
Since $G$ is a "$k$-blowup" of $\A \times S$, for all edges $s\re y s'$ in $\Sgraph$ there is $m' \in k$ such that $(q,m,s) \re c (q',m',s') \tin G$.
Although both proofs are similar, we distinguish two cases.
\begin{itemize}
    \item If $y=x$ is even.
    We prove that for all nodes $s_{< x}$ at level $x$, $(q,m)$ strongly $c$-dominates $(q',m')$ for some $m'$.
    Therefore the same is true in $t_x$ which implies the wanted result.
    We let $s_{> x}=0_{> x}$, the zero sequence in $\kappa^{(d-x)_\odd}$.
    Now for all $s'_{>x} \in \kappa^{(d-x)_\odd}$, it holds that $s_{< x} s_{> x} = s_{< x} 0_{> x} \re x s_{< x} s'_{>x}$ in $\Sgraph$, so there is $m'$ such that $(q,m,s_{< x} s_{>x}) \re c (q,m',s_{< x} s'_{>x})$ in $G$ and thus also in $\overline G$; in this case say that $m'$ is good for $s'_{> x}$.
    
    Now we claim that if $m'$ is good for $\tilde s'_{> x} \geq s'_{> x}$, then it is also good for $s'_{> x}$.
    Indeed, as observed at the beginning of the section, we have $(q',s_{<x}\tilde s'_{> x}) \re \eps (q',s_{<x}s'_{> x})$ in $\A \times S$ therefore since $\eps$-transitions preserve the "memory state" in $G$ (by definition of a blowup) we have $(q',m',s_{<x} \tilde s'_{> x}) \re \eps (q',m',s_{<x} s'_{>x})$ in $G$ thus $(q,m,s_{<x}s_{> x}) \re c (q',m',s_{<x}s'_{> x})$ in $\overline{G}$.

    Therefore the sets $M'_{s'_{> x}}$ of $m'$ which are good for $s'_{> x}$ form a decreasing chain of non-empty subsets of $k$ and thus their intersection is non-empty: there is some $m'$ which is good for all $s'_{> x}$, as required.

    \item If $y=x+1$ is odd. 
    We now prove that for all nodes $s_{< x}$ at level $x$, $(q,m)$ weakly $c$-dominates $(q',m')$ for some $m'$.
    Let $s'_{> x} \in \kappa^{(d-x)_\odd}$.
    Then for any $s_{> x}$ such that $s_x > s'_x$, it holds that $s_{< x} s_{> x} \re {x+1} s'_{< x} s_{>x}$ in $\Sgraph$, so there is some $m'$ such that $(q,m,s_{< x} s_{>x}) \re c (q,m',s_{< x} s'_{>x})$ in $G$.
    Hence there is some $m'$ such that, for cofinitely many $s_{> x}$, $(q,m,s_{< x} s_{>x}) \re c (q,m',s_{< x} s'_{>x})$ is an edge in $G$ and thus also in $\overline G$; say that such an $m'$ is good for $s'_{> x}$.

    Now, we claim that if $m'$ is good for $\tilde s'_{> x} \geq s'_{> x}$, then it is also good for $s'_{> x}$.
    Indeed, as in the first case, we have $(q',m',\tilde s_{<x}s'_{> x}) \re \eps (q',m',s_{<x}s'_{> x})$ in $G$ thus for cofinitely many $s_{> x}$ we have $(q,m,s_{< x}s_{>x}) \re c (q',m',s_{<x} s'_{> x})$ in $\overline{G}$.

    We conclude just as above. \claimqedhere 
\end{itemize}
\end{claimproof}

\subparagraph*{$\B$ is "$k$-wise $\eps$-complete".}

We should prove the following claim.

\begin{claim}
    For every even $x$, "memory state" $m$ and states $q,q'$, either $(q,m) \re{\eps:x} (q',m)$ or $(q',m) \re{\eps:x+1} (q,m)$.
\end{claim}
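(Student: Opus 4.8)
The plan is to unwind the definitions and reduce the claim to the $k$-wise $\eps$-completeness of the graph $G$ (equivalently $\overline{G}$), which we already know holds. Recall that by the definition of $\B$, we have $(q,m) \re{\eps:x} (q',m)$ precisely when $(q,m)$ strongly $\eps$-dominates $(q',m)$ in the type $t_x$, and $(q',m) \re{\eps:x+1} (q,m)$ precisely when $(q',m)$ weakly $\eps$-dominates $(q,m)$ in $t_x$. So the claim we must establish is purely about the type $t_x$ of level-$x$ nodes: for every such pair, either $(q,m)$ strongly $\eps$-dominates $(q',m)$, or $(q',m)$ weakly $\eps$-dominates $(q,m)$.

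First I would fix an even $x$, a "memory state" $m$, states $q,q'$, and pick any node $s_{<x}$ at level $x$ of $T$ (its type is $t_x$, so the choice is immaterial). I would then argue by contradiction: suppose $(q,m)$ does \emph{not} strongly $\eps$-dominate $(q',m)$ at $s_{<x}$. Negating the strong-domination formula gives
\[
    \forall s_{>x} \; \exists s'_{>x} \qquad (q,m,s_{<x}s_{>x}) \nre{\eps} (q',m,s_{<x}s'_{>x}) \tin \overline{G}.
\]
The key is to turn this failure of the $\eps$-edge into the \emph{existence} of the reverse $\eps$-edge, using that $G$ is "$k$-wise $\eps$-complete". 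For each candidate pair of tuples, $\eps$-completeness of $G$ (applied with priority $x$, using that $\re{\eps:0}$ and higher-priority $\eps$-edges are available and that the memory state $m$ is preserved) tells us that whenever the edge $(q,m,\cdot) \re{\eps} (q',m,\cdot)$ is missing, the reverse edge $(q',m,\cdot) \re{\eps} (q,m,\cdot)$ must be present — this is exactly the content of the alternative form of completeness discussed just before the definition.

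Then I would recombine these reverse edges to witness weak $\eps$-domination of $(q,m)$ by $(q',m)$, i.e.
\[
    \forall s_{>x} \; \exists s'_{>x} \qquad (q',m,s_{<x}s'_{>x}) \re{\eps} (q,m,s_{<x}s_{>x}) \tin \overline{G},
\]
matching the quantifier structure of weak domination after a harmless renaming of the two existentially/universally bound tuples. A small technical point, handled exactly as in Claim~\ref{claim:blowup-infinitary}, is the monotonicity observation that $\eps$-transitions in $\A \casc S$ exist whenever $s > s'$ (thanks to the added $q \re{\eps:d-1} q'$ transitions) and preserve the memory state in $G$; this lets one promote ``the reverse edge exists for some large enough tuple'' to ``it exists for all smaller tuples,'' so that the cofinality built into $T$ (it is "everywhere cofinal") makes the chosen quantifier ranges over $\kappa^{(d-x)_\odd}$ legitimate.

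The main obstacle I anticipate is bookkeeping the quantifier alternation correctly: strong domination is $\exists\forall$ and weak domination is $\forall\exists$, so the negation of strong domination naturally yields $\forall\exists$, which is precisely the shape of weak domination — but one must be careful that the witnesses produced by $\eps$-completeness of $G$ respect the fixed common prefix $s_{<x}$ and range cofinally, rather than only along some non-cofinal subset. Because $T$ is "everywhere cofinal" and the type is "constant per level", all of these choices can be made within $T$, which is what closes the argument cleanly.
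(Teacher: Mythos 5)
Your proof is correct and is essentially the paper's argument in mirror image: the paper negates the weak-domination disjunct (yielding an $\exists\forall$ statement) and applies $\eps$-completeness of $\overline G$ pointwise to obtain strong $\eps$-domination of $(q',m)$ over $(q,m)$, whereas you negate the strong-domination disjunct (yielding $\forall\exists$) and obtain weak domination in reverse --- the same single quantifier flip via totality of $\re{\eps}$ within each memory class, combined with constancy of the type $t_x$ over level-$x$ nodes of $T$. The only inessential difference is that your appeals to everywhere-cofinality of $T$ and to monotonicity of $\eps$-edges are superfluous for this claim, since the domination quantifiers range over all of $\kappa^{(d-x)_\odd}$ rather than over tuples in $T$, so the completeness step applies pointwise with no extraction needed.
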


\begin{claimproof}
    Assume that $(q,m) \re{\eps:x+1} (q',m)$ is not a transition in $\B$.
    Consider a node $s_{<x}$ at level $x$ in $T$: it has type $t_x$ and thus $(q,m)$ does not weakly $\eps$-dominate $(q',m)$ at $s_{<x}$.
    This rewrites as
    \[
        \exists s'_{> x} \forall s_{> x} \qquad (q,m,s_{< x} s_{>x}) \re \eps (q',m,s_{< x} s'_{>x}) \text{ is not an edge in } \overline G.
    \]
    Now since $\overline G$ is $\eps$-complete, we get 
    \[
        \exists s'_{> x} \forall s_{> x} \qquad (q',m,s_{< x} s'_{>x}) \re \eps (q,m,s_{< x} s_{>x}) \tin \overline G,
    \]
    therefore $(q',m)$ strongly $\eps$-dominates $(q,m)$ at $s_{<x}$, which concludes.
\end{claimproof}

\subparagraph*{$\B$ recognises a subset of $W$.}

We now turn to the more involved part.
We start by proving the following two technical lemmas.

\begin{lemma}\label{lem:technical1}
Assume that $(q,m) \re{c:y} (q',m') \tin \B$ for some $y$.
There is a map $f:T \to T$ such that for all $s \in T$, 
\begin{itemize}
    \item there is an edge $(q,m,f(s)) \re c (q',m',s)$ in $\overline G$; and
    \item it holds that $f(s)_{< y} = s_{< y}$.
\end{itemize}
\end{lemma}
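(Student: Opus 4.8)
The plan is to prove Lemma~\ref{lem:technical1} by unfolding the definition of the transition $(q,m) \re{c:y} (q',m')$ in $\B$ and using the homogeneity and everywhere-cofinality of the tree $T$ to produce the map $f$. The key point is that the transition in $\B$ was defined purely in terms of the type $t_y$ (strong or weak $c$-domination), and since \emph{every} node at level $y$ in $T$ has this same type $t_y$, the domination relation is witnessed uniformly at every such node. I will first dispose of the parity of $y$: if $y=x$ is even, the transition means $(q,m)$ strongly $c$-dominates $(q',m')$ in $t_x$; if $y=x+1$ is odd, it means weak $c$-domination. The construction of $f$ will be slightly different in the two cases, but in both I will define $f(s)$ to agree with $s$ on its first $y$ odd coordinates (i.e.\ $f(s)_{<y}=s_{<y}$) and choose the remaining coordinates $f(s)_{>y}$ according to the relevant domination witness.

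Here is how I would carry out each case. Given $s \in T$, write $s = s_{<y}s_{>y}$ where $s_{<y}$ is a node at level $y$ in $T$ (for $y$ even) or the truncation to odd priorities $<y$ otherwise. In the even case $y=x$, strong $c$-domination at the node $s_{<x}$ reads $\exists \hat s_{>x}\, \forall s'_{>x}\; (q,m,s_{<x}\hat s_{>x}) \re c (q',m',s_{<x}s'_{>x})$ in $\overline G$; I set $f(s) := s_{<x}\hat s_{>x}$, which immediately gives the edge $(q,m,f(s)) \re c (q',m',s_{<x}s_{>x}) = (q',m',s)$ and the prefix condition $f(s)_{<x}=s_{<x}$. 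In the odd case $y=x+1$, weak $c$-domination reads $\forall s'_{>x}\, \exists \hat s_{>x}\; (q,m,s_{<x}\hat s_{>x}) \re c (q',m',s_{<x}s'_{>x})$; here I apply this with $s'_{>x} := s_{>x}$ to obtain the witness $\hat s_{>x}$, and again set $f(s):=s_{<x}\hat s_{>x}$, yielding the required edge and prefix agreement on coordinates $<x+1=y$. The one subtlety is that I must ensure $f(s)$ genuinely lands in $T$ and not merely in $\kappa^{\dodd}$: this is exactly where "everywhere cofinal"-ity of $T$ is needed, since the domination statements quantify over all of $\kappa^{(d-x)_\odd}$, and I must be able to select the witness $\hat s_{>x}$ so that $s_{<x}\hat s_{>x} \in T$. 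I expect one can either choose the witness inside the cofinal subtree $T$ directly, or note that the domination relations were defined on $\overline G$ with $s'_{>x}$ ranging over $T$'s leaves via everywhere-cofinality, so the witnesses may be taken in $T$.

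The main obstacle I anticipate is precisely this membership-in-$T$ bookkeeping: the type $t_y$ was extracted via Lemma~\ref{lem:everywhere_cofinal} which guarantees that nodes at each level share a type and that $T$ is everywhere cofinal, but the definitions of strong/weak domination quantify over \emph{all} tuples in $\kappa^{(d-x)_\odd}$, whereas $f$ must map into $T$. Reconciling these requires checking that the cofinality of the subtrees lets me realise the existential witnesses $\hat s_{>x}$ as tuples whose concatenation with $s_{<x}$ lies in $T$, and that the universal quantifier over $s'_{>x}$ can be restricted to leaves of $T$ without loss. I would handle this by observing that everywhere cofinality means the subtree rooted at $s_{<x}$ is cofinal in $\kappa^{(d-x)_\odd}$, so for any target $s'_{>x}$ appearing as a leaf-suffix in $T$ there is a dominating witness, and by monotonicity of $\re c$ along $\re \eps$ (the same argument used in Claim~\ref{claim:blowup-infinitary}, since going up in the order only helps) the witness can be pushed into $T$. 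Once this is settled, the two bulleted conclusions follow directly from the chosen definition of $f$, so the remaining work is entirely routine.
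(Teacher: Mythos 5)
Your proposal is correct and follows essentially the same route as the paper's proof: instantiate the (strong or weak) $c$-domination from the common type $t_x$ at the node $s_{<x}$, then use everywhere-cofinality of $T$ to replace the witness $\hat s_{>x}$ by some $\tilde s_{>x} \geq \hat s_{>x}$ with $s_{<x}\tilde s_{>x} \in T$, and conclude via the $\eps$-edge $(q,m,s_{<x}\tilde s_{>x}) \re{\eps} (q,m,s_{<x}\hat s_{>x})$ and $\eps$-transitivity of $\overline G$ --- exactly the membership-in-$T$ repair you correctly identified as the crux. The only cosmetic difference is that the paper avoids your case split on the parity of $y$ by observing that strong domination implies weak domination, so weak $c$-domination in $t_x$ (with $x+1$ the smallest odd priority $\geq y$) handles both cases at once.
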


\begin{proof}
Let $x+1$ be the smallest odd priority $\geq y$.
Since strong domination implies weak domination, and $(q,m) \re{c:y} (q',m')$, we have in any case that $(q,m)$ weakly $c$-dominates $(q',m')$ in $t_{x}$.
This means that for any $s'=s'_{< x}s'_{> x} \in T$, there exists $s_{> x}$ such that $(q,m,s'_{< x}s_{> x}) \re c (q',m',s'_{< x}s'_{> x}) \tin \overline G$.
Now since $T$ is everywhere cofinal, there exists $\tilde s_{> x} \geq s_{> x}$ such that $s'_{<x}\tilde s_{> x} \in T$, and we let $f(s')=s'_{<x}\tilde s_{> x}$.
Clearly $f(s')_{< y} = s'_{< x} = s'_{< y}$, and also, we have $(q,m,s'_{<x}\tilde s_{> x} ) \re \eps (q,m,s'_{<x}s_{> x})\tin G$ so the result follows from $\eps$-transitivity.
\end{proof}

\begin{lemma}\label{lem:technical2}
    Assume that $(q,m) \re{c:x} (q',m') \tin \B$ for some even $x$, and let $s_{< x}$ be a "node at level $x$" in $T$.
    There is $s_{> x} \in \kappa^{(d-x)_\odd}$ such that $s_{< x}s_{> x} \in T$ and for any $s'_{> x} \in \kappa^{(d-x)_\odd}$, $(q,m,s_{< x}s_{> x}) \re c (q',m',s_{<x}s'_{> x})$ in $\overline G$.
\end{lemma}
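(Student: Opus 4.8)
The plan is to unpack the hypothesis through the definition of $\B$ and the homogeneous tree $T$, extract a witness for strong domination, and then correct that witness so that it lands inside $T$, using "everywhere cofinal"ity together with the monotonicity of $\eps$-edges in the "cascade product".

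First I would apply the definition of the transitions of $\B$: the assumption $(q,m) \re{c:x} (q',m')$ for even $x$ means exactly that $(q,m)$ strongly $c$-dominates $(q',m')$ in the type $t_x$. Since $s_{<x}$ is a "node at level $x$" in $T$, it has type $t_x$, so $(q,m)$ strongly $c$-dominates $(q',m')$ at $s_{<x}$. Spelling out strong domination produces a tuple $s^0_{>x} \in \kappa^{(d-x)_\odd}$ with
\[
    (q,m,s_{<x}s^0_{>x}) \re c (q',m',s_{<x}s'_{>x}) \tin \overline G \quad \text{for every } s'_{>x} \in \kappa^{(d-x)_\odd}.
\]
The only defect is that the leaf $s_{<x}s^0_{>x}$ need not belong to $T$, which is precisely the difficulty the lemma must overcome; this is the same obstruction handled in Lemma~\ref{lem:technical1}, and I would repair it in the same way. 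Since $T$ is "everywhere cofinal", the "subtree rooted at" $s_{<x}$ is cofinal in $\kappa^{(d-x)_\odd}$, so I can choose $s_{>x} \geq s^0_{>x}$ with $s_{<x}s_{>x} \in T$; this is the tuple the lemma asks for.

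It then remains to transfer the domination from $s^0_{>x}$ up to $s_{>x}$. Because the $<x$ coordinates coincide and $s_{>x} \geq s^0_{>x}$, the full tuples satisfy $s_{<x}s_{>x} \geq s_{<x}s^0_{>x}$ in the lexicographic order on $\kappa^{\dodd}$, so by the monotonicity observation recorded at the start of Section~\ref{sec:def-of-B} (the added self-loops $q \re{\eps:d-1} q$ yield an edge $(q,s) \re \eps (q,s')$ in $\A \casc S$ whenever $s > s'$, which lifts to $G$ while preserving the "memory state") there is a path $(q,m,s_{<x}s_{>x}) \re{\eps^*} (q,m,s_{<x}s^0_{>x})$ in $G$. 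Composing this with the displayed edges and using that $\overline G$ is closed under $\eps^* c \eps^*$, I obtain $(q,m,s_{<x}s_{>x}) \re c (q',m',s_{<x}s'_{>x})$ in $\overline G$ for every $s'_{>x}$, as required. The argument is short, and the only genuine point is this last transfer; the expected obstacle is purely bookkeeping, namely keeping the lexicographic comparison correctly oriented so that the monotone $\eps$-move runs \emph{from} the $T$-leaf $s_{<x}s_{>x}$ \emph{down to} the domination witness $s_{<x}s^0_{>x}$, and not the other way around.
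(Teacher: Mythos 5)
Your proof is correct and follows essentially the same route as the paper's: extract the strong-domination witness from the type $t_x$, lift it into $T$ by everywhere cofinality, and transfer the edges via the monotone $\eps$-edges and $\eps$-transitivity of $\overline G$. Your version merely spells out the details (the orientation of the $\eps$-move and the closure of $\overline G$ under $\eps^* c\, \eps^*$) that the paper compresses into ``we conclude (as previously) using $\eps$-transitivity.''
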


\begin{proof}
    From the definition of strong domination, there is $\tilde s_{>x}$ such that for all $s'_{>x}$, it holds that $(q,m,s_{<x} \tilde s_{>x}) \re c (q',m',s_{<x} s'_{>x})$ in $\overline G$. By everywhere confinality of $T$, there is $s_{>x} 
    \geq \tilde s_{>x}$ such that $s_{<x} s_{>x} \in T$.
    We conclude (as previously) using $\eps$-transitivity.
\end{proof}

We are now ready for the main argument.

\begin{lemma}\label{lem:language-containement}
    The language of $\B$ is contained in $W$.
    Moreover, there is no accepting run labelled by words in $\Sigma^* \eps^\omega$.
\end{lemma}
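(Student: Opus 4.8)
The plan is to take an arbitrary accepting run in $\B$ and decorate it with signature components from $\Sgraph$ so as to obtain a genuine path in $\overline G$; since $\overline G$ satisfies $W$, the label of the run will lie in $W$. Concretely, consider an accepting run
\[
    (q_0,m_0) \re{c_0:y_0} (q_1,m_1) \re{c_1:y_1} \cdots
\]
in $\B$, and let $x = \liminf_i y_i$, which is even by the acceptance condition. First I would discard a finite prefix so that all $y_i \geq x$, and identify the infinitely many indices $i$ with $y_i = x$; these split the run into finite blocks. The goal is to produce signatures $s_0, s_1, \dots \in T$ with $(q_i,m_i,s_i) \re{c_i} (q_{i+1},m_{i+1},s_{i+1})$ in $\overline G$ for all $i$, so that the decorated sequence is a path in $\overline G$ from a vertex satisfying $W$.

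The key device is to fix a single common prefix $s_{<x}$ (a node at level $x$ in $T$, which exists by everywhere cofinality) that is shared by every $s_i$; this is exactly what the uniform type $t_x$ and Lemma~\ref{lem:everywhere_cofinal} buy us. For indices where $y_i = x$, I would invoke Lemma~\ref{lem:technical2}: strong $c_i$-domination at the node $s_{<x}$ yields a tail $s^{(i)}_{>x}$ with $s_{<x}s^{(i)}_{>x} \in T$ such that the required edge into $(q_{i+1},m_{i+1}, s_{<x}s'_{>x})$ exists in $\overline G$ for \emph{every} choice $s'_{>x}$ of the successor tail; this is the forward step that anchors each block. For indices inside a block where $y_i > x$, I would run a backward induction using Lemma~\ref{lem:technical1}: given the already-fixed successor signature $s_{i+1} = s_{<x}s^{(i+1)}_{>x} \in T$, the map $f$ provides $s_i = f(s_{i+1}) \in T$ with $(q_i,m_i,s_i) \re{c_i}(q_{i+1},m_{i+1},s_{i+1})$ in $\overline G$ and with $f(s_{i+1})_{<y_i} = (s_{i+1})_{<y_i}$. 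Because $y_i > x$ throughout a block, the low coordinates indexed below $x$ are left untouched, so the common prefix $s_{<x}$ is preserved all the way back to the start of the block, where it meets the $s_{<x}$ supplied by the $y_i = x$ step. Since the blocks are finite (the first case occurs infinitely often), this backward induction is well-founded, matching the picture described in the overview (Figure~\ref{fig:backtrack}).

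The main obstacle, and the point requiring care, is the compatibility at block boundaries: the forward step at a $y_i = x$ index must hand off a successor tail that the subsequent backward induction can consume, and the backward induction must deliver, at the left end of a block, precisely the prefix $s_{<x}$ that the preceding $y = x$ step's universal quantifier ($\forall s'_{>x}$) can absorb. The alignment works because Lemma~\ref{lem:technical2} allows \emph{any} successor tail $s'_{>x}$, so whatever signature the backward induction produces at the block's start is acceptable, while Lemma~\ref{lem:technical1}'s prefix-preservation property (agreement below $y_i$, hence below $x$) guarantees the common prefix is never disturbed inside a block. Assembling these decorations yields the desired $\overline G$-path, whence $w = c_0 c_1 \cdots \in W$. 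For the final sentence, I would note that a run labelled by a word in $\Sigma^* \eps^\omega$ would, after its last non-$\eps$ letter, consist of $\eps$-transitions; the same decoration argument (or directly the well-foundedness of $\overline G$, which has no infinite $\re \eps$ path among distinct signatures) shows such a run cannot be accepting, so no word in $\Sigma^*\eps^\omega$ is accepted.
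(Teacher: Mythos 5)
Your decoration scheme is exactly the paper's argument: fix one node $s_{<x}$ at level $x$ in $T$, use Lemma~\ref{lem:technical2} at the indices where $y_i = x$, and run the backward induction of Lemma~\ref{lem:technical1} inside each finite block, with the prefix-preservation of $f$ and the universal quantifier over successor tails making the block boundaries match. That part is correct. But your very first step --- ``discard a finite prefix so that all $y_i \geq x$'' --- creates a genuine gap. After discarding, your decoration produces a path in $\overline G$ labelled by the \emph{tail} $c_{i_0}c_{i_0+1}\cdots$ of the run's label, so what you actually prove is $\pi_\Sigma(c_{i_0}c_{i_0+1}\cdots) \in W$. The lemma requires $\pi_\Sigma(c_0c_1\cdots) \in W$, and $W$ here is an arbitrary $\BCSigma$ objective, not a prefix-independent one (take e.g. $W = a\Sigma^\omega$): membership of a tail in $W$ says nothing about membership of the whole word. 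So your concluding sentence ``whence $w = c_0c_1\cdots \in W$'' does not follow from what you constructed.

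The fix is what the paper does: discard nothing. Keep the original indexing, let $i_0$ be such that $y_i \geq x$ for all $i \geq i_0$, decorate indices $\geq i_0$ as you did, and decorate the initial segment $i < i_0$ by simply continuing the backward induction with Lemma~\ref{lem:technical1}: that lemma applies to transitions carrying \emph{any} priority $y$ (including priorities $< x$), and its prefix-preservation property is only needed, and only invoked, for $i \geq i_0$ in order to keep the common prefix $s_{<x}$. This yields a path in $\overline G$ starting at $(q_0,m_0,s_0)$ whose label is the full word, and the conclusion follows because every vertex of $\overline G$ satisfies $W$. (Your argument for the second assertion, about words in $\Sigma^*\eps^\omega$, is unaffected by this issue: a tail of such a word is still eventually all $\eps$, so well-foundedness of $\overline G$ already gives the contradiction.)
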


\begin{proof}
    Take an accepting run
    \[
        (q_0,m_0) \re{c_0:y_0} (q_1,m_1) \re{c_1:y_1} \dots
    \]
    in $\B$.
    Let $x=\liminf_i y_i$ (it is even since the run is accepting), and let $i_0$ be such that $y_i \geq x$ for $i \geq i_0$.

    As explained in the proof overview, our goal will be to endow each $(q_i,m_i)$ with some $s_i \in T$ such that for all $i$, $(q_i,m_i,s_i) \re{c_i} (q_{i+1},m_{i+1},s_{i+1}) \tin \overline G$.
    This implies the result since $\overline G$ "satisfies" $W$, and since it does not have paths labelled by words in $\Sigma^* \eps^\omega$ by well-foundedness.
    We pick an arbitrary node $s_{< x}$ at level $x$ in $T$ and proceed as follows:
    \begin{itemize}
        \item for each $i\geq i_0$ such that $y_i = x$, we let $s_{> x}$ be obtained from Lemma~\ref{lem:technical2} and set $s_i = s_{< x} s_{> x}$;
        \item for any other $i$, we proceed by backwards induction within each block (see Figure~\ref{fig:backtrack}) and let $s_i=f(s_{i+1})$, where $f$ is obtained by applying Lemma~\ref{lem:technical1} to transition $(q_i,m_i) \re{c_i:y_i} (q_{i+1},m_{i+1})$.
    \end{itemize}

    \begin{figure}[h]
        \begin{center}
            \includegraphics[width=0.9\linewidth]{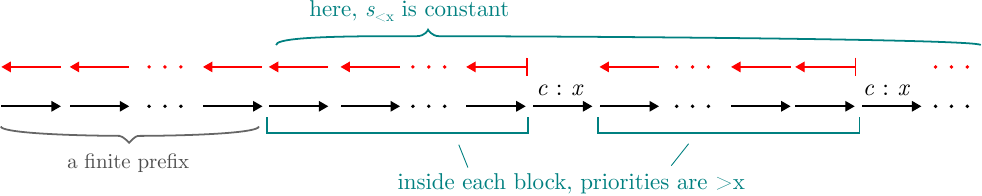}
        \end{center}
        \caption{The run in $\B$ (in black), and the order in which the $s_i$'s are computed (in red).}\label{fig:backtrack}
    \end{figure}

    For $i$'s as in the second item, it follows from Lemma~\ref{lem:technical1} that $(q_i,m_i,s_i) \re{c_i} (q_{i+1},m_{i+1},s_{i+1})$, and moreover, assuming $i \geq i_0$, that $(s_i)_{< x} = (s_{i+1})_{< x}$.
    Thus for all $i \geq i_0$ we have $(s_i)_{< x} = s_{< x}$ hence we also have, by Lemma~\ref{lem:technical2}, that $(q_i,m_i,s_i) \re{c_i} (q_{i+1},m_{i+1},s_{i+1}) \tin \overline G$ for $i$'s as in the first item. 
\end{proof}

\subsection{Existence of $\eps$-completable automata: finitary proof}\label{sec:finitary}

We now adapt Proposition~\ref{prop:existenceEpsComplete} to the finitary setting, for "$\omega$-regular" objectives.

\begin{proposition}\label{prop:existenceEpsComplete-finitary}
    Let $W$ be an "$\omega$-regular" objective recognised by a finite automaton $\A$ of index $d$.
    Assume that $W$ has ("chromatic@@mem") memory $\leq k$ on games of size $\leq 2^\kappa$, where 
    $$\kappa=k^{|Q|^2(|\Sigma| +1)} (3^{{k^2|Q|^2 (|\Sigma|+1)}^{d/2+1}}2|Q|^2)^{k2^{|Q|^{2}}}.$$
    Then there exists a "$k$-wise $\eps$-complete" automaton of size $\leq k |Q|$ recognising $W$.
\end{proposition}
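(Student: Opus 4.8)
The plan is to mirror the infinitary proof of Proposition~\ref{prop:existenceEpsComplete} but to replace the uncountable cardinal $\kappa=2^{\aleph_0}$ with the explicit finite bound given in the statement, and to replace Lemma~\ref{lem:everywhere_cofinal} (the extraction of an everywhere-cofinal homogeneous subtree out of $\kappa^{\dodd}$) with a quantitative, finitary version that works over a \emph{finite} signature tree. Concretely, I would start from the finite automaton $\A$ (with state set $Q$ and index $d$), take $S=\Skappad$ for the finite value of $\kappa$ specified, and form the cascade product $\A\casc S$; since $\kappa$ is now finite this product has size $\leq |Q|\cdot\kappa \leq 2^\kappa$, so the hypothesis that $W$ has memory $\leq k$ on games of size $\leq 2^\kappa$ lets me invoke Theorem~\ref{thm:structuration} exactly as before. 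This yields a well-founded, $k$-wise $\eps$-complete $k$-blowup $G$ of $\A\casc S$, which I then close under $\eps$-transitivity to get $\overline G$, and the notions of strong and weak $c$-domination at a node transfer verbatim.

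The heart of the finitary argument is the replacement of Lemma~\ref{lem:everywhere_cofinal}. There, the uncountability of $\kappa=2^{\aleph_0}$ against a countable label set was used to find a cofinal monochromatic level; here I need a finite Ramsey/pigeonhole computation. The number of possible node types is bounded by $3^{(k|Q|)^2(|\Sigma|+1)}$ (three possibilities --- strong, weak-but-not-strong, or no domination --- for each of the $(k|Q|)^2$ ordered pairs of states and each of the $|\Sigma|+1$ letters), so at each of the $d/2$ even levels I must split the branching by this many classes and still retain a subtree whose branching is large enough to keep recursing and to guarantee cofinality of subtrees. Unwinding this level-by-level recursion over the $d/2+1$ relevant levels and accounting for the extra factors coming from the structuration step (the $2^{|Q|^2}$ exponent and the $2|Q|^2$ factor, which reflect the number of distinct transition-behaviours and of memory assignments that must be simultaneously homogenised) is precisely what forces the triply-exponential value of $\kappa$ displayed in the statement. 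The output of this finitary extraction is a sub-signature-tree $T\subseteq\kappa^{\dodd}$ that is large enough to be everywhere cofinal in the finite sense and on which the type is constant at each level; this is the step I expect to be the main obstacle, as it requires carefully propagating a ``survives the pigeonhole at every level'' branching bound backwards to pin down the initial branching $\kappa$.

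Once $T$ and the per-level types $t_x$ are in hand, the definition of $\B$ is identical to the infinitary case: $V(\B)=Q\times k$, with $(q,m)\re{c:x}(q',m')$ when $(q,m)$ strongly $c$-dominates $(q',m')$ in $t_x$ and $(q,m)\re{c:x+1}(q',m')$ when it weakly dominates, so $\B$ has size $\leq k|Q|$ as required. The three verifications --- that $\B$ is a $k$-blowup of $\A$, that it is $k$-wise $\eps$-complete, and that $L(\B)\subseteq W$ --- then go through with the \emph{same} proofs as Claims and Lemmas~\ref{lem:technical1}, \ref{lem:technical2} and~\ref{lem:language-containement}, because those arguments only ever use everywhere-cofinality of $T$ and constancy of the type per level, never the infinitude of $\kappa$. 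In particular the ``good $m'$'' chain arguments become arguments over finite decreasing families of non-empty subsets of $k$ (still with non-empty intersection), and the backward induction over finite blocks between consecutive occurrences of $y_i=x$ in the accepting run is already finite in the original proof, so it transfers unchanged. The chromatic case is handled exactly as before: since Theorem~\ref{thm:structuration} delivers a chromatic $G$, the chromatic update map descends to $\B$.

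I would therefore organise the write-up as: (1) state and prove the finitary extraction lemma with the explicit branching bookkeeping that produces $\kappa$; (2) define $\overline G$ and the domination types over the finite $S$; (3) define $\B$ and verify its size; and (4) cite the already-established Claims and Lemmas~\ref{lem:technical1}--\ref{lem:language-containement} to conclude, noting that each only relies on everywhere-cofinality and per-level constancy of types. The genuinely new content is confined to step~(1), and the delicate point there is to verify that the recursion's branching demands, compounded over $d/2$ levels and combined with the $2^{|Q|^2}$-many structuration behaviours, are exactly met by the stated triply-exponential $\kappa$.
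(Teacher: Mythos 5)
Your skeleton (cascade with a finite $\Skappad$, structuration via Theorem~\ref{thm:structuration}, a pigeonhole version of Lemma~\ref{lem:everywhere_cofinal} to make types constant per level, then the same definition of $\B$) matches the paper's outline, but the central claim of your step~(4) --- that the domination definitions ``transfer verbatim'' and that Lemmas~\ref{lem:technical1}, \ref{lem:technical2} and~\ref{lem:language-containement} ``go through with the same proofs'' because they ``never use the infinitude of $\kappa$'' --- is false, and this is exactly where the paper has to do genuinely new work. The $\exists\forall$/$\forall\exists$ domination definitions break over a finite $\kappa$: weak domination requires, for \emph{every} $s'_{>x}$, some $s_{>x}$ with a strictly larger coordinate, which fails at the maximal element, and the map $f$ of Lemma~\ref{lem:technical1} uses everywhere-cofinality to climb to some $\tilde s_{>x}\geq s_{>x}$ inside $T$ at \emph{each} backward step; in the infinitary proof the blocks between occurrences of the accepting priority $x$ are finite but \emph{unbounded}, so with finite branching you run out of tree after boundedly many climbs. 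The paper repairs this with two ingredients you are missing: it changes the definitions of strong/weak domination to statements about \emph{consecutive children} $s_{x+1}^{(i)},s_{x+1}^{(i+1)}$ of a node, and it invokes Lemma~\ref{lem:simple-runs} to reduce language containment to $2|Q|^2$-simple runs, which bounds every block by the tree's branching $2|Q|^2$ --- that factor in $\kappa$ comes from simple runs, not, as you suggest, from the structuration step.

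The second missing ingredient is $\eps$-completeness of $\B$. In the infinitary proof this is free: the negation of the $\forall\exists$ form of weak domination is an $\exists\forall$ statement, which $\eps$-completeness of $\overline G$ flips into strong domination in the reverse direction. With the children-indexed definitions this duality evaporates --- the absence of one edge between consecutive children yields only one reverse edge, not strong domination from child $0$ to all children --- and the paper needs the entire regular-words machinery of Section~\ref{sec:regular-extraction} (Lemmas~\ref{lem:extraction-regular-words} and~\ref{lem:ensuring-eps-completability}): the $\re\eps$-order on vertices $(q,m,s_{<x}s_{x+1}0_{>x+2})$ is read as a word over $Q$, and repeated square-root extractions (one per unordered pair of states, compounded over the $k$ memory values) select children on which, for every pair $(q,q')$, either one is strongly before the other or they are interleaved. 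This is where the dominant exponent $k2^{|Q|^{2}}$ in $\kappa$ actually arises, so your bookkeeping in step~(1), which attributes it to homogenising structuration behaviours, could not reproduce the stated bound. In short: your proposal confines the novelty to the homogenisation lemma (which is indeed the easy, routine adaptation, Lemma~\ref{lem:highly-branching-extraction}), while the real obstacles --- redefining domination, bounding blocks via simple runs, and the regular-word extraction for $\eps$-completeness --- are unaddressed and would cause your plan to fail as written.
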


Note that the size of $\kappa$ is doubly exponential in the size of $\A$, and therefore the bound on the size of the game graphs is triply exponential.
This section is devoted to proving the proposition.
The general idea of the proof is similar to the previous section (see Section~\ref{sec:overview} for a detailed overview), except that we now manipulate finite objects.
In particular, the notion of being (everywhere) cofinal is now irrelevant, and we will need more careful extraction mechanisms.

We will rely on the following standard lemma, which allows to bound the size of each block in Figure~\ref{fig:backtrack}. 
\AP We say that an accepting run in a given automaton is ""$\ell$-simple"" if its accepting priority $x$ occurs in every infix of size $\leq \ell$.

\begin{lemma}\label{lem:simple-runs}
Let $\A$ and $\B$ be automata of size $\leq n$.
Assume that every "$(2n^2)$-simple" accepting run in $\B$ is labelled by a word from $L(\A)$.
Then $L(\B) \subseteq L(\A)$.
\end{lemma}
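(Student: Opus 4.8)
The plan is to prove the contrapositive: assuming $L(\B) \not\subseteq L(\A)$, I want to exhibit a $(2n^2)$-simple accepting run in $\B$ whose label lies outside $L(\A)$. So suppose there is a word $w \in L(\B) \setminus L(\A)$, witnessed by an accepting run $\rho$ in $\B$. The run $\rho$ has some $\liminf$ accepting priority $x$, which is even, and $x$ occurs infinitely often along $\rho$. The difficulty is that $\rho$ need not be $(2n^2)$-simple: between two consecutive occurrences of priority $x$ there may be arbitrarily long infixes where $x$ is absent (and all priorities seen are $> x$, since only finitely many are $< x$ and we may discard a prefix). The goal is to shortcut these long $x$-free infixes so as to make the resulting run $(2n^2)$-simple, while keeping it accepting in $\B$ and keeping its label outside $L(\A)$.

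First I would normalise: pass to a suffix of $\rho$ after which no priority $< x$ is ever seen, so that $x = \liminf$ is attained and every priority along the relevant part is $\geq x$. Between consecutive visits to priority $x$, consider each maximal infix of transitions all carrying priorities $> x$. Along such an infix we track pairs of states: the state of $\A$'s (determinisable) run — or more robustly, since $L(\A)$ is given, the relevant information is the reachability relation on pairs $(p,q) \in V(\A) \times V(\B)$ reading the label, together with the minimal priority seen in each component. The bound $2n^2$ strongly suggests a pumping argument on the product of the two automata: a configuration is a pair of states (giving $\leq n^2$ possibilities), and the factor $2$ tracks whether, since the last occurrence of $x$, an accepting priority for $\A$ has been produced. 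If an infix of length $> 2n^2$ sees no priority $x$, then by pigeonhole some product-configuration (state of $\A$ paired with state of $\B$, together with the boolean recording whether $\A$ has so far seen a better-or-equal priority) repeats.

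The key step is then the cut-and-splice: at a repeated product-configuration inside an $x$-free block, I would excise the loop in between. Removing this loop yields a shorter run in $\B$ that is still a valid run (the $\B$-state matches at the splice point) and still accepting — crucially, excising a block of priorities all $> x$ does not change the $\liminf$, which remains $x$. I must simultaneously argue the label still lies outside $L(\A)$: this is where tracking $\A$'s behaviour matters. The repeated configuration must be chosen so that excising the loop does not turn a non-accepting $\A$-run (on the new label) into an accepting one; recording in the configuration whether the loop portion contributes an even priority $\leq$ the eventual accepting priority of $\A$ is what the extra factor ensures. Iterating this excision over every over-long $x$-free block produces a run in which every infix of size $> 2n^2$ contains priority $x$, i.e.\ a $(2n^2)$-simple accepting run of $\B$ whose label is still in $L(\B) \setminus L(\A)$, contradicting the hypothesis. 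Hence $L(\B) \subseteq L(\A)$.

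I expect the main obstacle to be the bookkeeping that guarantees the spliced label remains \emph{outside} $L(\A)$: pumping down a run in $\B$ is routine, but one must ensure the determinised (or nondeterministically minimal) run of $\A$ on the modified word does not become accepting. This forces the product construction to record the minimal priority that $\A$ could produce across the excised loop, and to only cut loops whose removal cannot improve $\A$'s $\liminf$ below its threshold — which is precisely what fixes the state-count to $n^2$ and the length bound to $2n^2$. Getting the interface between the two automata's priority tracking correct, rather than merely pumping the $\B$-component, is the delicate part.
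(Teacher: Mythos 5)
Your skeleton --- run the product of $\A$ and $\B$, then shorten the blocks between consecutive occurrences of the accepting priority by excising repeated product configurations --- is exactly the paper's proof, but the bookkeeping you yourself flag as the delicate part is aimed at the wrong invariant, with the direction of the danger inverted. Excision only \emph{deletes} transitions, so it can never lower the liminf of the $\A$-component; it can only raise it. Writing $\ymin$ (odd) for the liminf of the tracked rejecting $\A$-run and $\zmin$ (even) for that of the $\B$-run, the failure mode is not that cutting could ``improve $\A$'s liminf below its threshold'', but that cutting deletes all but finitely many occurrences of $\ymin$, raising the $\A$-liminf to an even value and thereby making the spliced label accepted by $\A$. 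Your boolean (``whether an accepting priority for $\A$ has been produced since the last $x$'', resp.\ ``the minimal priority $\A$ could produce across the excised loop'') does not guard against this; what must be preserved is an occurrence of the odd priority $\ymin$ in infinitely many blocks. The paper's mechanism is concrete: decompose the run as $\upsilon_1 e_1 \upsilon_2 e_2 \dots$ where each $e_i$ carries $\zmin$ in the $\B$-component, and inside each $\upsilon_i$ containing $\ymin$, fix one occurrence of $\ymin$ and shorten the parts before and after it to length $\leq n^2-1$ each by removing cycles of the product (pigeonhole on the $n^2$ state pairs alone, no boolean), giving blocks of length $\leq 2(n^2-1)+1 \leq 2n^2$. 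Your pigeonhole-with-flag variant can be repaired to give the same bound: let $b$ record whether $\ymin$ has occurred since the start of the current block; $b$ is monotone within a block, so cutting between two positions sharing $(p,q,b)$ never removes the last surviving occurrence of $\ymin$ in a block that had one.

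Two further slips. First, discarding a prefix of $\rho$ is not permissible: a run of $\B$ starts at the initial state, and the lemma's hypothesis applies to accepting runs, so the early blocks (which may still contain priorities $<x$) must be shortened by the same cycle removal rather than dropped; the finitely many small priorities they contain are harmless both for the liminfs and for $(2n^2)$-simplicity, which only asks that $x$ occur in every window of length $2n^2$. Second, your hesitation between tracking a single $\A$-state and a reachability relation over pairs should be resolved as the paper does: pair the $\B$-run with a single run of $\A$, and note that the final contradiction (the spliced label lies in $L(\A)$ by hypothesis, yet carries a rejecting $\A$-run with liminf $\ymin$) is sound precisely when that run is \emph{the} run of a deterministic $\A$; tracking a reachability relation would blow past the $n^2$ configuration count that the bound $2n^2$ depends on. With the corrected flag and the prefix retained, your argument becomes the paper's proof, with the length bound obtained by pigeonhole-with-flag instead of the paper's split-at-$\ymin$ decomposition.
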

\begin{proof}
    Let $\A\times \B$ be the product automaton of $\A$ and $\B$. That is, $\A\times \B$ is the graph with states $Q^\A\times Q^\B$ and transitions $(q,p)\re{a:(y,z)}(q',p')$ if $q\re{a:y}q'$ in $\A$ and $p\re{a:z}p'$ in $\B$.
    We say that a run $(q_1,p_1)\re{a_1:(y_1,z_1)}(q_2,p_2)\re{a_2:(y_2,z_2)}\dots$ in $\A\times \B$ is $\A$-accepting if $y_1y_2\dots \in \Parity$, and $\B$-accepting if $z_1z_2\dots \in \Parity$.

    Assume by contradiction that there is a run $\rho$ as above that is $\B$-accepting but $\A$-rejecting.
    Let $y_{\min} = \liminf y_i$ (which is odd) and $z_{\min} = \liminf z_i$ (which is even).
    \AP In the following, the term ""subrun"" means a subword of a run  (that is, a path obtained by removing edges of the given run) that is also a run.
    We will build a "subrun" of $\rho$ having as accepting priorities $y_{\min}$ and $z_{\min}$, in each component, but such that the projection into $\B$ is a "$(2n^2)$-simple" run, leading to a contradiction.
    
    Note that for every non-empty finite run $\upsilon = (q,p)\rp{u}(q',p')$ in $\A\times \B$, there exists a "subrun" $\upsilon' = (q,p)\rp{u'}(q',p')$ with same starting and ending point and of size $\leq n^2$. This is obtained just by recursively removing internal cycles (i.e. proper infixes starting and ending in the same state).
    Also, if $\upsilon = (q,p)\rp{u}(q',p')$ is a finite run containing one occurrence of priority $y_{\min}$ in the first component, there exists a "subrun" $\tilde{\upsilon} = (q,p)\rp{u'}(q',p')$ also containing some occurrence of $y_{\min}$ and of size $\leq 2n^2$.
    Indeed, it suffices to pick one occurrence of $y_{\min}$ in $\upsilon$ , and apply the previous remark to the prefix and the suffix of the run split in the occurrence of $y_{\min}$. That is, if $\upsilon = \upsilon_1 e \upsilon_2$, with $y_{\min}$ appearing in the first component of $e$, then $\tilde{\upsilon} = \upsilon_1' e \upsilon_2'$. 
    As we can assume that $e$ does not appear in $\upsilon_1'$ nor in $\upsilon_2'$ (by the previous argument), we have $|\tilde{\upsilon}| \leq 2(n^2-1)+1$.

    Consider the decomposition of the run $\rho = \upsilon_1 e_1  \upsilon_2 e_2  \dots$, where each transition $e_i$ carries priority $z_{\min}$ in the second component.
    The run $\rho' = \widetilde{\upsilon_1} e_1  \widetilde{\upsilon_2} e_2  \dots$ is as desired.
\end{proof}

As in the infinitary proof, we let $\Sgraph$ denote $\Skappad$ (which is finite), and apply Theorem~\ref{thm:structuration} to the product $\A \casc S$.
This gives a "$k$-blowup"  $G$ of $\A \casc S$ which is "$k$-wise $\eps$-complete" and  "satisfies" $W$.
As previously, we assume without loss of generality that $\A$ contains all transitions of the form $q \re{\eps:d-1} q$, and therefore for $s>s'$ in $\Sgraph$, since $s \re {d-1} s'$ in $\Sgraph$ we have edges $(q,m,s) \re{\eps} (q,m,s')$ in $G$.

We again let $\overline G$ be obtained by closing $G$ by transitivity, i.e. adding transitions $(q,m,s) \re c (q',m',s')$, for $c \in \Sigmaeps$, whenever $(q,m,s) \re{\eps^* c \eps^*} (q',m,s')$ in $G$.
Note that $\overline G$ also  "satisfies" $W$.
Without loss of generality (up to removing some $\eps$-edges while preserving being "$k$-wise $\eps$-complete"), we assume that for each $m$, $\re \eps$ defines a linear order over $Q \times \{m\} \times S$ in $\overline G$.

\subsubsection{Domination and definition of the automaton}\label{sec:domination-and-overview-finitary}

As in the previous proof, in order to define $\B$, we rely on a notion of domination; however we now need a more precise definition.
Consider a "signature"\footnote{Throughout the proof, we drop the word ``signature'' and only talk of ``trees'' since these are the only kind of trees we consider here.} tree $T \subseteq \kappa^{\d_\odd}$ with "branching" $b$; the following definitions are with respect to $T$, but we omit $T$ in the notations for clarity.
Let $s_{<x}$ be a node at level $x<d$ in $T$ and let $s_{<x}s_{x+1}^{(0)},\dots,s_{<x}s_{x+1}^{(b-1)}$ denote its children.
Recall that $0_{>x}$ denotes the sequence with $0$s indexed at odd positions $>x$.
Then we say that $(q,m)$ strongly $c$-dominates $(q',m')$ at $(s_{<x},i)$ if
\[
    (q,m,s_{<x} s_{x+1}^{(0)} 0_{>x+2}) \re c (q',m',s_{<x} s_{x+1}^{(i)}0_{>x+2}) \tin \overline G
\]
and that $(q,m)$ weakly $c$-dominates $(q',m')$ at $(s_{<x},i)$ if
\[
    (q,m,s_{<x} s_{x+1}^{(i+1)} 0_{>x+2}) \re c (q',m',s_{<x} s_{x+1}^{(i)}0_{>x+2}) \tin \overline G.
\]
We say that $(q,m)$ strongly (resp. weakly) $c$-dominates $(q',m')$ at $s_{<x}$ if this is the case for all $i \in b$, and that $(q,m)$ strongly (resp. weakly) $c$-dominates $(q',m')$ at level $x$ if this is the case for all nodes $s_{<x}$ at level $x$.

Given a "tree@@sign" $T \subseteq \kappa^{\d_\odd}$, we define an automaton $\B_T$ in the same way as in the previous section, with the adapted definitions of domination:
We put $V(\B_T) = Q \times k$, and for each even $x \in d$ and $c \in \Sigmaeps$, we define transitions by
\[
    \begin{array}{lcrl}
    (q,m) &\re{c:x}& (q',m') & \tif (q,m) \text{ strongly $c$-dominates $(q',m')$ at level $x$ in $T$}, \\
    (q,m) &\re{c:x+1}& (q',m') & \tif (q,m) \text{ weakly $c$-dominates $(q',m')$ at level $x$ in $T$}.
    \end{array}
\]

To obtain a correct automaton $\B_T$, the challenge lies in extracting a "tree@@sign" $T$ whose "branching" matches the size $2|Q|^2$ of blocks given by Lemma~\ref{lem:simple-runs}, and such that the following holds.
\begin{enumerate}
    \item\label{item:blowup} For every $(q,m) \in V(\B_T)$, every transition $q \re {c:x} q'$ in $\A$ (resp. $\re {c:x+1}$) and every node $s_{<x}$ at level $x$, there is $m'$ such that $(q,m)$ strongly (resp. weakly) $c$-dominates $(q',m')$ at $s_{<x}$.
    This will ensure that $\B_T$ is indeed a "$k$-blowup"  of $\A$.
    \item\label{item:eps-complete} For every "memory state" $m$, every node $s_{<x}$ at level $x$ and every pair of states $(q,q')$, either $(q,m)$ strongly $\eps$-dominates $(q',m)$ or $(q',m)$ weakly $\eps$-dominates $(q,m)$.
    This will ensure that $\B_T$ is $\eps$-complete.
    \item\label{item:uniformity} The fact that $(q,m)$ strongly (or weakly) $c$-dominates $(q',m')$ at $s_{<x}$ depends only on the level $x$, but not on the choice of the node $s_{<x}$.
    That is, in the sought tree $T$, $(q,m)$ strongly (or weakly) $c$-dominates $(q',m')$ at level $x$ if this is the case \emph{for some} node in level $x$.
\end{enumerate}

A first extraction allows to guarantee item~\ref{item:blowup}; this is shown in Section~\ref{sec:blowup-extraction} below.
However, choosing $T$ such that item~\ref{item:eps-complete} holds requires some combinatorics which are presented in Section~\ref{sec:regular-extraction}.
We then adapt the previous cofinal extraction lemma (Lemma~\ref{lem:everywhere_cofinal}) to the finitary setting, and apply it to obtain the third item in Section~\ref{sec:uniformity}.
Proposition~\ref{prop:existenceEpsComplete-finitary} is finally established in Section~\ref{sec:correctness-of-B-finitary}.

\subsubsection{Guaranteeing that $\B$ is a $k$-blowup}\label{sec:blowup-extraction}

We prove the first item from the explanation above.

\begin{lemma}\label{lem:guaranteeing-k-blowup-finitary}
    There exists a "tree@@sign" $T \subseteq \kappa^{\d_\odd}$ with "branching" $$b_1=\kappa/k^{|Q|^2(|\Sigma| +1)}=(3^{{k^2|Q|^2 (|\Sigma|+1)}^{d/2+1}}2|Q|^2)^{k2^{|Q|^{2}}}$$ such that for every $(q,m) \in Q \times k$ and every transition $q \re{c:x} q'$ in $\A$ (resp. $\re{c:x+1}$) and every node $s_{<x}$ at level $x$ in $T$, there is $m'$ such that $(q,m)$ strongly (resp. weakly) $c$-dominates $(q',m')$ at $s_{<x}$.
\end{lemma}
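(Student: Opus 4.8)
The plan is to build $T$ top-down by a node-local extraction, exploiting two facts: a decoupling between levels and a monotonicity coming from the $\eps$-edges of $\overline{G}$. For the decoupling, note that every domination predicate at level $x$ only refers to vertices of the shape $s_{<x}\,t\,0_{>x+2}$, so whether the conclusion holds at a node $s_{<x}$ depends solely on $s_{<x}$ and on the \emph{set of $(x+1)$-coordinates} chosen for its children, and not at all on the subtree below. I would therefore choose, independently at each node, a set of $b_1$ values in $\kappa$ to serve as its children's $(x+1)$-coordinates, and verify the property node by node. For the monotonicity, recall that we added all transitions $q \re{\eps:d-1} q$, so $\overline{G}$ contains $(q,m,u) \re{\eps} (q,m,u')$ whenever $u > u'$ in $S$, and $\overline{G}$ is closed under $\eps$-transitivity. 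Applied to vertices differing only in their $(x+1)$-coordinate, this shows that if $(q,m,s_{<x}\,a\,0_{>x+2}) \re{c} (q',m',s_{<x}\,b\,0_{>x+2})$ in $\overline{G}$, then the same edge survives after replacing $a$ by any $a' \ge a$ (pre-compose with an $\eps$-edge) or $b$ by any $b' \le b$ (post-compose). In short, the set of good target memories $m'$ grows when the source coordinate increases or the target coordinate decreases.

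\textbf{Strong (even) case, free.} Strong domination fixes the source at the first child and lets the target range over all children. For each target child, the $S$-edge of even priority $x$ joining the two vertices exists (their prefixes $s_{<x}$ coincide), so the blowup property of $G$ gives a non-empty set of good $m'$; by target-monotonicity these sets are nested, the smallest being that of the child with the largest $(x+1)$-coordinate. Any $m'$ in this smallest, still non-empty, set strongly $c$-dominates at every child. Hence the strong requirement holds for an arbitrary choice of children and costs no extraction.

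\textbf{Weak (odd) case, the main obstacle.} Here the source is the child $i{+}1$ and the target is the child $i$, so the source vertex changes from pair to pair and the nesting argument of the strong case no longer applies directly; this is exactly where extraction is needed. I would colour each candidate coordinate $t \in \kappa$ by its \emph{domination profile}, namely the assignment giving, for each transition of $\A$ (together with each source memory), a canonical good target memory read off from the monotone good-sets above. There are at most $k^{|Q|^2(|\Sigma|+1)}$ such profiles, so by pigeonhole one colour class has size $\ge \kappa / k^{|Q|^2(|\Sigma|+1)} = b_1$, and I keep it as the children. Within a single colour class, the two monotonicity properties force, for each transition and each source memory $m$, one target memory $m'$ that simultaneously witnesses weak $c$-domination across all consecutive pairs.

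Performing this extraction at every node of every level yields a tree $T$ of uniform branching $b_1$ satisfying the weak requirement, while the strong requirement is then granted for free by the second paragraph; this establishes the lemma. The delicate point — and the step I expect to be hardest to make fully rigorous — is the weak case: one must check that membership in a common colour class really does produce a single $m'$ valid for all consecutive pairs, which is precisely where the source/target monotonicity is indispensable, and where the branching constant $b_1$ is pinned down.
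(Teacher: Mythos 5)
Your proposal is correct and takes essentially the same route as the paper's proof: the same monotonicity observation (edges in $\overline G$ survive when the source $(x{+}1)$-coordinate increases or the target one decreases, via the $q \re{\eps:d-1} q$ loops and $\eps$-transitive closure) makes the even/strong case extraction-free -- a witness $m'$ for the largest target child works for all children -- while the odd/weak case is settled by pigeonhole on witness memories, applied node-locally and top-down exactly as in the paper. The only difference is bookkeeping: the paper extracts sequentially, one transition at a time (losing a factor $k$ per step), whereas you colour coordinates by a full domination profile and pigeonhole once; these are equivalent, and your successor-pair canonical memory combined with source-monotonicity does yield a single $m'$ valid across all consecutive kept pairs, which resolves the point you flagged as delicate.
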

    
\begin{proof}
    First, observe that if $j \geq i$ and $i' \geq j'$, and 
    \[
        (q,m,\underbrace{s_{<x}s_{x+1}^{(i)} 0_{>x+2}}_{s^i}) \re{c} (q',m',\underbrace{s_{<x}s_{x+1}^{(i')}0_{>x+2}}_{s^{i'}}) \tin \overline G
    \] 
    then also 
    \[
        (q,m,\underbrace{s_{<x}s_{x+1}^{(j)} 0_{>x+2}}_{s^j}) \re{c} (q',m',\underbrace{s_{<x}s_{x+1}^{(j')}0_{>x+2}}_{s^{j'}}),
    \]
    because $s^{j} > s^{i'}$ and $s^{i'} > s^{j'}$. 

    We prove the following claim, and then explain how the lemma follows.

    \begin{claim}
        Let $T$ be a "tree@@sign" of "branching" $b$, let $(q,m) \in Q \times k$, let $q \re{c:y} q'$ be a transition in $\A$ and let $s_{<x}$ be a node at level $x$ in $T$, where either $y=x$ or $y=x+1$.
        \begin{itemize}
            \item If $y$ is even then there is $m' \in k$ such that $(q,m)$ strongly $c$-dominates $(q',m')$ at $s_{<x}$ (with respect to $T$).
            \item If $y$ is odd, then there is $m'$ and $b / k$ children of $s_{<x}$ such that in any subtree of $T$ where the children of $s_{<x}$ are among these ones, it holds that $(q,m)$ weakly $c$-dominates $(q',m')$ at $s_{<x}$.
        \end{itemize}
    \end{claim}
    \begin{claimproof}
    Let $s_{<x} s_{x}^{(0)},\dots, s_{<x} s_x^{(b-1)}$ denote the $b$ children of $s_{<x}$.
    There are two cases according to the parity of $y$.
    \begin{itemize}
    \item If $y=x$ is even.
    First, we prove that there is $m'$ such that $(q,m)$ strongly $c$-dominates $(q',m')$ at $(s_{<x},b-1)$, which means 
    \[
        (q,m,\underbrace{s_{<x} s_{x+1}^{(0)} 0_{>x+2}}_{s^{0}}) \re c (q',m',\underbrace{s_{<x} s_{x+1}^{(b-1)}0_{>x+2}}_{s^{b-1}}) \tin \overline G.
    \]
    Indeed, by definition of $\Sgraph$ we have $s^{0} \re{x} s^{b-1}$ in $\Sgraph$ therefore $(q,s^{0}) \re{c} (q',s^{b-1}) \tin \A \casc S$ and thus we get the wanted result since $G$ is a "$k$-blowup"  of $\A \casc S$.

    Now by the observation above, we also get that $(q,m)$ strongly $c$-dominates $(q',m')$ at $(s_{<x},i')$ for any $i' \leq b-1$,  and thus $(q,m)$ strongly $c$-dominates $(q',m')$ at $s_{<x}$.
    \item If $y=x+1$ is odd.
    As above, we first prove that for every $i$, there is $m'_i$ such that $(q,m)$ weakly $c$-dominates $(q',m'_i)$ at $(s_{<x},i)$, which means
    \[
        (q,m,\underbrace{s_{<x} s_{x+1}^{(i+1)} 0_{>x+2}}_{s^{i+1}}) \re c (q',m'_i,\underbrace{s_{<x} s_{x+1}^{(i)}0_{>x+2}}_{s^{i}}) \tin \overline G.
    \]
    Indeed, by definition of $\Sgraph$ we have $s^{i+1} \re{x+1} s^{i}$ in $\Sgraph$ therefore $(q,s^{i+1}) \re{c} (q',s^{i}) \tin \A \casc S$ and thus we get the wanted result since $G$ is a "$k$-blowup"  of $\A \casc S$.

    Now, by the pigeonhole principle, take $m'$ such that $m'=m'_i$ for at least $b / k$ values $i_0<i_1<\dots<i_{b/k-1}$ of $i$.
    By the observation above, we get that for every $j' < j< b/k$,
    \[
        (q,m,s^{i_{j}}) \re c (q',m',s^{i_{j'}`'}) \tin \overline G.
    \]
    Therefore in any "tree@@sign" satisfying that the children of $s_{<x}$ are a subset of the $s_{x}^{i_j}$'s, we get that $(q,m)$ weakly $c$-dominates $(q',m')$ at $s_{<x}$.\qedhere
    \end{itemize}
\end{claimproof}
    Therefore, given a node $s_{<x}$ at level $x$, and starting with some "tree@@sign" in which $s_{<x}$ has "branching" $\kappa$, we apply the claim to all transitions $q \re {c:x+1} q'$ in $\A$ successively, each time removing all nodes which are not (descendants of) the children given by the lemma.
    Since there are $\leq |Q|^2 |\Sigma \cup \{\eps\}|$ such transitions, the node $s_{<x}$ has "branching" $\geq \kappa / k^{|Q|^2 (|\Sigma| +1)}$ in the resulting tree.
    If this inequality is strict, we may remove more children arbitrarily so that it becomes an equality.
    Note that the branchings of other nodes that remain in the "tree@@sign" are not affected.

    Applying this first to $s_{<0}$ being the root, and then in a top-down fashion, gives the wanted result.
\end{proof}

Note that the conclusion of the lemma still holds when taking a subset of $T$; this will allow us to perform further extractions to guarantee more properties.

\subsubsection{Guaranteeing that $\B$ is $\eps$-complete: Regular words and embeddings}\label{sec:regular-extraction}

We now introduce some more notions which will allow us to ensure item~\ref{item:eps-complete} above.

\subparagraph*{Regular words.}
A word $w \in C^*$ is called an $n$-word if every letter from $C$ occurs exactly $n$ times in $w$ (later in the proof, we will take $C$ to be $Q$).
Recall that we denote $n=\{0,1,\dots,n-1\}$. 
Given an $n$-word $w$ and two letters $a,b$, we say that
\begin{itemize}
\item $a$ is strongly after $b$ (or $b$ is strongly before $a$), if all occurrences of $a$ are after all occurrences of $b$ in $w$;
\item $a$ is weakly after $b$ (or $b$ is weakly before $a$), if for every $i<j$, the $j$-th occurrence of $a$ is after the $i$-th occurrence of $b$; and
\item $a$ and $b$ are interleaved, if $a$ is both weakly after and weakly before $b$.
\end{itemize}
For instance, in
\[
    \us a 0 \us a 1 \us a 2 \us c 0 \us c 1 \us a 3 \us c 2 \us a 4 \us b 0 \us b 1 \us b 2 \us b 3 \us c 3 \us c 4 \us b 4,
\]
we have that $a$ is strongly before $b$, and $c$ is weakly before $b$, and in
\[
    \us a 0 \us b 0 \us b 1 \us a 1 \us a 2 \us b 2,
\]
letters $a$ and $b$ are interleaved.

An $n$-word is called regular if for every pair of letters $(a,b)$, either $a$ is strongly before $b$ or $b$ is weakly after $a$.
This amounts to saying that for every unordered pair of letters $\{a,b\}$, either one is strongly before the other, or they are interleaved (applying the previous property to both $(a,b)$ and $(b,a)$).
This also amounts to saying that the alphabet can be partitioned into $p$ parts, and the word can be broken into $p$ corresponding subsequent blocs, where each block is an interleaving of the letters of the corresponding part.
For instance,
\[
    abbabacccdedeedf\! f\! f
\]
is a regular $3$-word, where the partition is $\{a,b\} \sqcup \{c\} \sqcup \{d,e\} \sqcup \{f\}$.

\subparagraph*{Word-embeddings and extraction lemma.}
A word-embedding from $w$ to $w'$ is a set $R$ of positions in $w'$, such that restricting $w'$ to positions in $R$ gives the word $w$.
We display in bold which letters of $w'$ belong to $R$, for instance $\ul a ba \ul b$ represents the embedding given by $R=\{0,3\}$ (recall that we number positions from $0$) and therefore $w=ab$.
In this case, we say that $w$ is a subword of $w'$.

Given a subset $P$ of $n'$ 
and an $n'$-word $w'$, we may define the embedding given by keeping the $i$-th occurrence of each letter, for each $i \in P$.
For instance, if $P=\{1,4,5\}$, we get the following embedding
\[
    \us{a}{0} \us{b}{0} \us {\ul a} 1 \us{\ul b} 1  \us a 2 \us a 3 \us b 2 \us{b} 3  \us {\ul a} 4 \us{\ul a} 5 \us a 6 \us {\ul b} 4  \us {\ul b} 5 \us b 6.
\]
In this case, we say that the obtained word $w$ is the $P$-subword of $w'$, and we say that $w$ is a synchronised subword of $w'$ if there is such a $P$.
Note that in this case $w$ is an $n$-word for $n=|P|$.
The following easy lemma will often be used below.

\begin{lemma}\label{lem:synchronised-subwords-are-regular}
    A synchronised subword of a regular word is regular.
\end{lemma}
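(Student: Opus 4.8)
The statement to prove is Lemma~\ref{lem:synchronised-subwords-are-regular}: a synchronised subword of a regular word is regular.

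The plan is to unwind the definitions and verify the defining property of regularity directly, transporting it from $w'$ to its synchronised subword $w$. Recall that $w$ is a $P$-subword of an $n'$-word $w'$ for some $P \subseteq n'$, meaning $w$ is obtained by keeping, for each letter, exactly those occurrences whose index lies in $P$; so $w$ is an $n$-word with $n = |P|$. Regularity of $w'$ says that for every ordered pair of letters $(a,b)$, either $a$ is strongly before $b$, or $b$ is weakly after $a$ in $w'$. I want to establish the same dichotomy in $w$.

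First I would show that strong ordering is inherited: if $a$ is strongly before $b$ in $w'$ (all occurrences of $a$ precede all occurrences of $b$), then the same holds in $w$, since $w$ is a subword of $w'$ and restricting to a subset of positions can only remove occurrences, never reorder them. Next, the key step is to show that weak ordering is inherited under the \emph{synchronised} restriction. Suppose $b$ is weakly after $a$ in $w'$, i.e.\ for every $i < j$ the $j$-th occurrence of $b$ comes after the $i$-th occurrence of $a$. In $w$, occurrences are re-indexed according to the increasing enumeration of $P$: if $P = \{p_0 < p_1 < \dots < p_{n-1}\}$, then the $t$-th occurrence of any letter in $w$ is its $p_t$-th occurrence in $w'$. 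So take indices $i < j$ in $w$; the $j$-th occurrence of $b$ in $w$ is the $p_j$-th occurrence of $b$ in $w'$, and the $i$-th occurrence of $a$ in $w$ is the $p_i$-th occurrence of $a$ in $w'$. Since $i < j$ implies $p_i < p_j$, weak ordering in $w'$ applied to the pair of indices $(p_i, p_j)$ gives that the $p_j$-th occurrence of $b$ is after the $p_i$-th occurrence of $a$ in $w'$; because the embedding preserves the relative order of positions, the same inequality holds in $w$. Hence $b$ is weakly after $a$ in $w$.

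Combining the two inheritance properties finishes the proof: for any ordered pair $(a,b)$, regularity of $w'$ gives either $a$ strongly before $b$ or $b$ weakly after $a$ in $w'$, and each of these transfers to $w$, so $w$ is regular. I expect the only delicate point to be the bookkeeping in the weak-ordering case—specifically making explicit that a \emph{synchronised} restriction keeps the \emph{same} index set $P$ for every letter, so that re-indexing sends $(i,j)$ to $(p_i, p_j)$ \emph{uniformly} across letters. This uniformity is exactly why the lemma requires synchronised subwords rather than arbitrary ones: for a non-synchronised subword one could keep, say, the first occurrence of $a$ but the second of $b$, destroying the pairwise weak-order comparison. I would highlight this as the crux, and otherwise the argument is a routine verification.
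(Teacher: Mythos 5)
Your proof is correct, and it is exactly the intended argument: the paper states this lemma without proof (calling it easy), and the expected justification is precisely your definitional unwinding — strong order transfers to any subword, while weak order transfers because a synchronised subword re-indexes occurrences by the \emph{same} increasing enumeration $p_0 < p_1 < \dots < p_{n-1}$ of $P$ for every letter, so a comparison at indices $i<j$ in $w$ reduces to the comparison at $p_i < p_j$ in $w'$. Your closing remark correctly pinpoints why synchronisation is essential (and note your argument transfers the weak relation in either orientation, so it is robust to the exact phrasing of the regularity dichotomy).
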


We are now ready to state our main extraction lemma.

\begin{lemma}\label{lem:extraction-regular-words}
Let $n,k$ be positive integers, fix a finite alphabet $C$, let $w'_0,\dots,w'_{k-1}$ be $n'$-words with $n' \geq n^{k2^{|C|^{2}}}$. 
There is a set $P \subseteq n'$ of size $n$ such that the $P$-subwords $w_0,\dots,w_{k-1}$ of $w'_0,\dots,w'_{k-1}$ are regular.

\end{lemma}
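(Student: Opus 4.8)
The plan is to prove Lemma~\ref{lem:extraction-regular-words} by induction on the number of words $k$, combined with a Ramsey-type pigeonhole argument at each step. The key observation is that ``regularity'' of an $n$-word is entirely determined by the pairwise relations (strongly-before, weakly-before, interleaved) between letters, and that these relations only get \emph{finer} as we pass to synchronised subwords (by Lemma~\ref{lem:synchronised-subwords-are-regular}, but more importantly: once two letters are strongly separated, they stay strongly separated in any synchronised subword). So the goal is to iteratively thin out the set of occurrence-indices so that each word in turn becomes regular, while ensuring that the thinning done for earlier words does not destroy the regularity already achieved. This is why the required length $n'$ is exponential in $k$: each of the $k$ words consumes one layer of extraction, and each layer must leave enough occurrences to feed the next.

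\medskip

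Here is the main inductive step I would isolate. Fix a single $n'$-word $w'$ over the finite alphabet $C$. I claim that from any sufficiently long block of synchronised occurrences one can extract a synchronised subword that is regular, retaining a $1/(\text{bounded})$ fraction of the occurrences. Concretely, for each \emph{unordered} pair $\{a,b\}$ of letters, look at the relative order of their occurrences: either one is strongly before the other within the current occurrence-window, or they are interleaved. The idea is to process pairs one at a time and restrict the index set $P$ so as to make each pair behave regularly. For a single pair, if $a$ and $b$ are not already in a clean regular configuration, I restrict to a synchronised sub-block of indices on which they are: either by keeping only an initial segment of occurrences where $a$'s stay ahead of $b$'s (forcing strongly-before) or by a synchronised selection that forces the interleaving pattern $i \mapsto (a_i \text{ before } b_i)$. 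Since there are $\binom{|C|}{2} \le |C|^2$ pairs and the relations are preserved under further synchronised extraction, after handling all pairs the resulting subword is regular; each pair costs at most a bounded multiplicative factor in the length, which accounts for the $2^{|C|^2}$-type exponent in the statement.

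\medskip

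With this single-word extraction in hand, the full lemma follows by applying it successively to $w'_0, \dots, w'_{k-1}$. Start with $n' \ge n^{k2^{|C|^2}}$. Apply the single-word extraction to $w'_0$, obtaining a set $P_0 \subseteq n'$ of size $\ge (n')^{1/2^{|C|^2}} \ge n^{(k-1)2^{|C|^2}}$ on which $w_0$ becomes regular. Now pass to the $P_0$-subwords of all the $w'_i$; crucially, $w_0$ remains regular when restricted to any further synchronised subset, by Lemma~\ref{lem:synchronised-subwords-are-regular}. Repeat on $w'_1$ restricted to $P_0$, obtaining $P_1 \subseteq P_0$, and so on. After $k$ rounds the index set $P = P_{k-1}$ has size $\ge n$, and all of $w_0, \dots, w_{k-1}$ are simultaneously regular on $P$ (the earlier ones because regularity is inherited under synchronised subwords, the later ones by construction). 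Finally truncate $P$ to exactly $n$ indices.

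\medskip

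The step I expect to be the main obstacle is the single-word extraction, specifically pinning down exactly what it means to ``force'' a pair into a regular configuration by a synchronised selection, and verifying that the selection needed for one pair does not undo the regularity already secured for another. The subtlety is that a synchronised selection is \emph{global} (it picks the same occurrence-indices $i \in P$ for \emph{every} letter), so when I restrict indices to fix the pair $\{a,b\}$, I simultaneously perturb the relation of all other pairs. The saving grace is monotonicity: ``strongly before'' is absolute (it survives any synchronised restriction), and ``interleaved,'' once witnessed by the synchronised matching $a_i \leftrightarrow b_i$, also survives synchronised restriction since dropping indices from $P$ keeps the surviving occurrences matched in the same order. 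So I would set up the invariant that after processing a set $\mathcal{F}$ of pairs, every pair in $\mathcal{F}$ is regular on the current index set, and show this invariant is preserved, which reduces the whole argument to a clean pigeonhole on the $3$ possible pairwise relations (hence the base $3$ appearing in the branching bound $b_1$).
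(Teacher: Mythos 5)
You reproduce the paper's architecture exactly: a two-letter base case, then an iteration over the $\leq |C|^2$ unordered pairs using the fact that good pairs (strongly separated or interleaved) persist under synchronised subwords, and finally a word-by-word iteration via Lemma~\ref{lem:synchronised-subwords-are-regular}. But the one step carrying the real content --- the two-letter base case, which you yourself flag as the main obstacle --- is left as a gesture, and the gesture as written would fail. You propose to force a pair either by ``keeping only an initial segment of occurrences where $a$'s stay ahead of $b$'s'' or by ``a synchronised selection that forces the interleaving pattern''; but neither option is available on demand: no initial segment makes $a$ strongly before $b$ if $b$ occurs first, and no synchronised selection can interleave two letters that are already strongly separated. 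The missing idea is a pigeonhole dichotomy showing that \emph{one} of the two outcomes is always reachable at merely quadratic cost. Writing $a^{(i)} <_{w'} b^{(j)}$ to mean that the $i$-th occurrence of $a$ precedes the $j$-th occurrence of $b$: for an $n'$-word with $n' \geq n^2$, either there is an index $i$ and a letter $a$ with $a^{(i+n-1)} <_{w'} b^{(i)}$, in which case the \emph{consecutive} window $P = \{i, i+1, \dots, i+n-1\}$ makes $a$ strongly before $b$ in the $P$-subword; or else this fails for every $i$ and both letters, and then the arithmetic progression $P = \{0, n, 2n, \dots, (n-1)n\}$ makes the two letters interleaved, since for $i<j$ the $j$-th kept occurrence of $a$ is the $(jn)$-th overall, hence comes after the $(in+n-1)$-th occurrence of $a$, hence after the $i$-th kept occurrence of $b$. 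Without this dichotomy (or an equivalent construction) your induction has no base; the rest of your write-up, including the non-interference invariant across pairs, does match the paper's proof.

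Two smaller points. Your size bookkeeping in the word-by-word iteration is wrong as stated: with $n' = n^{k2^{|C|^2}}$, one round of single-word extraction gives $(n')^{1/2^{|C|^2}} = n^{k}$, not $\geq n^{(k-1)2^{|C|^2}}$ as you claim, and a second round of $|C|^2$ successive square roots already drops below $n$; the naive composition you describe actually requires nesting of the form $n' \geq n^{(2^{|C|^2})^k}$. (The paper's own proof glosses over this same composition, so this does not distinguish your attempt from the published argument, but the inequality you wrote is false.) Finally, the base $3$ in the branching bound $b_1$ does not come from the three pairwise relations of letters in this lemma, as you suggest, but from the three domination types (strong, weak, none) in the inner labelling used for Lemma~\ref{lem:guaranteeing-uniformity}.
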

\begin{proof}
    In this proof, we write $a^{(i)} <_w b^{(j)}$ to say that the $i$-th $a$ occurs before the $j$-th $b$ in the word $w$.
    \begin{itemize}
        \item \textbf{One word, two letters.} First assume that there is a single word $w'$ and $|C|=2$. In this case, we will show that it suffices to assume that $n' \geq n^2$.
        \begin{itemize}
            \item If there is $i$ such that for some letter $a\in C$, it holds that $a^{(i+n-1)}<_{w'}b^{(i)}$, where $b$ is the other letter.
            Then for $P=\{i,i+1,\dots,i+n-1\}$, we get that $a$ is strongly before $b$ in the $P$-subword $w$ of $w'$, and thus $w$ is regular (and has size $n$).
            \item Otherwise, we let $P=\{0,n,2n,\dots,(n-1)n\}$, and claim that both letters are interleaved in the $P$-subword $w$ of $w'$.
            Indeed, we should prove that for any letter $a$, it holds that $a^{((i+1)n)} >_{w} b^{(in)}$, and this is because we are not in the first case.
        \end{itemize}
        \item \textbf{One word, many letters.} We now assume that there is still just one word $w$, but $C$ is any finite set and $n'=n^{2^{|C|^2}}$.
        Say that an unordered pair of different letters is good in a given word if either one is strongly before the other, or they are interleaved.
        Just as for Lemma~\ref{lem:synchronised-subwords-are-regular}, observe that if a pair of letters is good in a given word, then it is also good in its synchronised subwords.

        We proceed as follows.
        Choose any (unordered) pair of different letters $a,b$, and apply the previous case to get $P_{0}$ of size $(n')^{1/2}$ such that in the $P_0$-subword $w_{0}$, the pair $a,b$ is good.
        Then take another pair $a',b'$ and apply the same reasoning to $w_0$: we get $P_1 \subseteq P_{0}$ such that both $a',b'$ and $a,b$ (by the above observation) are good in the $P_1$-subword of $w$.
        Then repeat this process for each of the $|C|(|C|-1)/2\leq |C|^2$ unordered pairs of letters.
        In each step, we get that $|P_i| \geq |P_{i-1}|^{1/2}$, by the previous case.
        
        \item \textbf{Many words, many letters.}
        We are now ready to prove the lemma.
        First we apply the previous case to $w'_0$, which yields $P_0$ such that the $P_0$ subword of $w'_0$ is regular.
        Then we apply the previous case to the $P_0$-subword of $w'_1$: this gives $P_1 \subseteq P_0$ such that the $P_1$-subword of $w'_1$ is regular.
        By the observation above, since $P_1 \subseteq P_0$, the $P_1$-subword of $w'_0$ is also regular.
        We proceed in this fashion for $i=2,3,\dots,k-1$, which proves the lemma. \qedhere
    \end{itemize}
\end{proof}

\subparagraph*{Guaranteeing $\eps$-completability.}
Recall that the definitions of strong and weak domination (see beginning of Section~\ref{sec:domination-and-overview-finitary}) are implicitely parameterised by the choice of a "tree@@sign" $T$.
We are now ready to prove the requirement over $T$, which will ensure "$\eps$-completability" of the obtained automaton $\B_T$.

\begin{lemma}\label{lem:ensuring-eps-completability}
There exists a "tree@@sign" $T \subseteq \kappa^{\d_\odd}$ with "branching" $$b_2=(b_1)^{1/(k2^{|Q|^{2}})}=3^{{k^2|Q|^2 (|\Sigma|+1)}^{d/2+1}}2|Q|^2$$ such that the conclusion of Lemma~\ref{lem:guaranteeing-k-blowup-finitary} holds and moreover for every "memory state" $m \in k$, every node $s_{<x}$ at level $x$ and every pair of states $(q,q')$, either $(q,m)$ strongly $\eps$-dominates $(q',m)$ or $(q',m)$ weakly $\eps$-dominates $(q,m)$.
\end{lemma}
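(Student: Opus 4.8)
The plan is to start from the tree $T$ obtained in Lemma~\ref{lem:guaranteeing-k-blowup-finitary}, which has branching $b_1 = (b_2)^{k2^{|Q|^{2}}}$, and extract a subtree of branching $b_2$ on which the desired $\eps$-completeness property holds. The key observation is that $\eps$-completeness is a statement about a \emph{total (pre)order} over the states $Q \times \{m\}$ for each fixed memory state $m$: recall that $\overline G$ is $\eps$-complete, so for fixed $m$ and fixed node $s_{<x}s_{x+1}^{(i)}0_{>x+2}$, the relation $\re \eps$ linearly orders the states $(q,m,s_{<x}s_{x+1}^{(i)}0_{>x+2})$. As $i$ ranges over the $b_1$ children of a node $s_{<x}$, we thus obtain, for each $m$, a sequence of linear orders on $Q$, one per child. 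The goal is to make these orders interact regularly across children, and this is exactly what the notion of a regular word is designed to capture.

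\medskip

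Concretely, for a fixed memory state $m$ and node $s_{<x}$ at level $x$, I would encode the relative positions of the states across the $b_1$ children as an $n'$-word over the alphabet $C = Q$, with $n' = b_1$: reading the children $i = 0, 1, \dots, b_1-1$ in increasing order, list at block $i$ the states of $Q$ in the linear order induced by $\re \eps$ on $(q,m,s_{<x}s_{x+1}^{(i)}0_{>x+2})$. This produces a $b_1$-word $w'_m$ (each letter $q \in Q$ appears exactly $b_1$ times). There are $k$ memory states, giving $k$ words $w'_0,\dots,w'_{k-1}$, each a $b_1$-word over the alphabet $C = Q$ of size $|Q|$. I then invoke the extraction lemma (Lemma~\ref{lem:extraction-regular-words}) with $n = b_2$, $C = Q$ and $n' = b_1 \geq b_2^{\,k 2^{|Q|^2}} = n^{k 2^{|C|^2}}$, which yields a set $P \subseteq b_1$ of size $b_2$ such that the $P$-subwords of all $k$ words are simultaneously regular. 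Restricting $s_{<x}$ to the children indexed by $P$ gives a subtree in which the $\eps$-order between any two states $(q,m),(q',m)$ is \emph{regular} across the retained children.

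\medskip

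The point is then to translate regularity of these words into the dichotomy ``strongly dominates or is weakly dominated'' demanded by the statement. Unwinding the definitions of strong and weak $\eps$-domination at $(s_{<x},i)$: ``$(q,m)$ strongly $\eps$-dominates $(q',m)$ at $(s_{<x},i)$'' says $(q,m,s_{<x}s_{x+1}^{(0)}0) \re \eps (q',m,s_{<x}s_{x+1}^{(i)}0)$, while weak domination compares children $i{+}1$ and $i$. Matching against the word-combinatorial notions, the state $q$ being strongly before $q'$ in $w_m$ corresponds exactly to $(q,m)$ strongly $\eps$-dominating $(q',m)$ at every $i$ (because all occurrences of $q$ precede all occurrences of $q'$, so the order on child $0$ already places $q$ below $q'$ and monotonicity in the child index propagates it), whereas $q$ and $q'$ being interleaved corresponds to mutual weak domination. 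Since the word is regular, for each unordered pair $\{q,q'\}$ one of these two cases holds, which is precisely the required dichotomy (strong domination in one direction, or weak domination in the other). The monotonicity observation from the proof of Lemma~\ref{lem:guaranteeing-k-blowup-finitary} — that $s^j > s^{i'}$ and $s^{i'} > s^{j'}$ let domination at one pair of indices propagate to larger/smaller ones — is what lets me pass from ``strongly/weakly before'' in the word to domination ``at $s_{<x}$'' (i.e.\ for all $i$) rather than at a single $i$.

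\medskip

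Finally I must perform this extraction uniformly over all levels and all nodes, not just one. As in Lemma~\ref{lem:guaranteeing-k-blowup-finitary}, I would proceed top-down: at each node $s_{<x}$ I run the extraction above, which cuts its branching from $b_1$ down to $b_2$ while leaving the branching of other surviving nodes untouched (the cut only discards children of $s_{<x}$ and their descendants). Because the extraction is applied independently at each node and the word-encoding depends only on the local $\re\eps$-orders inside that node's children, iterating over all nodes yields a single tree $T$ of uniform branching $b_2$ satisfying the $\eps$-completeness dichotomy everywhere; and since taking subtrees preserves the conclusion of Lemma~\ref{lem:guaranteeing-k-blowup-finitary} (as noted right after it), the $k$-blowup property is retained. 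The main obstacle I anticipate is the bookkeeping of the encoding: one must set up the $b_1$-word so that the combinatorial relations (strongly/weakly before, interleaved) line up \emph{exactly} with strong/weak $\eps$-domination as defined via children $(0,i)$ and $(i{+}1,i)$, and check that regularity, together with the monotonicity of $\re\eps$ along increasing child indices, delivers domination ``at $s_{<x}$'' (all $i$) rather than merely at isolated indices — this alignment is the delicate part, while the extraction itself is a direct application of Lemma~\ref{lem:extraction-regular-words}.
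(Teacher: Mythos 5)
Your overall scaffolding (parameters $n'=b_1\geq b_2^{\,k2^{|Q|^2}}$, one word per memory state extracted simultaneously via Lemma~\ref{lem:extraction-regular-words}, top-down restriction node by node, preservation of Lemma~\ref{lem:guaranteeing-k-blowup-finitary} under subtrees) matches the paper, but your word encoding has a fatal flaw. You build $w'_m$ as a concatenation of $b_1$ blocks, block $i$ listing $Q$ in the order induced by $\re{\eps}$ \emph{restricted to the vertices of child $i$}. Such a word records only within-child comparisons and discards every cross-child comparison --- yet strong and weak $\eps$-domination are precisely cross-child statements (strong compares child $0$ of one state with child $i$ of the other; weak compares child $i+1$ with child $i$). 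Worse, your word is \emph{automatically} regular, so the extraction is vacuous: since every letter occurs exactly once in each block and blocks are ordered, no letter is ever strongly before another (the occurrence of $q$ in the last block lies after the occurrence of $q'$ in the first block), and every pair is trivially interleaved (the $j$-th occurrence of any letter lies in block $j$). Consequently your claimed dictionary fails: childwise dominance $(q,m,s_{<x}s_{x+1}^{(i)}0_{>x+2}) \re{\eps} (q',m,s_{<x}s_{x+1}^{(i)}0_{>x+2})$ at every $i$ is compatible with a global $\re{\eps}$-order that interleaves across children, in which case strong domination fails; and the monotonicity you invoke runs the wrong way --- from dominance at child $0$ and the edges from higher to lower children of $q'$ one cannot derive $(q,m,s_{<x}s_{x+1}^{(0)}0_{>x+2}) \re{\eps} (q',m,s_{<x}s_{x+1}^{(i)}0_{>x+2})$.

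The paper's encoding fixes exactly this: for each $m$ it forms a \emph{single} word listing all $|Q|\cdot b_1$ vertices $(q,m,s_{<x}s_{x+1}0_{>x+2})$, over \emph{all} children $s_{x+1}$ simultaneously, sorted by the global linear order that $\re{\eps}$ was arranged to define on $Q\times\{m\}\times V(S)$ in $\overline G$. Because $\re{\eps}$ is monotone in the child index within each fixed letter $q$ (thanks to the added transitions $q \re{\eps:d-1} q$ and $s \re{d-1} s'$ for $s>s'$), the $p$-th occurrence of each letter corresponds to the same child across all letters; hence a $P$-subword corresponds exactly to keeping the children indexed by $P$, and word positions now genuinely encode cross-child $\re{\eps}$-comparisons. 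Under this encoding, one letter strongly before another really is strong $\eps$-domination, interleaving really gives mutual weak $\eps$-domination, and regularity of the extracted subwords (together with Lemma~\ref{lem:synchronised-subwords-are-regular} for subsequent extractions) yields the required dichotomy. So the idea of routing through regular words is right, but you must sort globally by $\re{\eps}$ rather than block per child; as your own closing caveat anticipated, this alignment is where the proof lives, and the proposed encoding cannot deliver it.
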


\begin{proof}
First let $T'$ be a "tree@@sign" of "branching" $b_1$ obtained from Lemma~\ref{lem:guaranteeing-k-blowup-finitary}.
Fix a node $s_{<x}$ at level $x$ in $T'$.
Then for each "memory state" $m \in k$, consider the word $w'_m\in Q^*$ indexed by vertices of $\overline G$ of the form $(q,m,s_{<x} s_{x+1} 0_{>x+2})$ such that $ s_{<x} s_{x+1}$ is a child of $s_{<x}$ in $T'$.
These vertices, which correspond to the positions in $w'_m$, are linearly ordered by $\re \eps$ over $\overline G$, and the letters correspond to the projections on the first coordinate.
Note that these are $b_1$-words: each state has exactly $b_1$ occurrences. 

We apply Lemma~\ref{lem:extraction-regular-words} to the words $w'_0,\dots,w'_{k-1}$, which yields a set $P \subseteq b_1$ of size $b_2$ such that the corresponding $P$-subwords are regular.
This means that if we set the $b_2$ children of $s_{<x}$ to be the $s_{x+1}^{(p)}$ for $p \in P$, then for each pair of states $(q,q')$ (i.e. letters), either $q$ is strongly before $q'$, in which case $q$ $\eps$-strongly dominates $q'$ at $s_{<x}$, or $q'$ is weakly after $q$, in which case $q'$ $\eps$-weakly dominates $q$ at $s_{<x}$.

Let $P^{s_{<x}}$ be the set $P$ as above obtained with respect to the node $s_{<x}$ of $T'$.
We produce a tree satisfying the conclusion of the lemma by succesively restricting the subtree at $s_{<x}$ to the children in $P^{s_{<x}}$ 
\end{proof}

Note also that thanks to Lemma~\ref{lem:synchronised-subwords-are-regular}, the conclusion of Lemma~\ref{lem:ensuring-eps-completability} still holds for trees that are subsets of $T$, so we may still extract trees with more properties.

\subsubsection{Making $T$ homogeneous}\label{sec:uniformity}

We now show how to ensure item~\ref{item:uniformity} from the overview (Section~\ref{sec:domination-and-overview-finitary}).

\subparagraph*{Extracting highly "branching" trees.}
We need another extraction lemma, which is a straightforward adaptation of Lemma~\ref{lem:everywhere_cofinal} to the finitary setting.
Recall that an "inner labelling" is a map from nodes of $T$ to some set $L$, and that it is "constant per level" if for every $x\in\dodd$, $\lambda$ is constant over nodes of level $x$ in $T$.

\begin{lemma}\label{lem:highly-branching-extraction}
    Let $b$ be a positive integer and consider a "tree@@sign" $T'$ with "branching" $b' = b|L|^{d/2+1}$ and an "inner labelling" $\lambda$ of $T'$ by $L$.
    There is a "tree@@sign" $T \subseteq T'$ with "branching" $b$ such that $\lambda|_{T}$ is "constant per level".
\end{lemma}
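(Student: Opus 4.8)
The plan is to prove Lemma~\ref{lem:highly-branching-extraction} by a top-down induction on levels, mimicking the structure of the proof of Lemma~\ref{lem:everywhere_cofinal} but tracking the quantitative branching loss carefully. The key intuition is the same: at each even level, we want to collapse the many nodes down to those sharing a common label (and, recursively, a common ``label profile'' of their descendants), and since there are only $|L|$ possible labels, the pigeonhole principle lets us keep a $1/|L|$ fraction at each of the $d/2+1$ levels.

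First I would set up the induction on the index $d$ (equivalently on the number of odd priorities $d_{\mathrm{odd}}$), exactly as in Lemma~\ref{lem:everywhere_cofinal}. For the base case $d=0$ the tree is a single node and there is nothing to do. For the inductive step, I fix the root's level $0$; its children are the nodes at level $2$. For each such child $s_{<2}$, I apply the induction hypothesis to the subtree rooted at $s_{<2}$, which has branching $b' = b|L|^{d/2+1}$; the induction hypothesis, applied with target branching $b|L|$ and the same label set $L$, requires starting branching $(b|L|)|L|^{(d-2)/2+1} = b|L|^{d/2+1} = b'$, so it applies and yields a subtree of branching $b|L|$ that is constant per level. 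This produces, for each child $s_{<2}$, an auxiliary label consisting of $\lambda(s_{<2})$ together with the tuple of per-level constant values inside its extracted subtree.

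Next I would apply pigeonhole at level $2$: there are $b'$ children of the root, each now carrying an auxiliary label ranging over a set of size at most $|L|^{d/2}$ (the product over the remaining levels), times the $|L|$ choices for $\lambda(s_{<2})$ itself, giving at most $|L|^{d/2+1}$ possible auxiliary labels. Hence some auxiliary label is shared by at least $b'/|L|^{d/2+1} = b$ of the children. Retaining those $b$ children, together with their (already constant-per-level) extracted subtrees, produces a tree $T \subseteq T'$ of branching exactly $b$ on which $\lambda$ is constant at every level --- the shared auxiliary label guarantees agreement both at level $2$ and, by the matching per-level tuples, at all deeper levels.

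The main obstacle is purely bookkeeping: verifying that the branching budget $b|L|^{d/2+1}$ is exactly consumed, i.e. that the induction hypothesis can be invoked with the right parameters and that the single pigeonhole step at level $2$ costs precisely a factor $|L|^{d/2+1}$ rather than more. The subtle point is that the auxiliary label at level $2$ must encode the \emph{full} per-level profile of the recursively-extracted subtree (not just $\lambda(s_{<2})$), so that collapsing to children sharing this label automatically aligns all deeper levels simultaneously; once this is set up correctly the counting is immediate and the bound $b' = b|L|^{d/2+1}$ comes out exactly. No genuine difficulty arises beyond ensuring the exponent tracks the number of levels $d/2+1$ faithfully.
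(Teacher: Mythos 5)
Your proposal follows the paper's proof almost exactly (induction on $d$, recursing into the level-$2$ subtrees, then pigeonhole on an auxiliary label recording the full per-level profile), but there is a parameter slip in the inductive step that breaks the stated conclusion. You invoke the induction hypothesis with \emph{target} branching $b|L|$, so the extracted subtrees $T_{s_{<2}}$ have branching $b|L|$, not $b$. After the pigeonhole at level $2$ you keep $b$ children of the root together with these subtrees and claim the result has ``branching exactly $b$'' --- but by the paper's definition a tree of branching $b$ must have \emph{every} node with exactly $b$ children, whereas in your tree the root has $b$ children while every node at level $\geq 2$ still has $b|L|$ children. So the step ``produces a tree $T \subseteq T'$ of branching exactly $b$'' fails as written, and the quantitative ``exactness'' you emphasize in the last paragraph is an artifact of this miscalibration rather than a feature.

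The repair is immediate, which is why this is a slip rather than a wrong approach: either prune each retained subtree from branching $b|L|$ down to $b$ (noting that being constant per level is preserved when passing to a subtree, a closure property the paper uses repeatedly), or do as the paper does and apply the induction hypothesis with target branching $b$ directly --- this needs input branching only $b|L|^{(d-2)/2+1} = b|L|^{d/2} \leq b'$, so one first restricts each level-$2$ subtree (a step the paper itself glosses over), and the binding constraint becomes the level-$2$ pigeonhole $b'/|L|^{d/2+1} = b$, which delivers uniform branching $b$. A second, harmless discrepancy: your count of auxiliary labels ($|L|^{d/2}$ for the deeper levels times $|L|$ for $\lambda(s_{<2})$) slightly overcounts, since the root label of the subtree \emph{is} $\lambda(s_{<2})$; but any upper bound $\leq |L|^{d/2+1}$ suffices for the pigeonhole, so this does not affect correctness.
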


\begin{proof}
    The proof is exactly the same as for Lemma~\ref{lem:everywhere_cofinal}, by induction on $d$.
    We still reproduce it for clarity.

    For $d=0$ there is nothing to prove since $\dodd$ is empty; let $d\geq 1$ and assume the result known for $d-2$.
    For each node $s_{<2}$ at level $2$, apply the induction hypothesis on the subtree of $\kappa^{\dodd}$ rooted at $s_{<2}$,
    which gives a tree $T_{s_{<2}}$ with "branching" $b$ such that $\lambda$ is "constant per level" over $T_{s_{<2}}$.
    Let $\ell^{s_{<2}}_2,\dots,\ell^{s_{<2}}_{d}$ denote the constant values of $\lambda$ on the corresponding levels of $T_{s_{<2}}$, and define a new auxiliary labelling of the nodes $s$ at level $2$ of $\Sgraph$ by the tuple $\ell = (\lambda(s),\ell^{s_{<2}}_2,\dots,\ell^{s_{<2}}_{d})$.

    Now since there are at most $|L|^{d/2+1}$ new auxiliary labels, and there are $b'$ nodes at level $2$, there is a new labelling such that $\geq b$ nodes have this new label.
    We conclude by taking $T$ to be the union of $\{s\} \times T'_s$, where $s$ ranges over nodes at level $2$ with the new labelling $\ell$, and $T'_s$ are the corresponding everywhere cofinal trees.
\end{proof}


We are now ready to prove the wanted statement.

\begin{lemma}\label{lem:guaranteeing-uniformity}
There exists a "tree@@sign" $T \subseteq \kappa^{\d_\odd}$ with "branching" $$b_2/3^{{k^2|Q|^2 (|\Sigma|+1)}^{d/2+1}}=2|Q|^2$$ such that the conclusion of Lemmas~\ref{lem:guaranteeing-k-blowup-finitary} and~\ref{lem:ensuring-eps-completability} hold for $T$ and moreover, the fact that $(q,m)$ strongly (or weakly) $c$-dominates $(q',m')$ at $s_{<x}$ depends only on the level $x$ but not on the choice of the node $s_{<x}$.
\end{lemma}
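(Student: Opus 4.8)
The plan is to apply the extraction machinery of Lemma~\ref{lem:highly-branching-extraction} to an \emph{inner labelling} that records, at each node, exactly the domination data we wish to homogenise. Concretely, I would start with a "tree@@sign" $T'$ of "branching" $b_2$ obtained from Lemma~\ref{lem:ensuring-eps-completability}, so that the conclusions of both Lemma~\ref{lem:guaranteeing-k-blowup-finitary} and Lemma~\ref{lem:ensuring-eps-completability} already hold for $T'$ (and, by the closure remarks noted after those lemmas, continue to hold for any subtree we extract from $T'$). The labelling $\lambda$ assigns to a node $s_{<x}$ at level $x$ the complete record of, for each quadruple $(q,q',m,m')$ and each $c \in \Sigmaeps$, whether $(q,m)$ strongly, weakly, or not at all $c$-dominates $(q',m')$ at $s_{<x}$ with respect to $T'$. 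There are finitely many such records, so $\lambda$ takes values in a finite set $L$, and I would then invoke Lemma~\ref{lem:highly-branching-extraction} to extract $T \subseteq T'$ on which $\lambda$ is "constant per level".

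The key point is the bookkeeping on the branchings. Lemma~\ref{lem:highly-branching-extraction} requires the starting branching to be $b\,|L|^{d/2+1}$ to produce a subtree of branching $b$. Here the target branching is $b = 2|Q|^2$ and the cardinality of the label set $L$ is bounded by $3^{k^2|Q|^2(|\Sigma|+1)}$ (three possibilities --- strong, weak, or none --- for each of the $k^2|Q|^2(|\Sigma|+1)$ choices of $(q,q',m,m',c)$). Thus the required starting branching is exactly
\[
    2|Q|^2 \cdot \left(3^{k^2|Q|^2(|\Sigma|+1)}\right)^{d/2+1} = 2|Q|^2 \cdot 3^{k^2|Q|^2(|\Sigma|+1)\,(d/2+1)} = b_2,
\]
matching the branching of $T'$ and yielding a subtree $T$ of branching $b_2/3^{k^2|Q|^2(|\Sigma|+1)^{d/2+1}} = 2|Q|^2$, as claimed. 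Since $T \subseteq T'$, the closure observations guarantee that the conclusions of Lemmas~\ref{lem:guaranteeing-k-blowup-finitary} and~\ref{lem:ensuring-eps-completability} survive the extraction, while constancy of $\lambda$ per level gives precisely the homogeneity statement: whether $(q,m)$ strongly or weakly $c$-dominates $(q',m')$ at a node $s_{<x}$ depends only on the level $x$.

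The main subtlety to check --- and the step I expect to require the most care --- is that the domination predicates at a node $s_{<x}$ depend only on the node itself and not on the entire subtree below it, so that they are genuinely a function of the node and can legitimately be packaged into the inner labelling $\lambda$. This is where the specific shape of the definitions from Section~\ref{sec:domination-and-overview-finitary} matters: strong and weak domination at $(s_{<x},i)$ are expressed through edges of $\overline G$ between the fixed vertices $s_{<x}s_{x+1}^{(i)}0_{>x+2}$ determined by the children of $s_{<x}$, and therefore depend only on $s_{<x}$ together with the ordered list of its children, which is exactly the local data preserved when passing to a subtree. Once this is verified, the argument is a direct application of the extraction lemma with the arithmetic of branchings worked out above, and the uniformity phrasing ``at level $x$ if this is the case for some node at level $x$'' follows immediately from constancy per level.
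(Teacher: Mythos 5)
Your proposal matches the paper's proof essentially verbatim: start from the branching-$b_2$ tree of Lemma~\ref{lem:ensuring-eps-completability}, label each node with its full domination record over all $(q,m)$, $(q',m')$ and $c$ (so $|L|=3^{k^2|Q|^2(|\Sigma|+1)}$), extract a constant-per-level subtree via Lemma~\ref{lem:highly-branching-extraction}, and invoke the closure remarks to preserve the conclusions of Lemmas~\ref{lem:guaranteeing-k-blowup-finitary} and~\ref{lem:ensuring-eps-completability}. The only discrepancy is cosmetic: your required starting branching $2|Q|^2\cdot|L|^{d/2+1}=2|Q|^2\cdot 3^{k^2|Q|^2(|\Sigma|+1)(d/2+1)}$ is at most the paper's $b_2$ as typeset (whose exponent reads $(k^2|Q|^2(|\Sigma|+1))^{d/2+1}$), so the extraction goes through with slack rather than exact equality, and your added check that the domination predicates are genuinely functions of the node (via the fixed vertices $s_{<x}s_{x+1}^{(i)}0_{>x+2}$ and the monotonicity observation opening Section~\ref{sec:blowup-extraction}) is a sound point the paper leaves implicit.
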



\begin{proof}
    Let $T'$ be a "tree@@sign" of "branching" $b_2$ obtained from Lemma~\ref{lem:ensuring-eps-completability}.
    Consider the "inner labelling" which assigns to each node $s_{<x}$ of level $x$ in $T'$, the information, for each (ordered) pair of nodes $(q,m)$, $(q',m')$ and each $c \in \Sigma \cup \{\eps\}$, whether $(q,m)$ strongly, or weakly, or not at all, $c$-dominates $(q',m')$.
    This defines an "inner labelling" of $T'$ by $L$ where $|L|=3^{k^2|Q|^{2}(|\Sigma|+1)}$.
    Therefore applying Lemma~\ref{lem:highly-branching-extraction} concludes.    
\end{proof}

We are now ready to define $\B$.

\subsubsection{Correctness of $\B$}\label{sec:correctness-of-B-finitary}
We are finally ready to prove Proposition~\ref{prop:existenceEpsComplete-finitary}.
Fix a "tree@@sign" $T$ with "branching" $2|Q|^2$ as given by Lemma~\ref{lem:guaranteeing-uniformity}, and let $\B = \B_T$ (see definition in Section~\ref{sec:domination-and-overview-finitary}).

The facts that $\B$ is a "$k$-blowup"  of $\A$ and that it is "$\eps$-complete" are rephrasings of Lemmas~\ref{lem:guaranteeing-k-blowup-finitary} and~\ref{lem:ensuring-eps-completability} respectively.
There remains to prove that $\B$ recognises $W$ (and that $\B$ is indeed an automaton in the sense that it does not have an accepting run over $\Sigma^*\eps^\omega$, which, as in Section~\ref{sec:existence_automata}, will be proved at the same time).
This is done by adapting Lemmas~\ref{lem:technical1},~\ref{lem:technical2} and~\ref{lem:language-containement} from Section~\ref{sec:existence_automata} to the current construction.

\begin{lemma}\label{lem:odd-transitions-finitary}
    Let $x \in d$ be even, assume that $(q,m) \re{c:y} (q',m')$ in $\B$ for some $y>x$ (which can be even or odd), let $i \in 2|Q|^2$ and let $s_{<x}$ be a node at level $x$ in $T$.
    Then
    \[
        \underbrace{(q,m,s_{<x} s_{x+1}^{(i+1)}0_{>x+2})}_{u} \re c \underbrace{(q',m',s_{<x} s_{x+1}^{(i)}0_{>x+2})}_{u'} \tin \overline G.
    \]
\end{lemma}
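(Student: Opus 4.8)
The statement is exactly the finitary counterpart of Lemma~\ref{lem:technical1}: it asserts that $(q,m)$ weakly $c$-dominates $(q',m')$ at $(s_{<x},i)$, and since $s_{<x}$ and $i$ are arbitrary, that weak $c$-domination holds at the even level $x$. The plan is therefore to show that a transition of $\B$ carrying priority $y>x$ forces weak domination at every even level below $y$, and in particular at $x$.

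First I would attach to $y$ its even level: let $x'$ be the even priority with $y\in\{x',x'+1\}$, so that $x'\ge x$ because $y>x$. By definition of $\B=\B_T$, the transition $(q,m)\re{c:y}(q',m')$ means that $(q,m)$ strongly $c$-dominates $(q',m')$ at level $x'$ when $y=x'$, and weakly when $y=x'+1$; since strong domination implies weak domination, in both cases $(q,m)$ weakly $c$-dominates $(q',m')$ at level $x'$. If $x'=x$ (equivalently $y=x+1$) this is already the desired conclusion, so I may assume $x'\ge x+2$ and push the domination down from level $x'$ to level $x$.

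The downward propagation is the heart of the argument, and the step I expect to be the main obstacle. It rests on the monotonicity recorded at the start of the proof of Lemma~\ref{lem:guaranteeing-k-blowup-finitary}, used in the following general form: if $(q,m,v)\re c (q',m',v')$ in $\overline G$ and $\tilde v\ge v$, $\tilde v'\le v'$ are tuples of $\Sgraph$, then, using the $\eps$-edges $(q,m,\tilde v)\re\eps (q,m,v)$ and $(q',m',v')\re\eps(q',m',\tilde v')$ (available since $\A$ contains every $q\re{\eps:d-1}q$) together with $\eps$-transitivity of $\overline G$, one obtains $(q,m,\tilde v)\re c (q',m',\tilde v')$ in $\overline G$. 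Concretely, I would fix a node $t_{<x'}$ of $T$ extending the $i$-th child $s_{<x}s_{x+1}^{(i)}$ of $s_{<x}$ (such a node exists because $T$ has branching $2|Q|^2\ge 2$ at every level), and invoke weak $c$-domination at $(t_{<x'},0)$, which provides an edge $(q,m,t_{<x'}t_{x'+1}^{(1)}0_{>x'+2})\re c (q',m',t_{<x'}t_{x'+1}^{(0)}0_{>x'+2})$ in $\overline G$.

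It then remains to compare the relevant tuples lexicographically, recalling that children are indexed increasingly, i.e. $s_{x+1}^{(i)}<s_{x+1}^{(i+1)}$ (forced by the definition of $\Sgraph$, as in the proof of Lemma~\ref{lem:guaranteeing-k-blowup-finitary}). The source $t_{<x'}t_{x'+1}^{(1)}0_{>x'+2}$ agrees with $s_{<x}s_{x+1}^{(i+1)}0_{>x+2}$ on all coordinates $<x$ and first differs at coordinate $x+1$, where it carries the smaller value $s_{x+1}^{(i)}$; hence it is $\le s_{<x}s_{x+1}^{(i+1)}0_{>x+2}$. The target $t_{<x'}t_{x'+1}^{(0)}0_{>x'+2}$ agrees with $s_{<x}s_{x+1}^{(i)}0_{>x+2}$ up to coordinate $x+1$, while the latter has zeros beyond, so it is $\ge s_{<x}s_{x+1}^{(i)}0_{>x+2}$. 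Feeding these two inequalities into the monotonicity statement yields precisely $(q,m,s_{<x}s_{x+1}^{(i+1)}0_{>x+2})\re c (q',m',s_{<x}s_{x+1}^{(i)}0_{>x+2})$ in $\overline G$, as desired. I expect the only delicate bookkeeping to be keeping track of which coordinate decides each lexicographic comparison; the structural idea is a direct transfer of Lemma~\ref{lem:technical1} to the discretised setting.
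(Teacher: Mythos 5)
Your proof is correct and follows essentially the same route as the paper: reduce to weak $c$-domination at the even level $x'$ with $y \in \{x',x'+1\}$ (using that strong domination implies weak, and handling $x'=x$ trivially), then transfer the resulting edge down to level $x$ via the $\eps$-edges of $\overline G$ and two lexicographic comparisons — the source comparison decided at coordinate $x+1$ by $s_{x+1}^{(i)} < s_{x+1}^{(i+1)}$, the target one by the all-zero suffix of $u'$. If anything, your version is slightly more careful than the paper's: where the paper invokes domination directly at the tuple $s_{<x} s_{x+1}^{(i)} 0_{(x+2,x')}$, which need not be a node of $T$, you anchor the argument at a genuine node $t_{<x'}$ of $T$ extending $s_{<x}s_{x+1}^{(i)}$ and absorb the resulting discrepancy into the monotonicity step.
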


\begin{proof}
    Let $x'$ be even such that $y \in \{x',x'+1\}$; note that $x \leq x'$.
    By definition of $\B$ (independently of the parity of $y$), $(q,m)$ weakly $c$-dominates $(q',m')$ at level $x'$, so
    \[
        \underbrace{(q,m,s_{<x'} s_{x'+1}^{(i+1)} 0_{>x'+2})}_{v} \re c \underbrace{(q',m',s_{<x'} s_{x'+1}^{(i)} 0_{>x'+2})}_{v'}, \tag*{(1)}\label{eq:1}
    \]
    for every $i$ and $s_{<x'} = s_{<x} s_{x+1}^{(i)} 0_{(x+2,x')}$, where $0_{(x+2,x')}$ denotes the all-zero tuple indexed by odd numbers between $x+2$ and $x'$ (if any).
    Since moreover we have 
    \[
        s_{<x} s_{x+1}^{(i+1)} 0_{>x+2} > s_{<x'} s_{x'+1}^{(i+1)} 0_{>x'+2} \tag*{(2)}\label{eq:2}
    \]
    and 
    \[
        s_{<x'} s_{x'+1}^{(i)} 0_{>x'+2} > s_{<x'} 0_{>x+2} \tag*{(3)}\label{eq:3}
    \]
    which respectively give edges
    \[
        \underbrace{(q,m,s_{<x} s_{x+1}^{(i+1)} 0_{>x+2})}_{u} \re \eps \underbrace{(q,m,s_{<x'} s_{x'+1}^{(i+1)} 0_{>x'+2})}_{v} \tin \overline G
    \]
    and
    \[
        \underbrace{(q',m',s_{<x'} s_{x'+1}^{(i)} 0_{>x'+2})}_{v'} \re \eps \underbrace{(q',m',s_{<x'} 0_{>x+2})}_{u'} \tin \overline G,
    \]
    we get the wanted edge $u \re c u'$ in $\overline G$ by transitivity. 
\end{proof}

\begin{lemma}\label{lem:even-transitions-finitary}
    Assume that $(q,m) \re{c:x} (q',m')$ in $\B$ for some even $x$, and let $s_{<x}$ be a node at level $x$ in $T$.
    Then for every $i \in 2|Q|^2$, it holds that
    \[
        (q,m,s_{<x}s_{x+1}^{(0)}0_{>x+2}) \re c (q',m',s_{<x} s_{x+1}^{(i)}0_{>x+2}).
    \]
\end{lemma}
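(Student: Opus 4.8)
The plan is to mirror the structure of Lemma~\ref{lem:even-transitions-finitary}'s companion (Lemma~\ref{lem:odd-transitions-finitary}), but now exploit the \emph{strong} domination hypothesis coming from an even-priority transition. By the definition of $\B=\B_T$, the assumption $(q,m) \re{c:x} (q',m')$ means precisely that $(q,m)$ strongly $c$-dominates $(q',m')$ at level $x$ in $T$; in particular it does so at the fixed node $s_{<x}$, hence at every pair $(s_{<x},i)$ for $i \in 2|Q|^2$. Unwinding the definition of strong domination at $(s_{<x},i)$ gives directly
\[
    (q,m,s_{<x} s_{x+1}^{(0)} 0_{>x+2}) \re c (q',m',s_{<x} s_{x+1}^{(i)}0_{>x+2}) \tin \overline G,
\]
which is exactly the desired conclusion. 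So the core of the proof is essentially just reading off the definition, and I expect it to be considerably shorter than Lemma~\ref{lem:odd-transitions-finitary}.

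\begin{proof}
    By definition of $\B = \B_T$, the transition $(q,m) \re{c:x} (q',m')$ with $x$ even means that $(q,m)$ strongly $c$-dominates $(q',m')$ at level $x$ in $T$. In particular this holds at the node $s_{<x}$, i.e. at $(s_{<x},i)$ for every $i \in 2|Q|^2$. Unravelling the definition of strong domination at $(s_{<x},i)$ yields exactly
    \[
        (q,m,s_{<x}s_{x+1}^{(0)}0_{>x+2}) \re c (q',m',s_{<x} s_{x+1}^{(i)}0_{>x+2}) \tin \overline G,
    \]
    as required.
\end{proof}

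The only subtlety to double-check is that the definition of strong domination at level $x$ really quantifies over all children indices $i$ (not just $i=b-1$, say), so that the conclusion holds for every $i \in 2|Q|^2$ rather than a single one. This is guaranteed by the phrasing ``$(q,m)$ strongly $c$-dominates $(q',m')$ at $s_{<x}$ if this is the case for all $i \in b$'' together with the uniformity across nodes secured by Lemma~\ref{lem:guaranteeing-uniformity}; thus no further extraction or transitivity argument is needed here. The main obstacle, if any, is purely bookkeeping: confirming that the fixed source child $s_{x+1}^{(0)}$ matches the one appearing in the definition of strong domination, so that the emitted edge has the form stated in the lemma; this is immediate from the convention fixed at the beginning of Section~\ref{sec:domination-and-overview-finitary}.
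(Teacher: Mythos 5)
Your proof is correct and takes essentially the same route as the paper, whose proof consists of the single observation that the conclusion \emph{is} the definition of $(q,m) \re{c:x} (q',m')$ in $\B$ for even $x$: the transition means strong $c$-domination at level $x$, hence at every node $s_{<x}$ and every $i \in 2|Q|^2$, which literally spells out the desired edge in $\overline G$. Your appeal to Lemma~\ref{lem:guaranteeing-uniformity} is harmless but superfluous here, since ``strongly $c$-dominates at level $x$'' already quantifies over all nodes at that level by definition.
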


\begin{proof}
    This is exactly the definition of $(q,m) \re{c:x} (q',m') \tin \B$ for even $x$.
\end{proof}

We are now ready to prove the final result.

\begin{lemma}
    The language of $\B$ is contained in $W$. Moreover, there is no accepting run labelled by a word in $\Sigma^* \eps^\omega$.
\end{lemma}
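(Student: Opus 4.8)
The plan is to mirror the structure of Lemma~\ref{lem:language-containement} from the infinitary proof, using the finitary tools developed in this section, most notably the bound on block sizes provided by Lemma~\ref{lem:simple-runs} (with $b=2|Q|^2$ matching the branching of $T$ from Lemma~\ref{lem:guaranteeing-uniformity}), together with the transition-realisation Lemmas~\ref{lem:odd-transitions-finitary} and~\ref{lem:even-transitions-finitary}. The key point is that, by Lemma~\ref{lem:simple-runs}, it suffices to show that every $(2|Q|^2)$-simple accepting run of $\B$ is labelled by a word in $W = L(\A)$, since $\A$ and $\B$ both have size $\leq k|Q| \leq $ a common bound $n$ (more precisely one applies the lemma with a deterministic automaton recognising $W$, or directly with $\A$). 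This restriction to $(2|Q|^2)$-simple runs is precisely what makes the backward induction over each block finite and bounded by the branching of $T$.

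First I would take a $(2|Q|^2)$-simple accepting run
\[
    (q_0,m_0) \re{c_0:y_0} (q_1,m_1) \re{c_1:y_1} \dots
\]
in $\B$, let $x = \liminf_i y_i$ (even, since the run is accepting), and pick $i_0$ with $y_i \geq x$ for $i \geq i_0$. By $(2|Q|^2)$-simplicity, between any two consecutive occurrences of priority $x$ there are at most $2|Q|^2$ transitions, so the run past $i_0$ decomposes into blocks, each of length $\leq 2|Q|^2$, delimited by positions carrying priority exactly $x$. As in the infinitary proof, I would fix an arbitrary node $s_{<x}$ at level $x$ in $T$ (this is legitimate by the homogeneity guaranteed by Lemma~\ref{lem:guaranteeing-uniformity}, so domination depends only on the level), and decorate each state $(q_i,m_i)$ with a vertex $(q_i,m_i,s_i) \in V(\overline G)$ so that $(q_i,m_i,s_i) \re{c_i} (q_{i+1},m_{i+1},s_{i+1})$ holds in $\overline G$ for every $i$. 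Since $\overline G$ satisfies $W$ and is well-founded (hence admits no path labelled in $\Sigma^*\eps^\omega$), both conclusions of the lemma follow at once.

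The decoration is built per block. For a position $i \geq i_0$ with $y_i = x$, Lemma~\ref{lem:even-transitions-finitary} lets me realise the transition by setting $s_i = s_{<x}s_{x+1}^{(0)}0_{>x+2}$, and the target of that transition lands at $s_{<x}s_{x+1}^{(j)}0_{>x+2}$ for any child index $j$; crucially the source uses child index $0$ regardless of the target, which is exactly the ``strong domination'' freedom. Within a block, i.e.\ for positions $i$ with $y_i > x$, I perform a backward induction: Lemma~\ref{lem:odd-transitions-finitary} shows that if the successor state is decorated with child index $\ell$ (via tuple $s_{<x}s_{x+1}^{(\ell)}0_{>x+2}$), then the current state can be decorated with child index $\ell+1$ while realising the transition in $\overline G$. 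Since each block has length $\leq 2|Q|^2$ and $T$ has branching exactly $2|Q|^2$, the child indices encountered while walking backward through a block stay within $\{0,1,\dots,2|Q|^2-1\}$, so the required children exist and the induction never runs out of room. The block boundary is consistent because the position carrying $x$ that closes the block is decorated (at its source) with child index $0$, which is the smallest, and the backward induction within the following block increases the index from that base; meanwhile Lemma~\ref{lem:even-transitions-finitary} allows the target child index of an $x$-transition to be arbitrary, so it can match whatever index the subsequent block's backward induction assigns.

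The main obstacle is the bookkeeping of child indices across block boundaries: one must ensure that the backward induction inside a block, which increments the index by one per step, never exceeds the branching $2|Q|^2$, and that the decorations produced at the left end of one block agree with the decorations required at the right end of the preceding $x$-transition. This is exactly where $(2|Q|^2)$-simplicity is indispensable, as it caps block length by the branching; without it the backward induction could demand a child index beyond what $T$ provides. A secondary subtlety is handling the finite prefix before $i_0$ (where priorities below $x$ may occur); there I would simply note that the finitely many initial transitions can be absorbed by prepending them and invoking well-foundedness of $\overline G$ on the finite prefix, or more cleanly, observe that an accepting run has only finitely many positions with $y_i < x$, each of which is handled by the same backward-induction step of Lemma~\ref{lem:odd-transitions-finitary} applied within its (finite, by simplicity) block.
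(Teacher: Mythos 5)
Your proposal is correct and follows essentially the same route as the paper's own proof: reduce to $2|Q|^2$-simple accepting runs via Lemma~\ref{lem:simple-runs}, fix an arbitrary node $s_{<x}$ at level $x$ of the homogeneous tree $T$, and decorate position $i$ with the child indexed by the distance $j_i$ to the next occurrence of priority $x$, realising the $j_i=0$ transitions by Lemma~\ref{lem:even-transitions-finitary} and the backward steps inside each block by Lemma~\ref{lem:odd-transitions-finitary}, with simplicity capping $j_i$ below the branching $2|Q|^2$. Even the points you leave slightly informal (the finitely many prefix positions carrying priorities $<x$, and the exact size parameter plugged into Lemma~\ref{lem:simple-runs}, where $\B$ has $k|Q|$ rather than $|Q|$ states) are treated at the same level of detail in the paper itself.
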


The proof is just like for Lemma~\ref{lem:language-containement} in Section~\ref{sec:existence_automata}.

\begin{proof}
    Thanks to Lemma~\ref{lem:simple-runs}, it suffices to prove that accepting "$2|Q|^2$-simple" runs are labelled by words whose projection on $\Sigma$ is infinite and belongs to $W$.
    Take such an accepting run 
    \[
        (q_0,m_0) \re{c_0:y_0} (q_1,m_1) \re{c_1:y_1} \dots
    \]
    in $\B$.
    Let $x=\liminf_i y_i$ (it is even since the run is accepting), and let $i_0$ be such that $y_i \geq x$ for $i \geq i_0$.

    As previously our goal is to endow each $(q_i,m_i)$ with some $s_i \in T$ such that for all $i$, $(q_i,m_i,s_i) \re{c_i} (q_{i+1},m_{i+1},s_{i+1}) \tin \overline G$.
    This implies the result since $\overline G$  "satisfies" $W$, and since it does not have paths labelled by words in $\Sigma^* \eps^\omega$ by well-foundedness.

    To define the $s_i$'s, we proceed as follows.
    Let $s_{< x}$ denote an arbitrarily node at level $x$ in $T$, and let $s_{<x} s_{x+1}^0,\dots, s_{<x} s_{x+1}^{2|Q|^2-1}$ denote its $2|Q|^2$ children in $T$.
    Then for each $i$, we let $j_i$ be the least integer $\geq 0$ such that $y_{i+j_i} = x$, and observe that $j_i<2|Q|^2$ since the run is "$2|Q|^2$-simple".
    Then for each $i$, we set $s_i=s_{<x} s_{x+1}^{j_i} 0_{>x+2}$.

    For $i$'s such that $j_i>0$, we get that indeed $(q_i,m_i,s_i) \re {c_i} (q_{i+1},m_{i+1},s_{i+1})$ thanks to Lemma~\ref{lem:odd-transitions-finitary}.
    For $i$'s such that $j_i=0$, the same is true thanks to Lemma~\ref{lem:even-transitions-finitary}.
    This concludes the proof.
\end{proof}

As before, the proof adapts directly to the "chromatic@@mem" case: if $G$ is "chromatic@@graph" with update $\chi$ then so is $\B$.

\subsection{Existence of deterministic $\eps$-completable automata}\label{sec:existence-det-automata}

We now prove the implication from (\ref{item:existence-automata}) to (\ref{item:existence-det-automata}) in Theorem~\ref{thm:main-charac}.
Take $\A$ to be a deterministic "automaton" recognising $W$, and let $\B^\eps$ be the obtained "$k$-blowup" of $\A$ which is (chromatically) "$k$-wise $\eps$-complete" and recognises $W$.
Then let $\B$ be obtained from $\B^\eps$ by only keeping, for each state $(q,m) \in V(\B)$ and each transition $q \re{a:y} q'$ in $\A$, a single transition of the form $(q,m) \re {a:y} (q',m')$ chosen arbitrarily.
Note that $\B$ is a "$k$-blowup" of $\A$, so we have: $L(\A) \subseteq L(\B) \subseteq L(\B^\eps)$.
We conclude that $\B$ is a "deterministic" (chromatically) "$k$-wise $\eps$-completable" "automaton" recognising $W$, as required.

\subsection{From deterministic $\eps$-completable automata to universal graphs}\label{sec:existence-universal-graphs}

We now prove the implication from (\ref{item:existence-det-automata}) to (\ref{item:existence-universal-graph}) in Theorem~\ref{thm:main-charac}.
This result was already proved in~\cite[Prop.~5.30]{CO24Arxiv} for the case of $k=1$; extending to greater values for $k$ presents no difficulty.

Let $\B$ be a "deterministic"\footnote{For the purpose of this proof, history-determinism would be sufficient (we refer to~\cite{BL23SurveyHD} for the definition and context on history-determinism).} ("chromatic@@mem") "$k$-wise $\eps$-completable" "automaton" recognising $W$ and $\B^\eps$ be an "$\eps$-completion".
Fix a cardinal $\kappa$ and let $\Sgraph$ denote $S_d^\kappa$.
Consider the "cascade product" $U = \B^\eps \casc S$.
We claim that $U$ is ("chromatic@@mem") "$k$-wise $\eps$-complete" and $(\kappa,W)$-"universal".
Universality follows from the facts that $\B$ is "deterministic" and $\Sgraph$ is $(\kappa,\Parity_d)$-"universal".

\begin{claim}
    The "graph" $U = \B^\eps \casc S$ is $(\kappa,W)$-"universal".
\end{claim}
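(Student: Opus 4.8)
The plan is to embed an arbitrary "infinite tree" $T$ with $|V(T)| < \kappa$ "satisfying" $W$ into $U$ by means of a "morphism" of the form $\phi(t) = (q_t, \psi(t))$, where $q_t \in V(\B^\eps)$ is a state labelling coming from a "deterministic" run of $\B$, and $\psi(t) \in V(S)$ is a signature labelling coming from the universality of $S = \Skappad$. The root will map to the initial vertex of $U$, and the two labellings will combine cleanly through the definition of the "cascade product".

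First I would define $q_t$ by running the "deterministic" "automaton" $\B$ on the $\Sigma$-projection of the unique path from the root $t_0$ to $t$: an $\eps$-edge of $T$ leaves the $\Sigma$-projection, and hence $q_t$, unchanged, whereas a $\Sigma$-edge $t \re a t'$ gives a transition $q_t \re{a:y} q_{t'}$ of $\B$ (and thus of $\B^\eps$). This run is well defined and never gets stuck: $T$ is sinkless, so every node lies on an infinite branch, whose label lies in $\Weps$ (as $T$ "satisfies" $W$), so the branch's $\Sigma$-projection is an infinite word of $W = L(\B)$ on which the "deterministic" $\B$ has a unique accepting run. In particular $q_{t_0}$ is the initial state of $\B$. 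Next I would record a priority-labelled tree $T'$ on the vertex set $V(T)$, assigning to each $\Sigma$-edge the priority $y$ of its $\B$-transition and to each $\eps$-edge the priority $d-1$. Every infinite path of $T'$ comes from an infinite path of $T$ that has infinitely many $\Sigma$-edges (its label is in $\Weps$), and the subsequence of $\Sigma$-priorities is exactly the accepting $\B$-run, so its $\liminf$ is some even $x \leq d-2$; the interspersed priority-$(d-1)$ edges cannot lower this $\liminf$, so $T'$ "satisfies" $\Parity_d$. As $|V(T')| = |V(T)| < \kappa$, $(\kappa,\Parity_d)$-universality of $S$ yields a "morphism" $\psi : T' \to S$ with $\psi(t_0)$ "satisfying" $\Parity_d$ in $S$.

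It then remains to verify that $\phi = (q,\psi)$ is a "morphism" $T \to U$ and that $\phi(t_0)$ "satisfies" $W$. For a $\Sigma$-edge $t \re a t'$ we have $q_t \re{a:y} q_{t'}$ in $\B^\eps$ and $\psi(t) \re y \psi(t')$ in $S$, which by definition of the "cascade product" gives $\phi(t) \re a \phi(t')$ in $U$. For an $\eps$-edge $t \re \eps t'$ we have $q_{t'} = q_t$; here I would use the $\eps$-self-loop $q_t \re{\eps:d-1} q_t$ forced by "$\eps$-completeness" (the even self-loop $\re{\eps:d-2}$ is excluded as an accepting $\eps$-cycle), together with $\psi(t) \re{d-1} \psi(t')$, i.e. $\psi(t) > \psi(t')$, to obtain $\phi(t) \re \eps \phi(t')$ in $U$. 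Finally, $\phi(t_0) = (q_0, \psi(t_0))$ is the initial vertex of $U$, and it "satisfies" $W$ by a pointed version of Lemma~\ref{lem:cascade_products}: any infinite path from it projects, on one side, to an infinite path from $\psi(t_0)$ in $S$ whose priorities have even $\liminf$, and on the other, to a run of $\B^\eps$ carrying the same priorities, which is therefore accepting; since $\B^\eps$ is a genuine "automaton", its label has infinite $\Sigma$-projection lying in $L(\B^\eps) = W$, so the label is in $\Weps$.

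The step I expect to be the main obstacle is the correct treatment of the $\eps$-edges of $T$: one must keep the $\B$-state constant across them yet still realise a genuine $\eps$-edge in $U$, which is precisely what the priority-$(d-1)$ $\eps$-self-loops (guaranteed by "$\eps$-completeness") together with the forced strict decrease of signatures (coming from the priority-$(d-1)$ edges of $T'$, reflecting "well-foundedness" of $T$) achieve. Everything else is a routine unfolding of the "cascade product" definition; I also note that determinism of $\B$ is used only to make $q_t$ well defined and could be relaxed to history-determinism by choosing runs coherently along branches.
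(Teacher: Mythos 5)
Your proof is correct and follows essentially the same route as the paper's: label the tree's nodes by running the deterministic automaton $\B$, re-label the tree's edges with the resulting priorities, invoke $(\kappa,\Parity_d)$-universality of $S$ to get the signature labelling, and pair the two labellings into a morphism into the cascade product. The only difference is that you spell out the treatment of $\eps$-edges (constant $\B$-state, priority $d-1$, and the forced odd $\eps$-self-loops of $\B^\eps$ coming from $\eps$-completeness), which the paper compresses into a single unelaborated direct check.
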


\begin{claimproof}
    Take a tree $T$ of size $<\kappa$ which "satisfies" $W$.
    We should show that $T \to U$.
    Since $\B$ is "deterministic", we can define a labelling $\rho\colon V(T) \to V(\B)$ by mapping $t_0$ to $q_0$ and $t \mapsto q$ if the run of $\B$ on the finite word labelling the path $t_0\re{}t$ ends in $q$. Then, any infinite path from $t_0$ in $T$ is mapped to a run in $\B$ that is accepting (since $T$ "satisfies" $W$).
    Therefore the tree $T'$ obtained by taking $T$ and replacing edge-labels by the priorities appearing in their $\rho$-images "satisfies" $\Parity_d$ and has size $<\kappa$, so there is a "morphism" $\mu:T \to S$.
    It is a direct check that $(\rho,\mu) \colon V(T) \to V(\B) \times V(S) = V(U)$ indeed defines a "morphism".
\end{claimproof}

Showing that $U$ is "$k$-wise $\eps$-complete" is slightly trickier.

\begin{claim}
    The "graph" $U = \B^\eps \casc S$ is "well-founded" and ("chromatic@@mem") "$k$-wise $\eps$-complete".
\end{claim}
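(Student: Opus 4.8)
The plan is to treat the two assertions separately, leaning on the explicit description of $\Sgraph = \Skappad$ from the preliminaries.

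\textbf{Well-foundedness.} First I would observe that $\Sgraph$ "satisfies" $\Parity_d$ and then invoke Lemma~\ref{lem:cascade_products}. To see the former, take an infinite path $s^0 \re{y_0} s^1 \re{y_1} \cdots$ in $\Sgraph$ and set $x = \liminf_i y_i$; I claim $x$ is even. If $x$ were odd, then from some index on $y_i \geq x$ with $y_i = x$ infinitely often, and inspecting the edge relation of $\Sgraph$ one checks that each step weakly decreases the truncation $s^i_{\leq x}$ for the lexicographic order, and strictly decreases it exactly when $y_i = x$. This produces an infinitely-often-strictly-decreasing, never-increasing sequence of tuples of ordinals, contradicting well-foundedness of the lexicographic order. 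Hence $x$ is even, so $\Sgraph$ "satisfies" $\Parity_d$, and Lemma~\ref{lem:cascade_products} applies to $U = \B^\eps \casc \Sgraph$, giving at once that $U$ is "well-founded" (and that it "satisfies" $L(\B^\eps)=W$).

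\textbf{Reduction for $\eps$-completeness.} Recall that the states of $U$ are triples $(q,m,s)$ with "memory state" $m$, and that by definition of the "cascade product" an $\eps$-edge $(q,m,s)\re\eps(q',m,s')$ exists precisely when some priority $y$ satisfies $(q,m)\re{\eps:y}(q',m)$ in $\B^\eps$ and $s \re y s'$ in $\Sgraph$. Since $U$ is a "well-founded" "graph", being "$k$-wise $\eps$-complete" reduces to totality of $\re\eps$ on each class of fixed memory $m$. I therefore fix two distinct states $u=(q,m,s)$ and $u'=(q',m,s')$ with the same $m$ and aim to produce an $\eps$-edge in one direction.

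\textbf{The main case $s \neq s'$.} Let $p$ be the least (most significant) odd index at which $s$ and $s'$ differ, and assume w.l.o.g.\ $s_p > s'_p$; put $x = p-1$ (even). A direct reading of the edges of $\Sgraph$ then gives $s \re{x+1} s'$ and $s' \re{x} s$, because the coordinates below $x$ coincide and the decisive coordinate is $p = x+1$. Now apply "$k$-wise $\eps$-complete"-ness of $\B^\eps$ at the even priority $x$ to the ordered pair $\big((q',m),(q,m)\big)$: either $(q',m)\re{\eps:x}(q,m)$, which together with $s'\re{x}s$ yields $u'\re\eps u$; or $(q,m)\re{\eps:x+1}(q',m)$, which together with $s\re{x+1}s'$ yields $u\re\eps u'$. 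Either way we get the desired $\eps$-edge. The chromatic case is inherited verbatim, since $\Sgraph$ leaves the memory coordinate $m$ untouched and it is updated in $U$ exactly as in $\B^\eps$.

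\textbf{The degenerate case $s=s'$ and the main obstacle.} When $s=s'$ (so $q\neq q'$) the argument above breaks, and this is the delicate point I expect to be the crux. Here an $\eps$-edge in $U$ needs an \emph{even} priority $y$ with $(q,m)\re{\eps:y}(q',m)$ or $(q',m)\re{\eps:y}(q,m)$, because $\Sgraph$ admits the self-loop $s\re y s$ only for even $y$ (the odd edges of $\Sgraph$ are strict). Applying "$k$-wise $\eps$-complete"-ness of $\B^\eps$ at each even level does produce such an edge as soon as $(q,m)$ and $(q',m)$ are separated by some even-priority $\eps$-transition; the sole obstruction is a pair of distinct states that are mutually reachable by odd-priority $\eps$-transitions at every even level, i.e.\ $\eps$-equivalent states. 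I would resolve this by noting that two such states have the same language in $\B^\eps$ (prepending the odd-priority $\eps$-edge linking them to any accepting run does not change its $\liminf$), so we may assume w.l.o.g.\ that the chosen "$\eps$-completion" $\B^\eps$ contains no such pair—identifying $\eps$-equivalent states beforehand, which preserves both $L(\B^\eps)$ and the property of being "$k$-wise $\eps$-complete". Under this mild reduction the required even separation always exists at some level, and $U$ is "$k$-wise $\eps$-complete".
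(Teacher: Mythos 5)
Your overall route is the same as the paper's: well-foundedness via Lemma~\ref{lem:cascade_products} (your verification that $\Sgraph$ satisfies $\Parity_d$ is a detail the paper leaves implicit, and it is correct), and $\eps$-completeness by matching one application of $\eps$-completeness of $\B^\eps$ with an edge of $\Sgraph$ at the decisive level. Your main case $s \neq s'$ is correct and is essentially the paper's argument in cleaner form: the paper instead introduces $x_0$ (least even priority with $b \re{\eps:x_0} b'$) and $y_0$ (least odd priority with $s'_{\leq y_0} > s_{\leq y_0}$) and splits on $x_0 < y_0$ versus $y_0 < x_0$, which amounts to the same use of completeness at the most significant coordinate where $s$ and $s'$ differ.

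Where you diverge is the case $s = s'$, and here you are right to worry --- in fact you are more careful than the paper. The paper's case (2) tacitly assumes $y_0$ is a genuine priority; when $s = s'$ (so $y_0$ is the sentinel $d+1$) and no even $\eps$-transition joins $b$ and $b'$ in either direction, neither of its cases produces an edge, and indeed none exists. The claim as stated is genuinely false for an arbitrary completion: take $\Sigma = \{a\}$, $W = \{a^\omega\}$, $\B$ the deterministic automaton $q_0 \re{a:0} q_1$, $q_1 \re{a:0} q_0$, and $\B^\eps$ obtained by adding exactly $q_0 \rer{\eps:1} q_1$ and the odd self-loops. This is a valid $1$-wise $\eps$-completion recognising $W$, yet in $\B^\eps \casc \Sgraph$ the vertices $(q_0,s)$ and $(q_1,s)$ admit no $\eps$-edge in either direction, since all $\eps$-transitions of $\B^\eps$ are odd while every edge $s \re{y} s$ of $\Sgraph$ is even. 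So some repair along the lines you propose is actually needed to make this step of Theorem~\ref{thm:main-charac} go through.

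Your repair (merging $\eps$-equivalent states) is sound, but your one-line justification --- that equivalent states have the same language --- is not by itself sufficient. Two points need proof: (a) the equivalence you quotient by must be transitive, and (b) merging must not create an even-priority $\eps$-cycle through an intermediate state, which would yield an accepting run over a word of $\Sigma^*\eps^\omega$ and destroy both the automaton property and well-foundedness of the product. Both follow from one structural observation: if $b$ and $b'$ are mutually linked by odd $\eps$-transitions at every level, then any $\eps$-path between them whose minimal priority is even would close, using the odd transition back, into an $\eps$-cycle with even liminf, contradicting that $\B^\eps$ is an automaton. Hence no such even paths exist, the relation is transitive, and merging is safe; preservation of the language then follows by simulating each crossing of a merged class with an inserted $\re{\eps:d-1}$ transition (which cannot lower the liminf of an accepting run), rather than from equality of state languages alone. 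With these points added, your argument is complete, and it also repairs the gap in the paper's own proof.
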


\begin{claimproof}
    Well-foundedness of $U$ follows directly from Lemma~\ref{lem:cascade_products}.
    Let us write $B_1,\dots,B_k$ for the $k$ parts of $\B^\eps$; the $k$ parts of $U$ will be $B_1 \times V(S),\dots, B_k \times V(S)$.
    If applicable, chromaticity is a direct check.
    Let $(b,s),(b',s') \in V(U)$ be in the same part, i.e.~$b,b' \in B_i$ for some $i$.
    Let $x_0$ be the minimal even priority such that 
    $b \re{\eps:x_0} b'$ in $\B^\eps$ 
    (if such an $x$ does not exist then $x_0=d+2$).
    Then let $y_0$ be the minimal odd priority such that $s'_{\leq y_0} > s_{\leq y_0}$ (as previously, if $s=s'$ then we let $y_0=d+1$).
    We distinguish two cases.
    \begin{enumerate}[(1)]
        \item If $x_0 < y_0$. Then we have $b \re{\eps:x_0} b'$ in $\B^\eps$ and $s_{<x_0} = s'_{<x_0}$ which gives $s \re{x_0} s'$ in $\Sgraph$.
        Thus we get $(b,s) \re{\eps} (b',s')$ in $U$.
        \item If $y_0 < x_0$. Then $s'_{\leq y_0} > s_{\leq y_0}$, which gives $s' \re{y_0} s$ in $\Sgraph$.
        Since $\B^\eps$ is "$k$-wise $\eps$-complete" and by definition of $x_0$, we also have $b' \re{\eps:y_0} b$ in $\B^\eps$, therefore $(b',s') \re{\eps} (b,s)$ in $U$.
    \end{enumerate}

    We conclude that either $(b,s)\re{\eps}(b',s')$ or $(b',s') \re{\eps} (b,s)$, as required.
\end{claimproof}

\section{Union of objectives}\label{sec:union}

In this section, we establish a strong form of the "generalised Kopczyński conjecture" for $\BCSigma$ objectives.
Recall that an objective $W \subseteq \Sigma^\omega$ is "prefix-increasing"\footnote{In other papers~\cite{Ohlmann21PhD,Ohlmann23,CO25LMCS} this notion is called prefix-decreasing, as Eve is seen as a ``minimiser'' player who aims to minimise some quantity.}
 if for all $a \in \Sigma$ and $w \in \Sigma^\omega$, it holds that if $w \in W$ then $aw \in W$.
In words, one remains in $W$ when adding a finite prefix to a word of $w$.
Examples of "prefix-increasing" objectives include prefix-independent and closed objectives.

\thmUnion*


\begin{remark}
The assumption that one of the two objectives is "prefix-increasing" is indeed required: for instance if $W_1=aa(a+b)^\omega$ and $W_{2}=bb(a+b)^\omega$, which are positional but not "prefix-increasing", the union $(aa+bb)(a+b)^\omega$ is not positional (it has "memory" $2$).

\end{remark}
\begin{remark}
    The bound $k_1k_2$ in Theorem~\ref{thm:union} is tight: For every $k_1, k_2$, there are objectives $W_1$, $W_2$ with memories $k_1, k_2$ respectively, such that $W_1 \cup W_2$ has "memory" exactly $k_1k_2$. 
    One such example is as follows: let $\SS = \{a_1,\dots, a_{k_1}, b_1,\dots, b_{k_2}\}$ and 
    $W_1 = \{ w \mid w \text{ contains at } \text{least } \text{two } \text{different } a_i \text{ infinitely often} \}$ and $W_2 = \{ w \mid w$ contains at least two  $b_i$ infinitely often$\}$. 
    We can see that $W_1$, $W_2$ and $W_1\cup W_2$ have memory, respectively, $k_1, k_2$, and  $k_1\cdot k_2$ by building the Zielonka tree of these objectives and applying~\cite[Thms.~6,~14]{DJW1997memory}.
\end{remark}

The rest of the section is devoted to the proof of Theorem~\ref{thm:union}.
We explicit a construction for the union of two "parity automata", inspired from the Zielonka tree of the union of two parity conditions, and show that it is "$(k_1 k_2)$-wise $\eps$-completable".

\subparagraph{Union of parity languages.} We give an explicit construction of a deterministic parity automaton $\mathcal{T}$ recognising the union of two parity languages, which may be of independent interest.  This corresponds to the automaton given by the Zielonka tree of the union (a reduced and structured version of the LAR).

We let $\done = \{0, 1,\dots,d_1\}$  and $\dtwo= \{1^*,\dots,d_2^*\}$.
Let $\doneodd$ and $\dtwoodd$ denote the restrictions of $\done$ and $\dtwo$ to odd elements.
By a slight abuse of notation, we sometimes treat elements in $\dtwo$ as natural numbers (e.g. when comparing them).\\

\textit{Alphabet.} 
The input alphabet is $\done \times \dtwo$, and we write letters as $(y,z)$. 
The "index" of $\T$  is at most $d_1 + d_2+2$, we use the letter $t$ for its output priorities.\\

\textit{States}. 
States are given by interleavings of the sequences $\langle 1,3,\dots,d_1\rangle$ and $\langle 1^*,3^*,\dots,d_2^*\rangle$  of the elements in $\doneodd$ and $\dtwoodd$ (assuming $d_1, d_2$ odd). Any state can be taken as initial. For instance, for $d_1=d_2=5$, an example of a state is:\\[-4mm]
\[  \hspace{15mm} \tau = \langle \tikzmark{tau1}1,\tikzmark{tau2}3,\tikzmark{tau3}1^*,\tikzmark{tau4}5, \tikzmark{tau5}3^*, \tikzmark{tau6}5^* \rangle .\]

\begin{tikzpicture}[remember picture, overlay] 
    \foreach \lbl in {1,...,6} {
        \node[below, xshift = 1.4mm, yshift = -1mm, scale=0.7, color = lipicsGray] at (pic cs:tau\lbl) {$\lbl$};
    }
\end{tikzpicture}


\vspace{-2.5mm}

We use $\tau$ to denote such a sequence, which we index from $1$ to $(d_1+d_2)/2$, and write $\tau[i]$ for its $i$-th element.
\AP For $y\in \done$, $y$ odd,  we let $\intro*\indtau{y} = i$ for the index such that $\tau[i]=y$. For $y\geq 2$ even, we let $\indtau{y}$ be the index $i$ such that $\tau[i]=y-1$. We let $\indtau{0} = 0$. We use the same notation for $z\in \dtwo$, $z\geq 1$. For example, in the state above, $\indtau{2^*}=3$.

The intuition is that a state stores a local order of importance between input priorities.\\

\textit{Transitions.}
Let $\tau$ be a state and $(y,z)$ an input letter. We define the transition $\tau \re{(y,z):t}\tau'$ as follows:

Let $i = \min \{ \indtau{y}, \indtau{z}\}$. 
In the following, we assume $i = \indtau{y}$ (the definition for $i = \indtau{z}$ is symmetric). 
We let $t = 2i$, if $y$ even, and  $t = 2i-1$ if $y$ is odd.
If $y$ is even, we let $\tau' = \tau$.
If $y$ is odd, let $i'$ be the smallest index $i<i'$ such that $\tau[i']\in \dtwo$, and let $\tau'$ be the sequence obtained by inserting $\tau[i']$ on the left of $\tau[i]$ (or $\tau' = \tau$ if no such index $i'$ exists). 
Formally,
\[ \tau'[j] = \tau[j] \tfor j<i \tand i'<j,  \quad \tau'[i] = \tau[i'] \quad \tand \quad \tau'[j] = \tau[j-1] \tfor i<j\leq i' .\]

For example, for the state above, we have:
$ \langle 1,3,1^*, 5, 3^*, 5^* \rangle  \re{ (3,2^*):3} \langle 1, 1^*, 3, 5, 3^*, 5^* \rangle. $\\

For $w=(y_1,z_1)(y_2,z_2)\dots\in (\done \times \dtwo)^\omega$, we let $\pi_1(w) = y_1y_2\dots$ and $\pi_2(w) = z_1z_2\dots$.

\begin{lemma}\label{lem:language-of-T}
    The automaton $\T$ recognises the language \[L = \{w \in (\done \times \dtwo)^\omega \mid \pi_1(w) \in \Parity_{d_1} \ \mathrm{or} \ \pi_2(w) \in \Parity_{d_2}\}.\]
\end{lemma}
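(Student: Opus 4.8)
The plan is to fix a word $w=(y_0,z_0)(y_1,z_1)\cdots \in (\done\times\dtwo)^\omega$, run the deterministic automaton $\T$ to obtain a sequence of states and outputs $\tau_0 \re{(y_0,z_0):t_0}\tau_1\re{}\cdots$, and compare $\liminf_n t_n$ with membership in $L$. Writing $p_1=\liminf_n y_n$ and $p_2=\liminf_n z_n$, we have $w\in L$ iff $p_1$ or $p_2$ is even, so the goal is to show that $\liminf_n t_n$ is even iff $p_1$ or $p_2$ is even. Let $i_n=\min(\indextau{\tau_n}{y_n},\indextau{\tau_n}{z_n})$ be the dominant index at step $n$. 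Two observations read straight off the transition rule drive everything: (i) $t_n\in\{2i_n-1,\,2i_n\}$, with $t_n$ odd precisely when the dominant input priority is odd; and (ii) a reshuffle of $\tau_n$ happens exactly when the dominant priority is odd, in which case it only alters entries at positions $\ge i_n$ and promotes a priority of the \emph{other} component to position $i_n$. It is also worth recording that the relative order of $\doneodd$ among themselves, and of $\dtwoodd$ among themselves, is invariant under every transition, so a state is exactly an interleaving of the two fixed increasing sequences.

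Next I would run a stabilisation argument around $i^*=\liminf_n i_n$. Since $t_n\ge 2i_n-1$, observation (i) gives $\liminf_n t_n\in\{2i^*-1,2i^*\}$, and this value is odd iff infinitely many $n$ satisfy $i_n=i^*$ with the dominant priority odd. As $i_n\ge i^*$ eventually, observation (ii) forces the prefix $\tau_n[1],\dots,\tau_n[i^*-1]$ to be eventually constant, equal to some interleaving $\sigma$; the only entry at position $\le i^*$ that can still change is the one at position $i^*$, and it changes only through a reshuffle happening exactly at position $i^*$. Writing $\Sigma_1\subseteq\doneodd$ and $\Sigma_2\subseteq\dtwoodd$ for the priorities occurring in $\sigma$, invariance of the relative orders makes these sets downward closed, hence determined by (odd) thresholds $\alpha_1=\min(\doneodd\setminus\Sigma_1)$ and $\alpha_2=\min(\dtwoodd\setminus\Sigma_2)$. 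A short computation then shows the first-component priorities lying at a position $<i^*$ are exactly those $<\alpha_1$ (and symmetrically for the second), so $i_n\ge i^*$ forces $y_n\ge\alpha_1$ and $z_n\ge\alpha_2$ eventually, whence $p_1\ge\alpha_1$, $p_2\ge\alpha_2$, and the entry at position $i^*$ is always either $\alpha_1$ or $\alpha_2$.

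The proof then concludes by a case analysis on the parities of $p_1,p_2$, showing that the oscillating entry at position $i^*$ locks onto the ``winning'' component. If $p_1$ is even then $\alpha_1=p_1-1$, and whenever the entry at $i^*$ is $\alpha_2$ and an odd priority of the second component is read, the reshuffle promotes $\alpha_1$ to position $i^*$; thereafter reading the even priority $p_1$ never reshuffles, so the entry stabilises on $\alpha_1$ and every dominant priority at $i^*$ is the even $p_1$, giving $\liminf_n t_n=2i^*$ (matching $w\in L$); the case $p_2$ even is symmetric. When both $p_1,p_2$ are odd we get $\alpha_1=p_1$, $\alpha_2=p_2$, and reading either relevant odd priority promotes the other component's element, so the entry at $i^*$ oscillates between $\alpha_1$ and $\alpha_2$ and the odd output $2i^*-1$ occurs infinitely often, giving $\liminf_n t_n=2i^*-1$ (matching $w\notin L$). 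The degenerate case $p_1=0$ or $p_2=0$ is immediate, since reading priority $0$ yields dominant index $0$ and output $0$.

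The main obstacle will be the stabilisation step together with the verification that the entry at $i^*$ locks onto the winning component: proving that the thresholds satisfy $\alpha_1\in\{p_1,p_1-1\}$ (and symmetrically) requires ruling out, via the reshuffle dynamics, configurations in which position $i^*$ would be reached only finitely often — exactly the bookkeeping that makes this a Latest-Appearance-Record-style argument. A less computational alternative would be to observe that $\T$ is precisely the deterministic parity automaton obtained from the Zielonka tree of the Muller condition $\{S\subseteq\done\times\dtwo \mid \min\pi_1(S)\text{ or }\min\pi_2(S)\text{ is even}\}$ and to invoke the general correctness of Zielonka-tree automata; I would nonetheless favour the direct analysis above, since it also exposes the ordering structure on states that is reused to establish $(k_1k_2)$-wise $\eps$-completability.
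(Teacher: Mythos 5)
Your route differs from the paper's: the paper proves only the inclusion $L\subseteq L(\T)$ directly (by tracking the position of $\ymin-1$ in $\tau_n$, which is non-increasing and stabilises) and obtains the converse inclusion from Lemma~\ref{lem:overline-T}, whereas you run a single two-directional analysis of the deterministic run around $i^*=\liminf_n i_n$. That skeleton is viable, and your preliminary steps are all correct: observations (i) and (ii), the freezing of the prefix $\tau_n[1..i^*-1]$, the downward-closedness of $\Sigma_1,\Sigma_2$, and the fact that the entry at position $i^*$ is eventually always $\alpha_1$ or $\alpha_2$. The problem is the pivotal claim of your accepting case, which is exactly the step you flag as the main obstacle: \emph{if $p_1$ is even then $\alpha_1=p_1-1$ and the entry at $i^*$ locks onto $\alpha_1$}. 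This is false, so no amount of bookkeeping will prove it. Counterexample ($d_1=d_2=5$): take initial state $\tau_0=\langle 1^*,1,3,3^*,5,5^*\rangle$ and input $(4,2^*)^\omega$. At every step $\indextau{\tau_0}{2^*}=1<3=\indextau{\tau_0}{4}$, the dominant priority is the \emph{even} $2^*$, so the output is $2$ and the state never changes. Here $p_1=4$ is even, yet $i^*=1$, $\alpha_1=1\neq p_1-1=3$, and the entry at position $i^*$ sits at $\alpha_2=1^*$ forever; your promised promotion of $\alpha_1$ never fires, because no odd second-component priority is ever dominant. (The run is still accepting, with liminf output $2$, but via component~$2$ — the point is that the parity of $p_1$ does not determine which component the entry locks onto, so $\alpha_1\in\{p_1,p_1-1\}$ can fail whenever $p_2$ is also even.)

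The repair is local and consists in reversing the direction of inference: do not predict the locking component from the parities; instead prove (a) if both $p_1,p_2$ are odd, the entry at $i^*$ cannot be eventually constant — if it were constantly $\alpha_1$, then $i_n=i^*$ infinitely often forces $y_n\in\{\alpha_1,\alpha_1+1\}$ infinitely often, and both options contradict either constancy (reading $y_n=\alpha_1$ triggers a change, since $\alpha_2$ exists at a position $>i^*$) or the oddness of $p_1$ (via $p_1\le\alpha_1+1$ and $p_1\ge\alpha_1$); hence the entry changes infinitely often, each change outputs $2i^*-1$, and the run is rejected; (b) if some $p_l$ is even, then changes away from $\alpha_l$ require the component-$l$ input priority to equal $\alpha_l<p_l$, which happens only finitely often, so the entry is eventually constant; after that, dominant-odd events at index $i^*$ (the only source of output $2i^*-1$ once $i_n\ge i^*$, and each of which would change the entry) cease, so $\liminf_n t_n=2i^*$ is even. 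You also need the small observation that $\alpha_1$ and $\alpha_2$ both exist (otherwise every $\indextau{\tau_n}{y_n}$, resp.\ $\indextau{\tau_n}{z_n}$, would be $<i^*$, contradicting $i_n\ge i^*$). With this reorganisation your argument goes through, and your identity $\alpha_l=p_l-1$ for the component on which the entry stabilises becomes a consequence rather than a premise.
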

\begin{proof}
    We show that $L\subseteq L(\T)$, the other inclusion is similar (and implied by Lemma~\ref{lem:overline-T} below).
    Let $(y_1,z_1)(y_2,z_2)\dots \in L$, and assume w.l.o.g. that  $y_1y_2\dots \in \Parity_{d_1}$.
    Let $\ymin = \liminf y_i$, which is even, and let $n_0$ be so that for all $n \geq n_0$ we have $\ymin \leq y_n$.
    Let
    \[
        \tau_0 \re{(y_1,z_1):t_1} \tau_1 \re{(y_2,z_2):t_2} \dots
    \]
    denote the corresponding run in $\T$.
    Let $i_n = \indextau{\tau_n}{\ymin}$ be the index where $\ymin-1$ appears in $\tau_n$. 
    Note that the sequence $(i_n)_{n\geq n_0}$ is decreasing. 
    Let $n_1$ be the moment where this sequence stabilises, i.e., $i_n = i_{n_1}$ for $n\geq n_1$.
    By definition of the transitions of $\T$, for $n\geq n_1$ all output priorities are $\geq 2i_{n_1}$, and priority $2i_{n_1}$ is produced every time that a letter $(\ymin, z_n)$ is read.
    We conclude that $\T$ accepts $w$.
\end{proof}

\subparagraph{$0$-freeness of automata for prefix-increasing objectives.} The fact that $W_2$ is "prefix-increasing" will be used via the following lemma. It recasts the fact that we can add $\re{\eps:1}$ transitions everywhere to automata recognising "prefix-increasing" objectives.

\begin{lemma}\label{lem:prefix-increasing}
    Let $W$ a be "prefix-increasing" objective with "memory" $\leq k$.
    There exists a deterministic "$k$-wise $\eps$-completable" automaton $\A$ recognising $W$ and an "$\eps$-completion" $\A^\eps$ of $\A$ such that $\A^\eps$ does not have any transition with priority $0$.
\end{lemma}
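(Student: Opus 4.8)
The plan is to start from the deterministic $\eps$-completable automaton provided by the characterisation, push all priorities away from $0$ by an even shift, and then re-complete the coarsest level using only priority-$1$ $\eps$-transitions; prefix-increasingness is exactly what should guarantee that this last step does not enlarge the language.

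Concretely, I would first apply Theorem~\ref{thm:main-charac}, item~(\ref{item:existence-det-automata}), to obtain a deterministic $k$-wise $\eps$-completable automaton $\A_0$ recognising $W$ together with an $\eps$-completion $\A_0^\eps$ using priorities in $\{0,\dots,d-1\}$. Then add $2$ to every priority of both $\A_0$ and $\A_0^\eps$. Since the shift is even it preserves the parity of $\liminf$ along every run, so the shifted automaton $\A$ still recognises $W$, is still deterministic, and is now $0$-free (all its priorities are $\geq 2$). The shifted completion inherits the $\eps$-completeness condition at every even level $x \geq 2$, but the condition at level $x=0$ is now broken, since no priority $0$ or $1$ is available. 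I would repair it in the only way compatible with $0$-freeness: add a transition $(q,m) \re{\eps:1} (q',m)$ for every ordered pair of states sharing a memory state $m$, obtaining $\A^\eps$. By construction $\A^\eps$ is $k$-wise $\eps$-complete (at level $0$ every pair is witnessed by an $\eps:1$-transition), it is $0$-free, and $\A \subseteq \A^\eps$, so $\A^\eps$ is an $\eps$-completion of $\A$ as soon as it still recognises $W$. One also checks that $\A^\eps$ remains a genuine automaton: the added transitions carry the odd priority $1$, so they cannot create an accepting $\eps$-cycle, nor an accepting run over $\Sigma^* \eps^\omega$.

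It then remains to prove $L(\A^\eps)=W$, the inclusion $W \subseteq L(\A^\eps)$ being immediate from $\A \subseteq \A^\eps$. For the converse, the decisive observation is that $0$-freeness forces $\liminf \geq 2$ along any accepting run, so the priority $1$, carried exclusively by the newly added transitions, can occur only finitely often. Hence any accepting run of $\A^\eps$ eventually stops using the added transitions, and its tail is an accepting run of the shifted completion $\A_0^\eps$ read from some intermediate state. Prefix-increasingness of $W$ should then let us transfer acceptance of this suffix back to the whole input: adding a finite prefix to a word of $W$ keeps it in $W$, and in a deterministic automaton recognising a prefix-increasing objective every reachable state accepts a superset of $W$, so the finite prefix read before the tail can be prepended without leaving $W$.

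The main obstacle is precisely this last transfer, i.e. establishing $L(\A^\eps)=W$ once the $\eps:1$-transitions are present. The subtlety is that the finitely many added transitions used in the prefix of an accepting run jump between states in the same memory class, so the state reached by the run need not coincide with the one reached by the deterministic run of $\A$ on the input; one must argue carefully, using only prefix-increasingness together with the preorder structure carried by the level-$2$ $\eps$-transitions of $\A_0^\eps$, that this reshuffling of the finite prefix can never certify a word outside $W$. Finally, the chromatic case is handled for free: the added $\eps:1$-transitions leave the memory state unchanged and hence respect any chromatic update, so if $\A_0^\eps$ is chromatic then so is $\A^\eps$.
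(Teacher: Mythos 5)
You follow essentially the same route as the paper's proof: shift every priority by $2$ so that priorities $0$ and $1$ become free, add $\re{\eps:1}$ transitions between all ordered pairs of states sharing a memory state, observe that $0$-freeness forces the $\liminf$ of any accepting run to be even and $\geq 2$ so that priority $1$ occurs only finitely often, and conclude via prefix-increasingness. The one structural difference is benign, and in fact slightly cleaner: the paper shifts \emph{before} applying Proposition~\ref{prop:existenceEpsComplete}, so its completion may still contain $\eps$-transitions of priority $0$ which must then be removed, whereas you shift the already-obtained completion, after which no priority $0$ or $1$ is present and no removal step is needed. The verification that the added $\eps{:}1$ transitions cannot create accepting runs over $\Sigma^*\eps^\omega$, and the chromatic remark, also match the paper.

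The problem is the step you yourself flag as the main obstacle: it is a genuine gap, and the bridge you propose points in the wrong direction. Once the tail of an accepting run is identified as a run of the shifted $\A_0^\eps$ from some state $r$ accepting the suffix $u$ of the input $w = pu$, what is needed is an \emph{upper} bound on the language accepted from $r$ --- enough to certify $u \in W$, after which prefix-increasingness prepends the finite prefix $p$ and yields $pu \in W$; this containment of residual languages in $W$ is the implicit content behind the paper's one-line conclusion. Your supporting claim, that in a deterministic automaton recognising a prefix-increasing objective every reachable state accepts a \emph{superset} of $W$, is true (it is immediate from $W \subseteq v^{-1}W$) but useless for soundness: a superset bound can never exclude that the added $\eps{:}1$ jumps create accepting runs on words outside $W$. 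To see that no argument based solely on that claim can close the gap, take $\Sigma = \{a,b\}$ and $W = \Sigma^* a \Sigma^\omega$, which satisfies the displayed prefix-increasingness condition and has memory $1$: any automaton accepting $ab^\omega$ has a state $r$, reached after the letter $a$, from which $b^\omega$ is accepted (since $\liminf$ is a tail property), and adding $q_0 \re{\eps:1} r$ then produces an accepting run on $b^\omega \notin W$. Thus the decisive containment $L(\A_0^\eps \text{ from } r) \subseteq W$ must be derived from additional structure --- for instance it holds when all states are reachable and suffixes of words of $W$ remain in $W$, as for the prefix-independent and closed objectives the paper gives as its motivating examples --- and your proposal, as written, does not supply it.
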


\begin{proof}
    First, take any automaton $\A_0$ recognising $W$.
    Then let $\A_1$ be obtained by shifting every priority from $\A_0$ by $2$.
    Clearly $\A_1$ also recognises $W$ and does not have any transition with priority $0$.
    Then apply Proposition~\ref{prop:existenceEpsComplete} to get a "$k$-blowup"  $\A$ of $\A_1$ which is "$k$-wise $\eps$-completable", and let $\A^\eps$ denote the corresponding "$\eps$-completion".
    Since $\A$ has no transition with priority $0$, the only possible such transitions in $\A^\eps$ are $\eps$-transition.
    Then remove all $\eps$-transitions with priority $0$ in $\A^\eps$, and add $\re{\eps:1}$ transitions between all pairs of states in $\A^\eps$ (in both directions).

    Clearly, the obtained automaton $\tilde \A^\eps$ is "$\eps$-complete" and has no transition with priority $0$.
    There remains to prove that it recognises $W$.
    Take an accepting run in $\tilde \A^\eps$ and observe that the priority $1$ is only seen finitely often.
    Hence from some moment on, the run coincides with a run in $\A^\eps$.
    We conclude since $W$ is "prefix-increasing".
\end{proof}

\subparagraph{Main proof: "$\eps$-completion" of the product.}
We now proceed with the proof of Theorem~\ref{thm:union}.
Using Theorem~\ref{thm:main-charac}, for $l =1,2$, we take a deterministic "$k_l$-wise $\eps$-completable" automata $\A_l$ of "index" $d_l$ recognising $W_l$, and its "$\eps$-completion" $\A_{l}^\eps$. 
For $l=2$, we assume thanks to Lemma~\ref{lem:prefix-increasing} that $\A_{2}^\eps$ does not have any transition with priority $0$.

We consider the product $\A=(\A_{1} \times \A_{2}) \casc \T$, with states $V(\A_1) \times V(\A_2) \times V(\T)$ and transitions $(q_1,q_2,\tau) \re{a:t} (q'_1,q'_2,\tau')$ if $q_1\re{a:y}q_1'$ in $\A_1$, $q_2\re{a:z}q_2'$ in $\A_2$, and $\tau \re{(y,z):t} \tau'$ in~$\T$. 
The correctness of such a construction is folklore.\footnote{$\A_1\times \A_2$ can be seen as a Muller automaton with acceptance condition the union of two parity languages. The composition with $\T$ yields a correct parity automaton, as $\T$ recognises the acceptance condition.}

\begin{claim}
    The automaton $\A$ is deterministic and recognises $W=W_{1} \cup W_{2}$.
\end{claim}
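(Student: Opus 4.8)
The plan is to establish the two assertions of the claim separately, relying on the determinism of the component automata and on Lemma~\ref{lem:language-of-T}, which already tells us that $\T$ recognises the union of two parity conditions.

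First I would argue determinism. The automata $\A_1$ and $\A_2$ are deterministic by hypothesis (we took deterministic $k_l$-wise $\eps$-completable automata recognising $W_l$), and $\T$ is deterministic by construction: from a given state $\tau$ and input letter $(y,z)$, the transition $\tau \re{(y,z):t} \tau'$ is uniquely determined. Since $\A = (\A_1 \times \A_2) \casc \T$, a transition on a letter $a \in \Sigma$ is obtained by reading the unique transition $q_1 \re{a:y} q_1'$ in $\A_1$, the unique transition $q_2 \re{a:z} q_2'$ in $\A_2$, and then the unique transition $\tau \re{(y,z):t} \tau'$ in $\T$. Composing three deterministic choices yields a single successor, so $\A$ is deterministic. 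I would also note there are no $\eps$-transitions here, since $\A_1$ and $\A_2$ are deterministic (hence $\eps$-free) and $\T$ is a genuine parity automaton over $\done \times \dtwo$.

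Next I would verify that $L(\A) = W_1 \cup W_2$. Fix an input word $w \in \Sigma^\omega$. By determinism, $w$ induces a unique run in each of $\A_1$ and $\A_2$, producing priority sequences $\pi_1 = y_0 y_1 \dots \in \done^\omega$ and $\pi_2 = z_0 z_1 \dots \in \dtwo^\omega$. The run of $\A_1$ is accepting iff $\pi_1 \in \Parity_{d_1}$, i.e. iff $w \in W_1$, and similarly the run of $\A_2$ is accepting iff $\pi_2 \in \Parity_{d_2}$, i.e. iff $w \in W_2$. Now by the definition of the cascade product, the run of $\A$ on $w$ is exactly the run of $\T$ on the word $(y_0,z_0)(y_1,z_1)\dots \in (\done \times \dtwo)^\omega$, and its acceptance is governed by the output priorities $t$ of $\T$. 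By Lemma~\ref{lem:language-of-T}, this run is accepting iff $\pi_1 \in \Parity_{d_1}$ or $\pi_2 \in \Parity_{d_2}$, i.e. iff $w \in W_1 \cup W_2$. Hence $L(\A) = W_1 \cup W_2 = W$, as claimed.

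The only genuinely nontrivial ingredient is the correctness of $\T$, but that is already packaged in Lemma~\ref{lem:language-of-T}, so the claim itself reduces to observing that the cascade product preserves determinism and that it correctly relays the pair of component priorities into $\T$. I do not expect any real obstacle here; the main point to be careful about is the folklore correctness of reading the two component priorities $(y,z)$ as the input letter of $\T$, i.e. that the $i$-th letter fed to $\T$ is precisely the pair of priorities emitted at step $i$ by $\A_1$ and $\A_2$ on $w$. This is immediate from the definition of the product $(\A_1 \times \A_2) \casc \T$ and requires no computation.
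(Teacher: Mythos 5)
Your proposal is correct and matches the paper's approach: the paper dispatches this claim as folklore (footnote: ``$\A_1\times \A_2$ can be seen as a Muller automaton with acceptance condition the union of two parity languages; the composition with $\T$ yields a correct parity automaton, as $\T$ recognises the acceptance condition''), and your argument is precisely that folklore composition spelled out, with determinism inherited from $\A_1$, $\A_2$ and $\T$, and the language equality reduced to Lemma~\ref{lem:language-of-T} via the unique runs.
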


Therefore, there only remains to show the following lemma.

\begin{lemma}\label{lem:product-with-T-epsCompl}
    The automaton $\A=(\A_{1} \times \A_{2}) \casc \T$ is "$(k_1k_2)$-wise $\eps$-completable".
\end{lemma}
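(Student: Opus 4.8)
The plan is to endow the product automaton $\A=(\A_{1} \times \A_{2}) \casc \T$ with a $(k_1 k_2)$-wise $\eps$-structure by lifting, in a coordinate-wise fashion, the $\eps$-completions $\A_1^\eps$ and $\A_2^\eps$ already available on the factors, and using the structural information stored in the state $\tau$ of $\T$ to decide, for each target output priority, which factor governs the comparison. Recall that a $(k_1k_2)$-automaton needs its states partitioned into $k_1 k_2$ chains; here the memory state of a product state $(q_1, q_2, \tau)$ will be the pair $(m_1, m_2)$ of memory states of $q_1$ in $\A_1$ and $q_2$ in $\A_2$, so that two states share a memory state exactly when they agree on both memory coordinates. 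The $\tau$-component carries no memory; it only records the interleaving order of odd priorities, and crucially it is \emph{constant per pair}, in the sense that an $\eps$-transition should preserve $\tau$.

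**First** I would fix a memory state $(m_1,m_2)$ and two states $u=(q_1,q_2,\tau)$, $u'=(q_1',q_2',\tau)$ with matching memory and matching $\tau$, together with a target even output priority $t$ of $\T$. I want to show that either $u \re{\eps:t} u'$ or $u' \re{\eps:t+1} u$ can be added while preserving the language. The key observation is that the index $i$ governing an output priority $t=2i$ or $t=2i-1$ in $\T$ corresponds to a position $\tau[i]$ in the interleaving, which is either an element of $\done$ or of $\dtwo$; this determines whether the comparison at priority $t$ should be decided by the $\A_1$-coordinate or the $\A_2$-coordinate. Concretely, for the threshold $t$ I would translate $t$ back through $\tau$ to an even priority $x_1$ in $\A_1$ (if $\tau[i]\in\doneodd$) or $x_2$ in $\A_2$ (if $\tau[i]\in\dtwoodd$), then read off the corresponding $\eps$-edge from the completion of the relevant factor: by $k_l$-wise $\eps$-completeness of $\A_l^\eps$, one of $q_l \re{\eps:x_l} q_l'$ or $q_l' \re{\eps:x_l+1} q_l$ holds, and this dictates the direction of the product edge at priority $t$.

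**The main obstacle**, which I expect to occupy most of the proof, is verifying that the resulting $\eps$-completion recognises exactly $W=W_1\cup W_2$ and adds no spurious accepting runs. The difficulty is that an accepting run in the completion may alternate between ``genuine'' transitions and the added $\eps$-transitions, and one must check that the $\liminf$ of output priorities computed through $\T$ still correctly reflects membership in $W_1\cup W_2$. I would argue this by showing that every added $\eps$-edge at priority $t$ is \emph{sound}: an $\eps$-edge $u\re{\eps:t}u'$ with $t=2i$ even corresponds, via $\tau$, to a situation where the governing factor sees a good (even) priority and the $\tau$-order is preserved or improved, exactly mirroring the definition of $\re x$ in $\Sgraph$ and in the factor completions; dually for $t+1$ odd. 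This is precisely where $0$-freeness of $\A_2^\eps$ (Lemma~\ref{lem:prefix-increasing}) is needed: it ensures that priority-$0$ comparisons are never forced through the second coordinate, so that the insertion operation on $\tau$ (which pushes a $\dtwo$-element leftward on odd $\done$-priorities) interacts correctly with the completion and never creates an accepting cycle of $\eps$-edges.

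**To finish**, I would check that the construction respects $\eps$-transitivity and well-foundedness — since $\tau$ is fixed along $\eps$-edges and each factor's $\eps$-order is well-founded on each chain, the product $\eps$-relation restricted to a fixed memory state and fixed $\tau$ inherits the total preorder structure of the lexicographically more significant of the two factor-orders — and then confirm that the total number of chains is $k_1 k_2$, matching the claimed memory bound. If the chromatic case is required, I would additionally note that since each $\A_l^\eps$ is chromatic and $\tau$ is updated deterministically from the input letter via $\T$, the product update depends only on the colour seen, so the completion is chromatic as well. The routine but necessary bookkeeping lies in translating priorities back and forth through the index maps $\indtau{\cdot}$, which I would handle by a short case analysis on the parity of $t$ and on whether $\tau[i]\in\done$ or $\tau[i]\in\dtwo$.
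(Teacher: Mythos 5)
There is a genuine gap, and it lies in the sentence where you declare that the $\tau$-component is \emph{constant per pair}, i.e., that $\eps$-transitions preserve $\tau$, so that you only ever compare states $u=(q_1,q_2,\tau)$ and $u'=(q_1',q_2',\tau)$ with the \emph{same} $\tau$. The memory state of a product state is, as you say, the pair $(m_1,m_2)$; consequently $(k_1k_2)$-wise $\eps$-completeness demands an $\eps$-edge (in one direction or the other, at every even threshold) between \emph{every} pair of states sharing $(m_1,m_2)$ --- including pairs with $\tau\neq\tau'$. Under your convention such pairs receive no $\eps$-edges at all, so the completed automaton is simply not $(k_1k_2)$-wise $\eps$-complete; the only rescue for a constant-$\tau$ construction is to fold $\tau$ into the memory, which yields $(k_1k_2\cdot|V(\T)|)$-wise completeness and a uselessly weaker bound. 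This is exactly why the paper introduces the relaxed non-deterministic automaton $\overline{\T}$, whose transitions $\tau\re{(y,z):t'}\tau'$ may \emph{change} $\tau$: one can reconfigure the entries of $\tau$ beyond index $i$ at the cost of an odd priority $2i-1$, or output a priority in $\{2i'-1,2i'\}$ along a strict lexicographic decrease $\tau'<_{i'}\tau$. The $\eps$-transitions of the completion $\A^\eps$ are then routed through $\overline{\T}$ (with the weakening $t'\cleq t$), and correctness rests on $L(\overline{\T})=L(\T)$, proved by a stabilisation argument on the prefixes $\tau_n[..\imin]$. Without some analogue of $\overline{\T}$, your plan cannot produce the required edges.

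Two further misplacements are worth flagging. First, you predict the main difficulty is re-verifying that the completion recognises $W$; in the paper this is the routine part (soundness of $\overline{\T}$ plus folklore closure of the composition, since $t'\cleq t$ only outputs less favorable priorities), and the delicate part is $\eps$-completeness itself: for each pair of states one exhibits a single even breakpoint $\xbreak$ with $r\rer{\eps:\xbreak-1}r'$ and an edge at priority $\xbreak$ in one direction, via a joint case analysis on the breakpoint priorities $\xbreak_1,\xbreak_2$ of the factor pairs against the agreement index $i_T$ of $\tau$ and $\tau'$ --- not a per-threshold translation through $\indtau{\cdot}$ delegating each priority to a single governing factor, which does not obviously yield a coherent family of edges even in the equal-$\tau$ case. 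Second, the role of $0$-freeness of $\A_2^\eps$ is more concrete than your description suggests: it guarantees $\re{\eps:1}$-edges between all pairs within a part of $\A_2^\eps$, hence breakpoint priorities $\geq 2$ in the second factor, which is precisely what makes the odd witnesses $y_2=\xbreak_2-1$ available in the case analysis. Your chromaticity remark, on the other hand, transfers correctly to the paper's construction.
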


The "$\eps$-completion" of $\A$ will be a variant of a product of the form $\A_1^\eps \times \A_2^\eps \casc \overline \T$, where $\overline \T$ is a non-deterministic extension of $\T$ with more transitions, but which still recognises the same language.\\

\textit{The automaton $\overline \T$.}
Intuitively, we obtain $\overline \T$ by allowing to reconfigure the elements of index $>i$ in a state $\tau$ by paying an odd priority $2i-1$, as well as allowing to move elements of $\done$ to the left. We precise this idea next.

We order the states of $\T$ lexicographically, where we assume that $y < z$ for $y\in \done$ and $z\in \dtwo$.
Formally, we let $\tau < \tau'$ if for the first position $j$ where $\tau$ and $\tau'$ differ, $\tau[j]\in \done$ (and therefore necessarily $\tau'[j]\in \dtwo$).
We let $\tau[..i]$ be the prefix of $\tau$ up to (and including) $\tau[i]$.
We write $\tau <_i \tau'$ if $\tau[..i] < \tau'[..i]$. 


Let $\tau \re{(y,z):t}$ be a transition in $\T$ as above, and $i_0 = \min \{ \indtau{y}, \indtau{z}\}$ be the index determining $t$ (i.e. $t\in \{ 2i_0-1, 2i\}$). The automaton $\overline \T$ contains a transition $\tau \re{(y,z):t'} \tau'$ if:

\begin{enumerate}
    \item $t' = t$ is odd, and $\tau' \leq_{i_0-1} \tau$; or
    \item $t' = t$ is even, and $\tau' \leq_{i_0} \tau$; or
    \item $t' \in \{2i'-1, 2i'\}$ for some $i' \leq i_0$ and $\tau' <_{i'} \tau$. (Note that, if $t$ is odd, this includes all (possibly even) $t'\leq t+1$.)
\end{enumerate}

In words, we are allowed to output a small (i.e. important) priority when following a strict decrease on sufficiently small components in $\tau$. 
Note that transitions in $\T$ also belong to $\overline \T$ thanks to the rules (1) and (2).


\begin{lemma}\label{lem:overline-T}
    The automaton $\overline \T$ recognises the same language as $\T$.
\end{lemma}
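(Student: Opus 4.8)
The inclusion $L(\T) \subseteq L(\overline \T)$ is immediate: every transition of $\T$ is a transition of $\overline \T$ (via rules (1) and (2)), so every accepting run of $\T$ is also an accepting run of $\overline \T$; combined with Lemma~\ref{lem:language-of-T} this yields $L = L(\T) \subseteq L(\overline \T)$. The content lies in the reverse inclusion $L(\overline \T) \subseteq L$, which I would prove by contraposition. Assume $w \notin L$, that is, $\ymin := \liminf \pi_1(w)$ and $\zmin := \liminf \pi_2(w)$ are both odd; I aim to show that every run of $\overline \T$ over $w$ is rejecting. Fix such a run $\tau_0 \re{(y_1,z_1):t_1} \tau_1 \re{(y_2,z_2):t_2} \cdots$ and suppose towards a contradiction that $\liminf_n t_n = 2j$ is even.

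The heart of the argument is a progress measure on the lexicographic prefixes. Writing $i_0$ for the index determining the canonical priority of the transition read at step $n$, each of the three rules defining $\overline \T$ forces $\tau_n \leq_{i_0 - 1} \tau_{n-1}$; this uses the elementary fact that $\sigma \leq_{i'} \sigma'$ implies $\sigma \leq_i \sigma'$ for $i \leq i'$, and that a strict $<_{i'}$-decrease with $i' \leq i_0$ propagates to index $i_0 - 1$. Now every transition with $t_n > 2j$ has $i_0 \geq j+1$, hence $\tau_n \leq_j \tau_{n-1}$; and every transition with $t_n = 2j$ either has $i_0 = j$ via rule (2), giving $\tau_n \leq_j \tau_{n-1}$, or is a strict $<_j$-decrease via rule (3). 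Thus, on the tail where $t_n \geq 2j$, the prefix $\tau_n[..j]$ is non-increasing in the finite total order $\leq_j$, hence eventually constant, say equal to $V^{\ast}$ from some step $N$ on.

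Once the $j$-prefix is frozen, the strict-decrease rule (3) can no longer produce $2j$, so every occurrence of priority $2j$ in the tail comes from rule (2): the canonical transition reads an even priority $p$ that is determining and registers at the frozen position $j$, i.e.\ $V^{\ast}[j] = p-1$. Assume $V^{\ast}[j] \in \done$, the case $V^{\ast}[j] \in \dtwo$ being symmetric; then $p = 2a$ is an even $\done$-priority read infinitely often, so $\ymin = \liminf \pi_1(w) \leq 2a$, and since $\ymin$ is odd, $\ymin \leq 2a - 1 = V^{\ast}[j]$. Here I invoke the defining constraint that every state is an interleaving of the two increasing sequences, so $\done$-priorities always appear in increasing order within a state: as $\ymin \leq V^{\ast}[j]$ are both in $\done$, we get $\indextau{\tau_n}{\ymin} \leq \indextau{\tau_n}{V^{\ast}[j]} = j$ for all $n \geq N$. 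But $\ymin$ is read infinitely often, and reading a priority registering at an index $\leq j$ forces the output to be $< 2j$ (the determining index is $\leq j$, and with the frozen prefix rule (3) cannot escalate it back up to $2j$). This contradicts $t_n \geq 2j$ on the tail, and the symmetric case for $V^{\ast}[j] \in \dtwo$ uses $\zmin$ identically. Hence $\liminf_n t_n$ is odd and the run rejects, which is what we wanted.

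The routine parts are the per-rule verifications that each transition gives $\tau_n \leq_{i_0-1}\tau_{n-1}$ and that reading a low-registering priority outputs a priority $< 2j$. The step I expect to be the main obstacle — and which I would isolate as a lemma — is combining the stabilisation of the $j$-prefix with the structural fact that states keep each component's priorities in increasing order: it is precisely this that pins the even determining priority at level $j$ to the $\liminf$ of one component, thereby contradicting $w \notin L$. Everything else is bookkeeping around the three transition rules of $\overline \T$.
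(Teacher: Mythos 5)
Your proof is correct and takes essentially the same route as the paper's: the paper argues directly that along an accepting run of $\overline{\T}$ the prefix $\tau_n[..\imin]$ (with $\imin=\tmin/2$) eventually freezes and the frozen element at position $\imin$ pins the $\liminf$ of the corresponding input component to an even value, which is exactly your argument read in contrapositive form, including the use of the interleaving structure to compare indices. If anything, your uniform freezing of the whole $j$-prefix (non-increasing in the finite order $\leq_j$ along the tail, hence eventually constant) is slightly more careful than the paper's two-step stabilisation, which justifies constancy of $\tau_n[\imin]$ by appealing only to rule (1) and silently passes over the possible one-time strict $<_{\imin}$ flip via rule (3).
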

\begin{proof}
    It is clear that $L(\T) \subseteq L(\overline \T)$. We show the other inclusion.

    Consider
    \[\tau_0 \re{(y_1,z_1):t_1} \tau_1 \re{(y_2,z_2):t_2} \dots , \quad \text{ an accepting run in } \overline \T.\]

    Let $\tmin=\liminf t_1t_2\dots$, which is even.
    Let $\imin = \tmin/2$ be the index responsible for producing priority $\tmin$.
    Let $n_0$ be such that $t_n \geq \tmin$ for all $n \geq n_0$.
    Observe that for all $n \geq n_0$, we have $\tau_n \geq_{(\imin-1)} \tau_{n+1}$, and therefore there is $n_1 \geq n_0$ such that the prefix $\tau_n[..\imin-1]$ is the same for all $n \geq n_1$.
    In fact, the prefix $\tau_n[..\imin]$ must be constant too, as we can only modify $\tau_n[\imin]$ using rule (1.) and that would output priority $\tmin -1$.
    Assume w.l.o.g. that $\tau_n[\imin] = y\in \doneodd$.
    Now, for each $n \geq n_1$ it must be that $y_i\geq y+1$ and it must be that $y_i = y + 1$ each time that priority $\tmin$ is produced. 
    Therefore, $\liminf y_1y_2\dots = y+1$ is even.     
\end{proof}

\textit{The "$\eps$-completion".} 
We define $\A^\eps$ as a version of the "cascade product" of $\A_{1}^\eps \times \A_{2}^\eps$ with~$\T$, in which $\eps$-transitions are also allowed to use the transitions of $\overline \T$.
We let $\cleq$ denote the preference ordering over priorities, given by
$ 1 \cleq 3 \cleq \dots\cleq d-1 \cleq d \cleq  \dots 2 \cleq 0 \text{ ($d$ even)}$.
The transitions in $\A^\eps$ are defined as follows:
\begin{itemize}
    \item For $a\in \SS$:
    $(q_1,q_2,\tau) \re{a:t} (q'_1,q'_2,\tau')$ if this transition appears in $(\A_{1} \times \A_{2})\casc \T$. 
    \item $(q_1,q_2,\tau) \re{\eps:t} (q'_1,q'_2,\tau')$ if $q_1\re{\eps:y}q_1'$ in $\A_1^\eps$, $q_2\re{\eps:z}q_2'$ in $\A_2^\eps$, and $\tau \re{(y,z):t'} \tau'$ in $\overline \T$ with $t' \cleq t$.
\end{itemize}

Note that the condition $t' \cleq t$ simply allows to output a less favorable priority, so it does not create extra accepting runs.
By definition, $\A^\eps$ has been obtained by adding $\eps$-transitions to $\A$.
It is a folklore result that composition of non-deterministic automata also preserves the language recognised, so this construction is correct.
\begin{claim}
    The automaton $\A^\eps$ recognises $W$.
\end{claim}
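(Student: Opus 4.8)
The plan is to prove the two inclusions $W \subseteq L(\A^\eps)$ and $L(\A^\eps) \subseteq W$ separately. The first is immediate: every accepting run of $\A$ is also a run of $\A^\eps$, since the $\SS$-transitions of $(\A_{1} \times \A_{2}) \casc \T$ are kept verbatim in $\A^\eps$ with the same priorities; hence $W = L(\A) \subseteq L(\A^\eps)$, because adding $\eps$-transitions, which are erased by $\pi_\Sigma$, can only enlarge the language. All the content lies in the converse inclusion.

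For $L(\A^\eps) \subseteq W$, I would start from an accepting run $(q_1^0,q_2^0,\tau_0) \re{c_0:t_0} (q_1^1,q_2^1,\tau_1) \re{c_1:t_1} \cdots$ of $\A^\eps$ on a word $w$, and set $a = \pi_\Sigma(w)$. By the definition of the transitions of $\A^\eps$, each step decomposes into a transition $q_1^i \re{c_i:y_i} q_1^{i+1}$ of $\A_1^\eps$, a transition $q_2^i \re{c_i:z_i} q_2^{i+1}$ of $\A_2^\eps$, and a transition $\tau_i \re{(y_i,z_i):\tilde t_i} \tau_{i+1}$ of $\overline{\T}$ with $\tilde t_i \cleq t_i$ (and $\tilde t_i = t_i$ on the $\SS$-steps). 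This yields genuine runs $\rho_1$ in $\A_1^\eps$ and $\rho_2$ in $\A_2^\eps$ reading $w$, together with a run $\bar\rho = \tau_0 \re{(y_0,z_0):\tilde t_0} \tau_1 \cdots$ of $\overline{\T}$ on $(y_i,z_i)_i$.

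The heart of the argument is to show $(y_i,z_i)_i \in L(\overline{\T})$; by Lemma~\ref{lem:overline-T} this language is $L$, so this gives $\pi_1(w) = (y_i)_i \in \Parity_{d_1}$ or $\pi_2(w) = (z_i)_i \in \Parity_{d_2}$. In the first case $\rho_1$ is an accepting run of $\A_1^\eps$ on $w$, whence $a \in L(\A_1^\eps) = W_1$; symmetrically the second case gives $a \in W_2$. Either way $a \in W$, and the accepting component run simultaneously certifies that $a$ is infinite, so $\A^\eps$ admits no accepting run labelled in $\SS^* \eps^\oo$. To establish $(y_i,z_i)_i \in L$ I would adapt the stabilisation argument from the proof of Lemma~\ref{lem:overline-T}: letting $\tmin = \liminf_i t_i$ (which is even since the $\A^\eps$ run is accepting) and $\imin = \tmin/2$, I would analyse how the prefixes $\tau_i[..\imin]$ evolve, show they stabilise, and read off from the stabilised entry whether it lies in $\doneodd$ or $\dtwoodd$, thereby identifying the component responsible for producing $\tmin$ infinitely often and forcing its projection into the corresponding $\Parity$ set.

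The main obstacle is precisely the priority slack $\tilde t_i \cleq t_i$ on the $\eps$-steps: the accepting priority read off the $\A^\eps$ run is $\tmin$, whereas the moves of the $\tau$-component are governed by the possibly different priorities $\tilde t_i$, so the stabilisation argument of Lemma~\ref{lem:overline-T} cannot be invoked verbatim. I would control this using the monotonicity of min-parity acceptance with respect to the preference order $\cleq$ (relaxing a priority upward in $\cleq$ preserves acceptance), together with well-foundedness of $\A^\eps$, which forbids runs that are ultimately made of $\eps$-transitions and thus prevents a favourable $\eps$-priority from masking the behaviour of the $\tau$-component indefinitely. The $0$-freeness of $\A_2^\eps$ provided by Lemma~\ref{lem:prefix-increasing} is what makes the asymmetric case analysis on which of the two components is responsible go through cleanly, completing the proof.
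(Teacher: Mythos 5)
Your overall route is the same as the paper's, which dispatches this claim in one breath: $\A^\eps$ extends $\A$ by $\eps$-transitions (so $W = L(\A) \subseteq L(\A^\eps)$), the slack condition only permits outputting a less favourable priority and hence creates no extra accepting runs, and the folklore correctness of composing $\A_1^\eps \times \A_2^\eps$ with $\overline \T$, together with Lemma~\ref{lem:overline-T}, gives $L(\A^\eps) \subseteq W_1 \cup W_2$. Your decomposition of an accepting run of $\A^\eps$ into runs of $\A_1^\eps$, $\A_2^\eps$ and $\overline \T$ is exactly this folklore argument spelled out, and your monotonicity observation (pointwise $\cleq$-relaxation of priorities preserves min-parity acceptance) is precisely the content of the paper's remark on the slack.

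Two corrections to your final paragraph, however. First, mind the direction of the slack: for the construction to be sound, and consistently with the paper's gloss that the condition ``allows to output a less favorable priority'' (see also the proof of Lemma~\ref{lem:A-eps-complete-union}, where $r \re{\eps:t} r'$ is used to imply $r \re{\eps:t'} r'$ for all $t' \cleq t$), the run's priority must satisfy $t_i \cleq \tilde t_i$, i.e.\ the underlying $\overline \T$-priority pointwise dominates the output. With this reading your monotonicity lemma applies at once: $\liminf_i t_i$ even forces $\liminf_i \tilde t_i$ even, so $\bar\rho$ is an accepting run of $\overline \T$ and Lemma~\ref{lem:overline-T} can be invoked verbatim as a black box --- there is no need to re-run its stabilisation argument. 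With the direction $\tilde t_i \cleq t_i$ as you wrote it, monotonicity gives only the converse implication and your argument does not close; indeed under that reading every $\eps$-step could legally output priority $0$ (the $\cleq$-maximum) and the claim itself would in general be false. Second, your appeal to well-foundedness of $\A^\eps$ is circular: that $\A^\eps$ admits no accepting run over $\Sigma^*\eps^\omega$ is part of what is being proved, and you yourself derive it correctly earlier as a consequence of the accepting component run, so it cannot also serve as a hypothesis in establishing that very inclusion. Similarly, the $0$-freeness of $\A_2^\eps$ plays no role in this claim; it is needed only for $\eps$-completeness, i.e.\ for Lemma~\ref{lem:A-eps-complete-union}. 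With these repairs your proof is correct and coincides with the paper's intended argument.
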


Therefore, we there only remains to prove the following lemma.

\begin{restatable}{lemma}{AepsCompleteUnion}\label{lem:A-eps-complete-union}
     The automaton $\A^\eps$ is $(k_{1}k_{2})$-wise $\eps$-complete. 
 \end{restatable}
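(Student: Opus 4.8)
The plan is to verify the defining condition of $(k_1k_2)$-wise $\eps$-completeness directly: for each even priority $x$ in the index of $\A$, each memory state $(m_1,m_2)$ (recall the states of $\A$ carry a memory state in $k_1 \times k_2$ coming from the two completable factors), and each ordered pair of states $(q_1,q_2,\tau)$ and $(q_1',q_2',\tau')$ sharing that memory state, I must exhibit either an $\eps$-transition with priority $x$ from the first to the second, or one with priority $x+1$ from the second to the first. Since the $\eps$-transitions of $\A^\eps$ are built from $\eps$-transitions $q_1 \re{\eps:y} q_1'$ in $\A_1^\eps$, $q_2 \re{\eps:z} q_2'$ in $\A_2^\eps$, and a transition $\tau \re{(y,z):t'} \tau'$ in $\overline\T$ with $t' \cleq t$, the strategy is to feed the three components into the three factors and read off the resulting output priority $t$, then argue that the preorder induced on $V(\A_1^\eps) \times V(\A_2^\eps) \times V(\T)$ by these $\eps$-transitions is total at each even priority level, exactly as required.

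\textbf{Key steps.} First I would fix the pair of states in the same memory state; because $\A_1^\eps$ is $k_1$-wise $\eps$-complete and $\A_2^\eps$ is $k_2$-wise $\eps$-complete, for \emph{every} even priority $x_1$ in $d_1$ the states $q_1,q_1'$ are comparable via $\re{\eps:x_1}$ / $\re{\eps:x_1+1}$, and symmetrically for $q_2,q_2'$ in $d_2$. This gives me, for each factor, a well-defined ``relative importance'' between the two states, encoded by the smallest even priority witnessing one direction. Second, I would feed these into $\overline\T$: given the priorities $y$ (from the $\A_1^\eps$ comparison) and $z$ (from the $\A_2^\eps$ comparison), the transition function of $\T$ computes $i_0 = \min\{\indtau{y}, \indtau{z}\}$ and an output priority $t$, and the enriched rules (1)--(3) of $\overline\T$ allow reaching $\tau'$ from $\tau$ (or vice versa) provided the lexicographic comparison $\tau' \leq_{i_0} \tau$ (or the strict/odd variants) holds. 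Third, the crucial bookkeeping is to confirm that the output priority $t$ produced this way has the correct parity and magnitude to match the target even priority $x$ of $\A$: I would translate between the priority $x$ in $\A$'s index $d = d_1+d_2+2$ and the index $i_0$ inside the sequence $\tau$, using the fact that $t = 2i_0$ when the dominant input priority is even and $t = 2i_0-1$ when it is odd, together with the preference ordering $\cleq$ which lets $\A^\eps$ emit any priority $t' \cleq t$.

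\textbf{The main obstacle} I anticipate is the case analysis needed to show \emph{totality} at each even level $x$ once the three components are combined — in particular, reconciling the lexicographic structure of $\overline\T$ (which orders states of $\T$ with the convention $y<z$ for $y \in \done$, $z \in \dtwo$) with the two independent $\eps$-preorders coming from $\A_1^\eps$ and $\A_2^\eps$. The delicate point is that the output priority of $\overline\T$ depends on the \emph{minimum} index $i_0$, so I must check that whichever factor is ``more important'' at $\tau$ drives the comparison, and that the rules (1)--(3) of $\overline\T$ — especially rule (3), which permits a strict decrease $\tau' <_{i'} \tau$ while outputting any $t' \in \{2i'-1, 2i'\}$ for $i' \leq i_0$ — are exactly strong enough to provide the missing direction whenever the naive $\T$-transition fails to be reversible. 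Concretely, when neither $q_1$ is strictly more important than $q_1'$ nor $q_2$ than $q_2'$ (ties within a factor), I would fall back on the $\eps:x+1$ transitions (which exist in both directions) and verify they compose to give a $\re{\eps:x+1}$ transition of $\A^\eps$ in the required direction, using that $\overline\T$ contains the identity-like transitions of $\T$ via rules (1) and (2). I expect the verification to reduce, after this analysis, to a finite check that for every even $x$ the induced relation on triples is a total preorder whose strict part is realised by priority $x$ and whose non-strict part is realised by priority $x+1$, which is precisely the ``maybe more useful view'' of $\eps$-completeness highlighted after the definition in the preliminaries.
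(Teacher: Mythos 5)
Your plan shares its skeleton with the paper's proof: summarise each factor comparison by the least even priority at which the bidirectional odd connection fails (the paper's \emph{breakpoint priority} $\xbreak_l$ of the pair $\{q_l,q_l'\}$), push comparisons through $\overline{\T}$, and reduce $\eps$-completeness of $\A^\eps$ to exhibiting a single pair-level breakpoint $\xbreak$ with $r \rer{\eps:\xbreak-1} r'$ and a one-directional $\re{\eps:\xbreak}$, using monotonicity under $\cleq$. That reduction is sound. But two ideas that actually carry the argument are missing, and executed as written your plan fails.

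First, you treat the two factors symmetrically and never invoke the standing hypothesis that $\A_2^\eps$ has no transitions of priority $0$ (Lemma~\ref{lem:prefix-increasing}, the one place prefix-increasingness of $W_2$ enters), which forces every breakpoint in $\A_2^\eps$ to be $\geq 2$ and hence always supplies a bidirectional transition $q_2 \rer{\eps:\xbreak_2-1} q_2'$ with odd $\xbreak_2-1 \geq 1$. This is essential, not bookkeeping: if both pairs could have breakpoint $0$ with \emph{opposite} orientations, say $q_1 \re{\eps:0} q_1'$ while $q_2' \re{\eps:0} q_2$, then $\A_1^\eps$ may contain no $\eps$-transition at all from $q_1'$ to $q_1$ (any such transition would close an accepting $\eps$-cycle) and $\A_2^\eps$ none from $q_2$ to $q_2'$, so $\A^\eps$ has no $\eps$-transition between $r$ and $r'$ in either direction at any priority, and completeness fails outright. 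Your fallback for the delicate case relies on $\eps{:}x{+}1$ transitions \emph{existing in both directions}, which is exactly what fails here; ties are the easy case, opposite strict dominations are the hard one, and only the asymmetric hypothesis resolves them. A symmetric argument cannot exist: by the remark after Theorem~\ref{thm:union} (with $W_1 = aa(a+b)^\omega$ and $W_2 = bb(a+b)^\omega$) the statement it would establish is false.

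Second, the inputs you feed to $\overline{\T}$ are wrong in the main case. The pair $(\tau,\tau')$ must itself be compared lexicographically, and the paper's case split pivots on the divergence index $i_T$ (the largest $i$ with $\tau[..i]=\tau'[..i]$) against the positions $\indtau{\xbreak_1},\indtau{\xbreak_2}$. When $i_T$ lies below both, the pair breakpoint is $\xbreak = 2i_T+2$, and producing the required $r \rer{\eps:2i_T+1} r'$ forces, via rule (1) of $\overline{\T}$, an input priority sitting \emph{exactly} at position $i_T+1$: one must take $y_1 = \tau[i_T+1]$ (and $y_1' = \tau'[i_T+1]$ for the reverse direction) --- priorities read off the $\T$-states, legitimate only because the case hypothesis $i_T+1 \leq \indtau{\xbreak_1}$ gives $y_1 < \xbreak_1$ and hence $q_1 \rer{\eps:y_1} q_1'$ in the factor, paired with $y_2 = \xbreak_2-1$ from the first point. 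If instead you feed in the factor-comparison priorities $\xbreak_1-1,\xbreak_2-1$, the relevant index is $i_0 = \min(\indtau{\xbreak_1},\indtau{\xbreak_2}) > i_T+1$; since $\tau$ and $\tau'$ already differ at position $i_T+1$, rules (1)--(2) then fire in at most one direction, rule (3) only along the strict lexicographic decrease, and the $\cleq$-degradation cannot produce small even outputs, so the lexicographically smaller state gets no transition towards the larger one and completeness fails at all levels $x \leq 2i_T$. In short, the two constructions the paper makes explicit --- the asymmetric use of $\xbreak_2 \geq 2$, and witness inputs chosen from the $\T$-state at the divergence position rather than from the factor comparisons --- are precisely the content of the proof, and both are absent from your proposal.
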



First, we need a few remarks on the structure of "$\eps$-complete" automata.
We write $q \rer{\eps:x+1} q'$ to denote the conjunction of $q \re{\eps:x+1} q' $ and $q' \re{\eps:x+1} q$.
Given two states $q,q'$ in the same part of a "$k$-wise $\eps$-complete" automaton, we call breakpoint priority of $q$ and $q'$ the least even $\xbreak$ such that $q \rer{\eps:\xbreak+1} q'$ does not hold.
Note that this is a property of the unordered pair $\{q,q'\}$.
Observe also that by the definition of $\eps$-completeness, we have either $q \re{\eps:\xbreak} q'$ or $q' \re{\eps:\xbreak} q$. 
Moreover, assuming that $q \re{\eps:\xbreak} q'$, we also get that there can be no $q' \re{\eps:x} q$, for even $x$, otherwise we would accept some run labelled by $\Sigma^* \eps^\omega$; therefore for even $x \geq \xbreak$ we also have $q \re{\eps:x} q'$.
To sum up, if $\xbreak$ is the breakpoint priority of $\{q,q'\}$ and $q \re{\eps:\xbreak} q'$, then: 
\begin{itemize}
    \item $q\rer{\eps:y} q'$ for all odd $y<\xbreak$;
    \item $q\re{\eps:x} q'$ for all $x\geq \xbreak$; and
    \item there is no transition $q'\re{\eps:x} q$ for even $x$.
\end{itemize}
Finally, observe that in $\A_2^\eps$, since there are no transitions with priority $0$ (and therefore $\re{\eps:1}$ connects every ordered pair of states), breakpoint priorities are always $\geq 2$.

We are now ready to prove Lemma~\ref{lem:A-eps-complete-union}

\begin{proof}[Proof of Lemma~\ref{lem:A-eps-complete-union}]
    First observe that the $k_1$ parts of $\A_1^\eps$ and the $k_2$ parts of $\A_2^\eps$ naturally induce a partition of the states of $\A^\eps$ into $k_1k_2$ parts.
    Let $r=(q_{1},q_{2},\tau)$ and $r'=(q'_{1}, q'_{2},\tau')$ be two states in the same part of $\A^\eps$, that is, $q_{l}$ and $q'_{l}$ are in the same part in $\A_{l}^\eps$, for $l=1,2$.
    We will show that for some even output priority $\xbreak$, it holds that:
    \begin{enumerate}[1.]
        \item\label{item:small-odd} 
        $r \rer{\eps:\xbreak-1} r'$; and
        \item\label{item:large-even} 
        either $r \re{\eps:\xbreak} r'$ or $r' \re{\eps:\xbreak} r$.
    \end{enumerate}
    Note that since $r\re{\eps:t} r'$ in $\A^\eps$ implies $r \re{\eps:t'} r'$ for all $t'\cleq t$, the two points above imply that for every even $x < \xbreak$ we have $r \rer{\eps:x+1} r'$ and for every even $x \geq \xbreak$, 
    either $r \re{\eps:\xbreak} r'$ or $r' \re{\eps:\xbreak} r$.
    Therefore, this will prove that $\A^\eps$ is $(k_{1}k_{2})$-wise $\eps$-complete.

    Let $\xbreak_{1}$ and $\xbreak_{2}$ denote the breakpoint priorities of $\{q_{1},q'_{1}\}$, $\{q_{2}, q'_{2}\}$ in $\A_1^\eps$ and $\A_2^\eps$, respectively (even and $\geq 2$).
    Let $i_T$ be the largest index such that $\tau[..i_T] = \tau'[..i_T]$, with $i_T=0$ if $\tau[1]\neq \tau'[1]$.
    We distinguish two cases, depending on whether some $\xbreak_l-1$ appears in  $\tau[..i_T]$.

    \begin{description}
        \item[a) $i_T < \indtau{\xbreak_1}, \indtau{\xbreak_2}$.]
        Note that, in particular, $0<\xbreak_{1}, \xbreak_2$. 
        
        We show that in this case, we can set $\xbreak=2i_T+2$. We prove the two points above:
        \begin{itemize}
            \item[\ref{item:small-odd}.] We will find odd priorities $y_{1}$ and $y_{2}$ such that
            \[
            \underbrace{q_{1} \re{\eps:y_{1}} q'_{1}}_{\tin \A_{1}}\tand  \underbrace{q_{2} \re{\eps:y_{2}} q'_{2}}_{\tin \A_{2}} \tand \underbrace{\tau \re{(y_{1}, y_{2}):\xbreak-1} \tau'}_{\tin \overline \T},
            \]
            which gives the wanted result when applied symmetrically in the other direction.
            Consider the element $\tau[i_T+1]$ (odd), and assume w.l.o.g. that it belongs to $\done$. We let $y_1 = \tau[i_T+1]$.
            Note that $1\leq y_1< \xbreak_1$ , as $i_T+1\leq \indtau{\xbreak_1}$, and therefore we have $q_1 \rer{\eps:y_1} q'_1$. 
            We let $y_2 = \xbreak_2 - 1$;
            by definition of $\xbreak_2$ we have $q_2 \rer{\eps:y_2} q'_2$.
            As $\indextau{\tau}{y_2} = \indextau{\tau}{\xbreak_2} > \indextau{\tau}{\xbreak_1} $ we have the third wanted transition $\tau \re{(y_1,y_2):2i_T + 1} \tau'$ in $\overline \T$, as wanted.
            By flipping $\tau$ and $\tau'$ and applying the same reasoning, we get the transition $r' \re{\eps:\xbreak-1} r$, as required (note that we also have $i_T < \indextau{\tau'}{\xbreak_1}, \indextau{\tau'}{\xbreak_2}$ by definition of $i_T$).
            
            \item[\ref{item:large-even}.] We assume w.l.o.g. that $\tau' < \tau$.
            By definition of $i_T$, we have that $\tau' <_{i_T+1} \tau$. 
            Let $z = \tau[i_T+1]$ (which belong to $\dtwoodd$ if $\tau' <_{i_T+1} \tau$).
            As $z\leq \xbreak_2$, we have $q_2 \re{\eps:z} q'_2$. Also, $q_1 \re{\eps:\xbreak_1-1} q'_1$.
            Using point (3) of the definition of $\overline \T$, we have $\tau \re{(\xbreak_1-1,z):2i_T+2} \tau'$.
            We conclude that $r \re{\eps:\xbreak} r'$.
        \end{itemize}
        \item[ b) Either $\indtau{\xbreak_1}\leq i_T$ or $\indtau{\xbreak_2}\leq i_T$.]
        We assume w.l.o.g. that $\indtau{\xbreak_1} < \indtau{\xbreak_2}$.
        We let $\xbreak = 2 \indtau{\xbreak_1}$; note that $\tau \re {(\xbreak_1,\xbreak_2-1):\xbreak}$ and  $\tau \re {(\xbreak_1-1,\xbreak_2-1):\xbreak-1}$ in $\T$.
        We verify the two cases highlighted above.
        \begin{itemize}
            \item[\ref{item:small-odd}.]
            We have $q_l \rer{\eps:\xbreak_l-1} q_l$ for $l=1,2$.
            As $\tau[..i_T] = \tau'[..i_T]$, thanks to rule (1) in the definition of $\overline \T$, we have $\tau \rer{(\xbreak_1-1,\xbreak_2-1):\xbreak-1} \tau'$, and so we get $r \rer{\eps:\xbreak-1} r'$, as required.
            \item[\ref{item:large-even}.] We have one of the transitions $q_1 \re{\eps:\xbreak_1} q_1'$ or $q_1' \re{\eps:\xbreak_1} q_1$; assume we are in the first case. Since $q_2 \rer{\eps: \xbreak_2-1} q'_2$, we also have $\tau \re{(\xbreak_1,\xbreak_2-1):x} \tau'$, so we conclude that $r \re{\eps:x} r'$.\qedhere
        \end{itemize}
    \end{description}
\end{proof}

This concludes the proof of Theorem~\ref{thm:union}.

\section{Conclusions and open questions}

We characterised objectives in $\BCSigma$ with "memory" (or "chromatic memory") $\leq k$ as those recognised by a well-identified class of automata.
In particular, this gives the first known characterisation of ("chromatic@@mem") "memory" for "$\omega$-regular" objectives, and proves that it is decidable (in fact even in $\NP$) to compute it.
We also showed that for "$\omega$-regular objectives", "memory" (and also "chromatic memory") coincides over finite or arbitrary game graphs.
Finally, we settled (a strengthening of) Kopczyński's conjecture for $\BCSigma$ objectives.
We now discuss some directions for future work.

\subparagraph*{Exact complexity of computation of memory.} We established that computing the ("chromatic@@mem") "memory" of an "$\omega$-regular" "objective" is in $\NP$.
In fact, computing the "chromatic memory" is $\NP$-hard already for simple classes of objectives, such as Muller~\cite{Casares22Chromatic} or safety ones~\cite{BFRV23Regular}.
However, no such hardness results are known for non-chromatic memory.

\begin{question}\label{quest:ptime}
	Given a deterministic parity automaton $\A$ and a number $k$, can we decide whether the "memory" of $L(\A)$ is $\leq k$ in polynomial time?
\end{question}

This question is open already for the simpler case of regular open objectives (that is, those recognised by reachability automata).
A related question is whether one can improve on the triply exponential bound that we establish for the size of games witnessing "memory"~$>k$ (see Proposition~\ref{prop:existenceEpsComplete-finitary}).

\subparagraph{Assymetric 1-to-2-player lifts.} A celebrated result of Gimbert and Zielonka~\cite{GZ05} states that if for an "objective" $W$ both players can play optimally using positional strategies in finite games where all vertices belong to one player, then $W$ is bipositonal over finite games. This result has been extended in two orthogonal directions: to objectives where both players require finite chromatic memory~\cite{BRORV22,BRV23} (symmetric lift for memory), and to "$\omega$-regular" objectives where Eve can play positionally in $1$-player games~\cite{CO24Positional} (asymmetric lift for positionality).
In this work, we have not provided an asymmetric lift for memory, as in most cases no such result can hold.
For  $\BCSigma$ objectives, it is known to fail already for positional objectives~\cite[Section~7]{GK22Submixing}.
For non-chromatic memory, it cannot hold for "$\omega$-regular" objectives neither, because of the example described below.

\begin{restatable}{proposition}{counterexampleLift}
\label{prop:1-2-player-counterexample}
	Let $\Sigma_n = \{1,\dots, n\}$. For every $n$, the "objective" 
	\[W_n = \{w\in \SS_n^\oo \mid w \text{ contains two different letters infinitely often}\} \] 
	 has "memory" $2$ over games where Eve controls all vertices and "memory" $n$ over arbitrary~games.
\end{restatable}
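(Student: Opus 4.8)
The plan is to treat the two regimes separately, exploiting that $W_n$ is $\omega$-regular: it is exactly the Muller "objective" over $\{1,\dots,n\}$ whose accepting sets are the subsets of size $\geq 2$, or equivalently the set of words in which the colour changes infinitely often (if at least two letters occur infinitely often they must be interleaved, forcing infinitely many changes, and conversely). I would prove the two quantitative claims as matching upper and lower bounds in each regime.

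For \textbf{arbitrary "games"} I would not reason directly but reduce to the finite case and invoke the finite-to-infinite lift (Corollary~\ref{cor:finite-to-infinite-lift}). Viewing $W_n$ as a Muller condition, its Zielonka tree consists of a single accepting root $\{1,\dots,n\}$ whose children are the $n$ singletons $\{1\},\dots,\{n\}$, each a rejecting leaf. By the characterisation of Dziembowski, Jurdziński and Walukiewicz~\cite{DJW1997memory}, this tree yields "memory" exactly $n$ over finite games (the $n$ rejecting leaves hanging below the accepting root force Eve's memory to equal their number). The lower bound $\geq n$ then transfers immediately to arbitrary games, since finite games are a special case; the upper bound $\leq n$ transfers through Corollary~\ref{cor:finite-to-infinite-lift}, using that $W_n$ is $\omega$-regular. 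Hence the "memory" over arbitrary games is exactly $n$. This part is essentially the computation already carried out in the second remark of Section~\ref{sec:union}, instantiated with a single relevant letter-type.

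The remaining and most delicate point is that \textbf{two} memory states suffice over "games" in which Eve controls every vertex, \emph{even infinite ones}. Fix such a game with initial vertex $v_0$ from which Eve wins, together with an infinite winning path $\pi$ from $v_0$; since $\pi \in W_n$, at least two colours $a \neq b$ occur infinitely often along $\pi$, and I fix such a pair. Let $Z$ be the set of vertices from which there is an infinite path seeing both $a$ and $b$ infinitely often, so that $v_0 \in Z$. The key observation is that from every vertex of $Z$ Eve can reach, in finitely many steps, an $a$-coloured edge whose target again lies in $Z$ (take the first $a$-edge along a witnessing path; the suffix after it still sees $a$ and $b$ infinitely often), and symmetrically for $b$. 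I would then use memory states $\{A,B\}$: in state $A$ Eve plays a "positional" reachability strategy towards the sources of such good $a$-edges and, upon taking one, switches to $B$; state $B$ is symmetric, using $b$. Each phase is finite and lands back in $Z$, so the induced play visits $a$ and $b$ infinitely often and lies in $W_n$, giving a winning strategy of memory $2$. For the matching lower bound it suffices to exhibit the single-vertex Eve-game with two self-loops coloured $1$ and $2$: Eve wins by alternating, whereas every positional strategy produces $1^\omega$ or $2^\omega \notin W_n$, so "memory" $\geq 2$.

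The \textbf{main obstacle} is precisely this Eve-game upper bound: one must make the two-phase attractor strategy rigorous over possibly infinite arenas (reachability strategies remain positional there, with rank given by the finite shortest-path length) and, crucially, commit once and for all to a single colour pair $(a,b)$ extracted from a winning path at the \emph{initial} vertex, since distinct winning vertices of the same game may a priori require distinct pairs. Because the "memory" of an "objective" is measured per game with a designated initial vertex, fixing $(a,b)$ from the winning path through $v_0$ is legitimate and resolves this tension.
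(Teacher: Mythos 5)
Your proof is correct, but the key part --- the upper bound of $2$ over Eve-controlled games, including infinite ones --- takes a genuinely different route from the paper's. The paper first uses prefix-independence to assume Eve wins from \emph{every} vertex, and then builds a single uniform two-state strategy with no fixed letter pair: for each vertex $v$ it takes the two smallest colours $\chi_1(v) < \chi_2(v)$ producible from $v$, together with minimal-length witnessing paths, alternates between chasing $\chi_1$ and $\chi_2$, and proves winning via monotonicity of $\chi_1$ along plays and a $\limsup$-stabilisation argument (the two letters eventually produced infinitely often are $a = \limsup_i \chi_1(v_i)$ and some colour $>a$, and may depend on the play). You instead exploit the pointedness of games: you fix one pair $(a,b)$ occurring infinitely often on a winning play from the initial vertex, define the set $Z$ of vertices admitting a path seeing both $a$ and $b$ infinitely often, and alternate two attractor phases, each terminating by strict rank decrease and ending with an $a$- (respectively $b$-) edge back into $Z$. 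Your version is more elementary (colours fixed in advance, no $\limsup$ bookkeeping), and you correctly identified and discharged the one real pitfall --- that distinct winning vertices may a priori require distinct pairs --- by noting that memory is measured per pointed game; what the paper's construction buys in exchange is a strategy winning uniformly from all vertices of the winning region. Two cosmetic points: in phase $A$, Eve must not merely reach the source of a good $a$-edge but also take that edge (your rank-$1$ case should say so explicitly), and since game graphs may carry $\eps$-edges, one should observe that your memory switches occur only on $a$- or $b$-coloured edges, so the requirement that the memory state be preserved along $\eps$-edges is respected; both are immediate in your construction. Your treatment of the remaining bounds (the single-vertex game with two differently coloured self-loops for memory $\geq 2$, and the Zielonka-tree computation for memory exactly $n$ over arbitrary games) matches the paper, which simply cites \cite{DJW1997memory} for the latter; your extra appeal to the finite-to-infinite lift (Corollary~\ref{cor:finite-to-infinite-lift}) for the upper bound $\leq n$ over infinite arenas is legitimate, though the bound of \cite{DJW1997memory} already applies to arbitrary arenas.
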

\begin{proof}
	The fact that $W_n$ has "memory" $n$ follows from~\cite{DJW1997memory}.
    We show that in every game with "objective" $W_n$ and all vertices controlled by Eve, if she can win, she has a winning strategy with "memory" $2$.
    Let $G$ be such a game.
    By prefix-independence of $W_n$, we can assume that Eve wins no matter what is the initial vertex of $G$.
    For each vertex $v\in V(G)$, let $\chi_1(v)$ be the smallest element in $\SS_n$ such that there is a path starting in $v$ that produces a colour $c$, and fix one such finite path $\pi_v^1 = v \re{ uc} v'$ of minimal length.
    Let $\chi_2(v)$ be the second smallest such element (which exists, as Eve wins the game), and fix a finite path $\pi_v^2$ of minimal length producing it.
    Note that if $v\re{u} v'$ is a path in $G$, then $\chi_1(v) \leq \chi_1(v')$.

    We define a $2$-memory strategy for Eve as follows: when in a vertex $v$ and "memory state" $1$, she will take the first edge from $\pi_v^1$.
    If this edge has colour $\chi_1(v)$, we update the "memory state" to $2$, and keep it $1$ on the contrary.
    When in the "memory state" $2$, she will take the first edge from $\pi_v^1$, and update the "memory state" to $1$ if and only if this edge has colour $\chi_2(v)$.

    We show that this strategy ensures the "objective" $W$. Let $v_0\re{c_1} v_1\re{c_2}\dots$ be an infinite play consistent with this strategy.
    Let $a = \limsup \chi_1(v_i)$.
    We claim that we produce both $a$ and a colour $>a$ infinitely often.
    Let $i_0$ be large enough so that $\chi_1(v_i) > a$ for $i\geq i_0$. Note that in this case, if $v_i\re{u} v_{i'}$ is a path that does not contain colour $\chi_2(v_i)$, then  $\chi_2(v_{i'}) =  \chi_2(v_i)$.
    If in step $i$ we are in "memory state" $1$, 
    in exactly $|\pi_{v_{i}}^1|$ steps we will produce output $\chi_1(v_{i})$ and change the "memory state" to $2$.
    Likewise, by the remark above, if we are in the "memory state" $2$, in $|\pi_{v_{i}}^2|$ steps we will produce $\chi_2(v_i) > a$ and change to the "memory state" $1$.
    This concludes the proof.
\end{proof}

In his PhD thesis, Vandenhove conjectures that an asymmetric lift for "chromatic memory" holds for "$\omega$-regular" objectives~\cite[Conjecture~9.1.2]{Vandenhove23Thesis}. This question remains open.

\begin{question}
	Is there an "$\omega$-regular" "objective"  with "chromatic memory" $k$ over games where Eve controls all vertices and "chromatic memory" $k'>k$ over arbitrary~games?
\end{question}

\subparagraph*{Further decidability results for memory.} As mentioned in the introduction, many extensions of $\oo$-automata (including deterministic $\oo$-Turing machines and unambiguous $\oo$-petri nets~\cite{FSJLS22}) compute languages that are in $\BCSigma$.
We believe that our characterisation may lead to decidability results regarding the "memory" of objectives represented by these models.

\subparagraph*{Objectives in $\Delta_0^3$.} Some of the questions answered in this work in the case of $\BCSigma$ objectives are open in full generality, for instance, the generalised Kopczy\'nski's conjecture. A reasonable next step would be to consider the class $\bdelta 3 = \bsigma 3 \cap \bpi 3$.
Objectives in $\bdelta 3$ are those recognised by max-parity automata using infinitely many priorities~\cite{Skrzypczak13Colorings}.
Our methods seem appropriate to tackle this class, however, we have been unable to extend the extraction lemma (Lemma~\ref{lem:everywhere_cofinal}) used in the proof of Section~\ref{sec:existence_automata}.


\bibliography{bib}

\newpage
\appendix

\section{Proof of the structuration theorem}\label{app:structuration}
We give a proof of Theorem~\ref{thm:structuration}.

\structuration*

The idea is to use choice arenas, which were introduced in Ohlmann's PhD thesis~\cite[Section 3.2 in Chapter 3]{Ohlmann21PhD} for positionality, and then adapted to memory in~\cite{CO24Positional}.

\begin{proof}
    Let $H$ be the game defined as follows.
    The set of vertices is $V(G) \sqcup \powne{V(G)}$, where $\powne{V(G)}$ is the set of non-empty subsets $X$ of $V(G)$, partitioned into $V_\Adam = V(G)$ and $V_\Eve=\powne{V(G)}$.
    The initial vertex is $v_0$, the one of $G$.
    Then the edges are given by taking those of $G$, and then adding $v \rer \eps X$ whenever $v \in X$.
    The objective is $W$.

    In words, when playing in $H$, Adam follows a path of his choice in $G$, except that at any point, he may choose a set $X$ containing the current vertex $v$, and allow Eve to continue the game from any vertex of her choice in $X$.
    In some sense, Adam can hide the current vertex; this is especially true if Eve is required to play with finite memory.

    Since $G$  "satisfies" $W$, Eve wins, simply by going back to the previous vertex $v$ every time Adam picks an edge $v \re \eps X$.
    (Formally, the corresponding winning strategy has vertices $V(G) \sqcup \{(v,X) \mid v \in X\}$, projection $\pi(v)=v$ and $\pi(v,X)=X$, edges $E(G) \cup \{v \rer \eps (v,X) \mid v \in X\}$, and initial vertex $v_0$.)
    Therefore by our assumption on $W$, there is a ("chromatic@@mem") winning strategy $S$ with memory $k$, i.e. $V(S)=V(H) \times k$.
    
    Now we define $G'$ by $V(G')=V(G) \times k$, initial vertex $(v_0,m_0)$, the initial vertex of $S$, and with the edges from $E(S) \cap (V(G') \times (\Sigma \cup \{\eps\}) \times V(G'))$, together with edges $(v,m) \re \eps (v',m)$ whenever there is $X \ni v,v'$ such that $(X,m) \re \eps (v',m) \tin S$.

    We prove that $G'$  "satisfies" the conclusion of the theorem, except for well-foundedness which is dealt with below.

    \begin{description}
        \item[$G'$ is a $k$-blowup of $G$.]
        This is because $S$ is a strategy, and vertices in $V(G)$ belong to Adam in $H$.
        Therefore, for each $(v,m) \in V(G')$, and each edge $v \re c v'$ in $G$, $v \re c v'$ is also an edge in $H$, thus there is $m'$ such that $(v,m) \re c (v',m')$ is an edge in $G'$.
        \item[In the chromatic case, $G'$ is chromatic.]
        This is because $S$ is chromatic, therefore by definition of $G'$, it is chromatic with the same chromatic update function.
        \item[$G'$ is "$k$-wise $\eps$-complete".] Since it is a graph, we should prove that for each $v,v'$ and each $m$, either $(v,m) \re \eps (v',m)$ or $(v',m) \re \eps (v,m)$ in $G$.
        This follows from applying the definition of $G'$ to $X=\{v,v'\}$, since either $(X,m) \re \eps v$ or $(X,m) \re \eps v'$ is an edge in $S$ (because $S$ is without dead-ends).
        \item[$G'$ satisfies $W$.] This is because $S$ is a winning strategy, and every path in $G'$ corresponds to a path in $S$, by replacing each edge $(v,m)\re \eps(v',m)$ by $(v,m) \re \eps X \re \eps (v',m)$.
    \end{description}

    There remains a slight technical difficulty, which is that $G'$ may have $\re \eps$-cycles (in fact, it even has all $\eps$-self-loops, by applying the definition to $X$ being a singleton).
    However for each "memory state" $m$, the relation $\re \eps$ has the property that for every subset of states $X$, there is $v \in X$ (which should be seen as a minimal element) such that every $v' \in X$  "satisfies" $(v',m) \re \eps (v,m)$ in $G'$.

    Therefore for each $m$, and each $X\subseteq V(G)$ such that $X \times \{m\}$ is a strongly connected component for $\re \eps$ in $G'$, we pick an arbitrary strict well-order $\re{}$ over $X$ which extends the $\eps$-edges already present in $G$ over $X$.
    This is possible because $G$ is assumed $G$ to be "well-founded" (and it is necessary so that the obtained graph remains a $k$-blowup of $G$).
    Finally, we rewire $\eps$-edges over $X \times \{m\}$ so that they correspond to $\re{}$; it is not hard to see that the above points are not broken by this construction.
\end{proof}



\end{document}